\newenvironment{policy}[1][htb]
{%
\begin{algorithm}[#1]%
}{\end{algorithm}}
\renewcommand{\P}{\mathbb{P}}
\renewcommand{\rho}{\varrho}
\newcommand\eps{\varepsilon}
\newcommand\N{\mathbb{N}}
\newcommand\E{\mathbb{E}}
\newcommand\R{\mathbb{R}}
\newcommand\Z{\mathbb{Z}}
\DeclareMathOperator*{\argmax}{arg\,max}
\newtheorem{theorem}{Theorem}[section]  
\newtheorem{assumption}[theorem]{Assumption}
\newtheorem{corollary}[theorem]{Corollary}
\newtheorem{lemma}[theorem]{Lemma}
\theoremstyle{definition}
\newtheorem{example}[theorem]{Example}
\newtheorem{remark}[theorem]{Remark}
\begin{document}
\title{Treatment recommendation with distributional targets\footnote{
We thank Quentin Badolle for excellent research assistance in an early stage of this project.
}
}



\author{
\begin{tabular}{c}
Anders Bredahl Kock \\ 
\small	University of Oxford \\
\small	CREATES, Aarhus University\\
\small	10 Manor Rd, Oxford OX1 3UQ
\\
\small	{\small	\href{mailto:anders.kock@economics.ox.ac.uk}{anders.kock@economics.ox.ac.uk}} 
\end{tabular}
\and
\begin{tabular}{c}
David Preinerstorfer\footnote{Corresponding author.} \\ 
{\small	SEPS-SEW} \\ 
{\small	University of St.~Gallen} \\
{\small	Varnb\"uelstrasse 14, 9000 St.~Gallen } \\ 
{\small	 \href{mailto:david.preinerstorfer@unisg.ch}{david.preinerstorfer@unisg.ch}}
\end{tabular}
\and
\begin{tabular}{c}
Bezirgen Veliyev \\ 
{\small	CREATES, Department of Economics} \\ 
\small Aarhus University \\ 
{\small	Fuglesangs Alle 4, 8210 Aarhus V.} \\ 
{\small	\href{mailto:bveliyev@econ.au.dk}{bveliyev@econ.au.dk}}
\end{tabular}
}

\date{March, 2022}

\maketitle

\begin{abstract}

We study the problem of a decision maker who must provide the best possible treatment recommendation based on an experiment. The desirability of the outcome distribution resulting from the policy recommendation is measured through a functional capturing the distributional characteristic that the decision maker is interested in optimizing. This could be, e.g., its inherent inequality, welfare, level of poverty or its distance to a desired outcome distribution. 
If the functional of interest is not quasi-convex or if there are constraints, the optimal recommendation may be a \emph{mixture} of treatments. This vastly expands the set of recommendations that must be considered. We characterize the difficulty of the problem by obtaining maximal expected regret lower bounds. Furthermore, we propose two (near) regret-optimal policies. The first policy is static and thus applicable irrespectively of subjects arriving sequentially or not in the course of the experimentation phase. The second policy can utilize that subjects arrive sequentially by successively eliminating inferior treatments and thus spends the sampling effort where it is most needed. 

\bigskip \noindent \textbf{JEL Classification}: C18, C21, C44.

\medskip \noindent \textbf{Keywords}: Treatment allocation, pure exploration, best treatment identification, statistical decision theory, nonparametric multi-armed bandit.
\end{abstract}

%

\pagebreak

\onehalfspacing

\section{Introduction}

We consider a decision maker who wants to run an experiment to identify the ``best" among a set of candidate treatments. While most previous work has focused on the case of targeting the distribution with the highest mean outcome, our framework lets the decision maker use a custom-built \emph{distributional characteristic}. This characteristic summarizes the targeted properties of the outcome distributions, e.g., their poverty-, welfare-, or inequality-implications. Such generality is particularly crucial for socio-economic decision making, as targeting solely the highest mean outcome may result in excessive inequality or poverty, which can be avoided by using an appropriately designed functional. For a discussion concerning the importance of studying distributional effects of a policy beyond the mean in the context of welfare reform research we refer to \cite{bitler2006mean}. 

Compared to targeting the treatment with the highest expectation, allowing for general distributional characteristics substantially increases the complexity of the decision maker's problem. This is the case in particular if the functional used to compare treatments is not quasi-convex. Then, there is no guarantee that there always exists a single best treatment that is at least as good as any \emph{mixture} of treatments. Thus, the decision maker has to determine which mixture of treatments is best. Here, a vector of mixture weights corresponds to the proportion of the population to which each treatment is rolled out after the experimentation phase. Note that the functional is generally not quasi-convex as soon as one of its ``components'' is not quasi-convex. If, for example, the policy maker seeks to target a distribution that combines a high expectation with low poverty, the resulting functional is not quasi-convex if the poverty measure used is not quasi-convex. 

A practically important aspect that we incorporate is that the decision maker may be obliged to respect \emph{constraints} concerning the type of mixture that can be implemented. For example, there could be treatments that cannot be given to the whole population, or treatments that have to be given to at least a certain proportion of the population. Another example of restrictions arises when there are groups of treatments that are incompatible in the sense that they cannot be implemented jointly, because, e.g., they depend on different types of infrastructures that are too costly to maintain simultaneously. Even if the functional is quasi-convex, such as the mean, the decision maker can be confronted with constraints of the just-mentioned type. Then also in this case mixtures of treatments have to be considered.

Concerning functional targets, related papers are~\cite{kock2017optimal},~\cite{cassel2018general} and~\cite{kpv1}. Note, however, that the results developed there consider a completely different problem of exploration-exploitation type. The regret in these papers is cumulative in the sense that i) every subject not assigned to the best treatment contributes to the regret, and ii) a loss incurred for one subject can not be compensated by future assignments. This is in contrast to the present paper, where the goal is to roll out a mixture of treatments after an experimental phase, in which the decision maker can learn which mixture is best. Mistakes in assignments during the experimental phase do not contribute to the regret, which only measures the quality of the recommended mixture. Thus, it is solely the distribution across subjects in the roll-out phase that matters. Our new objective requires new policies. Indeed, it is known already for the case of targeting the mean functional that an \emph{upper} bound on the exploration-exploitation regret of a policy implies a \emph{lower} bound on its regret in the pure exploration problem (studied in the present work) targeting the quality of a final recommendation, cf.~\cite{bubeck2009pure}. Thus, policies that work well in the exploration-exploitation settings of~\cite{kock2017optimal},~\cite{cassel2018general} and~\cite{kpv1} need not do so in our setting and new policies must be developed for our recommendation problem. This need for new policies is amplified by the fact that the just cited works do not allow the decision maker to target mixtures of treatments, nor to incorporate constraints (such as capacity  or incompatibility constraints).
Although incorporating constraints could also be interesting in the papers just cited, it is debatable whether targeting a mixture per se is even sensible in exploration-exploitation problems, where the goal is to assign every individual to the best treatment. 

In the present article we establish theoretical results to assist a decision maker who needs to give a recommendation based on an experiment. We study optimality properties of policies in terms of their maximal expected regret.  Here, regret is defined as the difference between the distributional characteristic of the optimal mixture of treatments (over the feasible set of weights considered) and the distributional characteristic of the weights recommended by the policy. 

Before we discuss how our results are related to the literature, we summarize our contributions: 
\begin{enumerate}
	\item We develop general lower bounds on the maximal expected regret of static and sequential policies. Obtaining lower bounds allows us to understand how the size of the sample used for experimentation, the number of treatments, and structural aspects of the feasible set of mixture weights affect the difficulty of the problem. The size of these lower bounds depends intrinsically on the structure of the set of feasible mixture weights.
	Thus, establishing useful lower bounds is a delicate matter for severely restricted subsets. 
	\item We study static assignment policies and investigate under which conditions such policies are optimal.
	\item We investigate optimality properties of a sequential assignment policy that attempts to eliminate ``inferior'' (groups of) treatments during the experimentation phase. This elimination strategy can help to target the sampling effort to where it is most needed, which we also illustrate in our numerical results.
\end{enumerate}

\subsection{Related literature}

Our article draws on ideas from the multi-armed bandit literature. Important early contributions include~\cite{thomp},~\cite{robbins1952some}, \cite{gittins1979bandit}, and~\cite{lai1985asymptotically}; cf.~\cite{cesa2006prediction}, ~\cite{bubeck2012regret} and \cite{lattimore2019bandit} for introductions to the subject and for further references. A large part of this literature is focused on balancing an exploration-exploitation trade-off. In contrast, our framework resembles that of a (fixed budget) ``pure exploration'' problem, in that the goal is to give the best recommendation by the end of the experimentation phase. Suboptimal assignments made during this phase do not enter the regret function. The pure-exploration literature dates back at least to~\cite{bubeck2009pure}, cf.~also \cite{audibert2010best}, \cite{karnin2013almost}, \cite{jamieson2014lil}, \cite{carpentier2016tight}, \cite{kaufmann2016complexity}, and  \cite{kasy2019adaptive}. In all articles just mentioned, the goal is to find the treatment with the highest expectation in unconstrained situations. An article that suggests a sequential batch-elimination policy in the pure exploration setting for distributional targets is~\cite{tran2014functional}. The article does not study maximal expected regret optimality properties of the policy introduced nor related performance lower bounds. Furthermore, the policy introduced only allows targeting the best individual treatment, and does not allow the decision maker to target the best mixture of treatments, which is crucial for functionals that are not quasi-convex or for constrained problems.

The vector of mixture weights the decision maker can recommend takes its values in the subset of the unit simplex described by the imposed constraints. It is tempting to interpret this problem as a functional target version of the standard continuum (or~$\mathcal{X}$-armed) bandit problem which is primarily concerned with targeting the mean (e.g., \cite{agrawal1995continuum}, \cite{kleinberg2008multi} and \cite{bubeck2011x}). This, however, would completely overlook the delicate role played by the mixtures in the problem we are dealing with: The decision maker can always assign a subject to only one out of finitely many treatments, not to a mixture of those treatments. That is, in continuum-armed bandit parlance, the sampling in the experimentation phase can exclusively be done from the vertices of the simplex (even though these vertices may not satisfy the constraints imposed on the final recommendation). For general continuum-armed bandit problems, such a sampling scheme is obviously not enough to find the optimal recommendation, as there is no hope to learn a function in the interior of the simplex from its values on the vertices. However, the problem we consider is such that learning the underlying finitely many outcome distributions of the treatments permits us to solve the recommendation problem, because the target function we are optimizing is a (nonlinear) function of the mixture of the unknown treatment outcome distributions. Also note that, equipped with the special additional structure just explained, a standard continuum-armed bandit problem targeting the mean functional would reduce to a pure-exploration problem with finitely many arms. Thus, it is precisely the nonlinearity of the target functional and the presence of constraints that creates problems of the type considered in the present article. 

Our paper is also related to the work on statistical treatment rules initiated by \cite{manski2004statistical}, and developed further by		 \cite{dehejia2005program}, \cite{stoye2009minimax}, \cite{hirano2009asymptotics}, \cite{stoye2012minimax}, \cite{bhattacharya2012inferring}, \cite{Manski10518}, \cite{kitagawa2018should}, \cite{athey2017efficient}. The central question attacked in these papers is to learn, in a non-sequential setting, whom to assign to the treatment based on observed covariates. The non-asymptotic results in this literature are focused on designing treatment rules maximizing the conditional mean. One exception is 
\cite{kitagawa2019equality}, who studied a class of rank dependent social welfare functions as targets. Although these social welfare functions cover interesting functionals such as the (extended) Gini-welfare measure, they are quasi-convex, which greatly simplifies the space of treatment rules as no mixtures need to be considered. In addition, the authors focused exclusively on static policies. However,~\cite{kitagawa2019equality} allow the treatment choice to be based on a vector of covariates. In this sense our results are complimentary and none is a subcase of the other.\footnote{We provide a heuristic discussion of how to incorporate covariates into our framework in Section \ref{sec:cov}.} 

For many classes of distributional characteristics classical inference problems, such as constructing a non-sequential asymptotically efficient estimator or test, or partial identification issues concerning distributional effects have been studied; e.g., \cite{biewen2002bootstrap}, \cite{davidson2007asymptotic}, \cite{BarrettDonald2009}, \cite{rothe2010nonparametric, rothe2012partial}, \cite{stoyspread}, \cite{dufour2017permutation}. Our goal is conceptually different, in that we analyze the specific decision problem of identifying the best treatment based on an experiment of a given size. As discussed in~\cite{Manski10518} this task is only weakly related to significance testing, which is typically focused on controlling Type I and Type II errors. From a technical perspective, we do not use asymptotic approximations, but we mainly use (finite-sample) concentration inequalities of plug-in estimators for the functional under consideration, and information-theoretic arguments to derive our maximal expected regret lower bounds.

\bigskip

The structure of the remaining article is as follows: We introduce the framework in Section~\ref{sec:framework}, then we present lower bounds for maximal expected regret in Section~\ref{sec:lowbound}. Static assignment policies are discussed in Section~\ref{sec:nonseq}, whereas sequential policies are treated in Section~\ref{sec:seq}. In Section~\ref{sec:num} we summarize the results of three numerical experiments. We provide an informal discussion of how covariates can be dealt with in Section~\ref{sec:cov}. All proofs are collected in Appendices~\ref{app:aux},~\ref{app:prfs1},~\ref{app:es},~\ref{sec:SEP}, and~\ref{sec:elimwincp}.

\section{Framework}\label{sec:framework}

In this section we formally describe the observational structure, the distributional target and the objective of the decision maker, as well as the policies we consider. 

\subsection{Observational structure}

The potential outcome of assigning subject~$t$ to treatment~$i\in\mathcal{I}:=\cbr[0]{1,\hdots,K}$, $K \geq 2$, shall be denoted by~$Y_{i,t}$. This potential outcome will be interpreted as a draw from an \emph{unknown} cumulative distribution function (cdf)~$F^i$, i.e., the cdf obtained by rolling out treatment~$i$ to an infinitely large population. We shall assume that~$F^i \in \mathscr{D} \subseteq D_{cdf}([a,b])$ holds for~$i = 1, \hdots, K$, where~$a < b$ are real numbers,~$D_{cdf}([a,b])$ denotes the set of all cdfs~$F$ on~$\R$ such that~$F(a-) = 0$ and~$F(b) = 1$, and~$\mathscr{D}$ is the common ``parameter space'' of the unknown outcome distributions. As such, the set~$\mathscr{D}$ describes the assumptions one is willing to put on the unknown cdfs~$F^1, \hdots, F^K$. For example,~$\mathscr{D}$ could be a set of cdfs satisfying certain smoothness conditions. The potential outcome vector of subject~$t$ will be denoted by~$Y_t := (Y_{1,t}, \hdots, Y_{K, t})$. Furthermore, for every~$t$, we let~$G_t$ be a random variable which can be used for randomization in assigning the~$t$-th subject. We think of the \emph{randomization measure}, i.e., the distribution of~$G_t$, as being fixed, e.g., the uniform distribution on~$[0, 1]$. 

Throughout we impose the following assumption.

\begin{assumption}\label{as:dgp}
The random vectors~$Y_t$ for~$t \in \N$ are independent and identically distributed (i.i.d.); the sequence of random variables~$G_t$ for~$t \in \N$ is i.i.d., and is independent of the sequence~$Y_t$. Furthermore,~$\mathscr{D}$ is convex (and non-empty).
\end{assumption}

Note that no assumptions are imposed concerning the dependence between the components of each random vector~$Y_t$.

\subsection{Target of the decision maker}\label{sec:objective}

Rolling out multiple treatments by randomly assigning treatment~$i$ with proportion~$\delta_i$ leads to the population (mixture) cdf~$\sum_{i = 1}^K \delta_i F^i$, which we write as~$\langle \delta, \mathbf{F} \rangle$ for~$\mathbf{F} = (F^1, \hdots, F^K)$. Here,~$\delta_i \in [0, 1]$ and~$\sum_{i = 1}^K \delta_i = 1$. In order to judge which vector of proportions~$\delta$ is ``best," we need to compare the quality of cdfs. To this end, we assume that the decision maker is interested in maximizing a certain characteristic of the cdf, measured by a pre-specified~\emph{functional}
\begin{equation}
	\mathsf{T}: D_{cdf}([a,b]) \to \R.
\end{equation}
Given~$\mathsf{T}$, the cdf~$F \in D_{cdf}([a,b])$ is considered as being better than~$G \in D_{cdf}([a,b])$ if~$\mathsf{T}(F) > \mathsf{T}(G)$. 

\begin{remark}
We think of the treatment recommendation as being rolled out to a large population, in which a proportion~$\delta_i$ of individuals is assigned to treatment~$i$. Thus, every single subject is only assigned to one treatment, but not every subject need to be assigned to the same treatment.\footnote{In particular, we do not think of the vector of proportions~$\delta$ as meaning that every individual is assigned a ``dose" of~$\delta_i$ of treatment~$i$ --- the treatments can not be divided.} The policy maker then wishes to choose a vector of proportions that maximizes~$\delta\mapsto\mathsf{T}(\langle \delta, \mathbf{F} \rangle)$, cf.~\eqref{eqn:simplmax} below.
\end{remark}

Technically, the main assumption we impose on~$\mathsf{T}$ is as follows, where for two cdfs~$F$ and~$G$ we denote~$\|F - G\|_{\infty} = \sup_{x \in \R} |F(x) - G(x)|$ (the same symbol is used for the supremum norm on~$\R^K$). 
\begin{assumption}\label{as:MAIN}
	The functional~$\mathsf{T}: D_{cdf}([a,b]) \to \R$ and the non-empty set~$\mathscr{D} \subseteq D_{cdf}([a,b])$ satisfy
	\begin{equation}\label{eqn:lipcondAS}
		|\mathsf{T}(F) - \mathsf{T}(G)|\leq C\|F- G\|_{\infty} \quad \text{ for every } \quad F \in \mathscr{D} \text{ and every } G \in D_{cdf}([a,b])
	\end{equation}
	for some~$C > 0$.
\end{assumption}

For expositional purposes it is best to keep~$\mathsf{T}$ abstract. Several examples are briefly discussed in the following remark, and in more detail in Section~\ref{sec:funex} further below.
\begin{remark}[Examples]\label{rem:ex}
Assumption~\ref{as:MAIN} is satisfied for many inequality measures (e.g., the Schutz-coefficient, Gini-index, linear inequality measures, generalized entropy family, Atkinson-indices, or Kolm-indices), welfare measures based on these inequality measures (e.g., the Gini-welfare measure), poverty measures (e.g., the headcount ratio, or Sen- and Foster-families of poverty measures), quantiles, U-functionals, L-functionals, or generalized trimmed mean functionals. The assumption was introduced in~\cite{kpv1}; cf.~their Remarks 2.2-2.4 for some discussion, and see their Section~2.1 and Appendices~E and~G for a detailed discussion of examples.
\end{remark}

\begin{remark}
In practice a policy maker often seeks to find a balance between several policy objectives. For example, one may wish to trade off a high expectation with low poverty. Such considerations are encompassed in the present framework: For functionals~$\mathsf{T}_1,\hdots,\mathsf{T}_l$ on~$D_{cdf}([a,b])$ and a function~$g$ defined on the range of these one can define~$F\mapsto \mathsf{T}(F)$ as
\begin{align*}
\mathsf{T}(F)=g\left(\mathsf{T}_1(F),\hdots,\mathsf{T}_l(F)\right).
\end{align*}
If~$\mathsf{T}_1,\hdots,\mathsf{T}_l$ satisfy Assumption~\ref{as:MAIN} and~$g$ is Lipschitz continuous, then also~$\mathsf{T}$ satisfies Assumption~\ref{as:MAIN}. Finally, note that~$\mathsf{T}$ will generally not be quasi-convex as soon as one of the building blocks~$\mathsf{T}_1,\hdots,\mathsf{T}_l$ is not quasi-convex (unless~$g$ has further special structure). For example, numerous poverty measures are not quasi-convex, cf.~Example~\ref{eqn:pov} below. 
\end{remark}
\medskip

\emph{If the decision maker knew}~$F^1, \hdots, F^K$, the goal would be to target a vector of weights~$\delta$ that maximizes~$\mathsf{T}(\langle \delta, \mathbf{F}\rangle)$. Denoting by~$\mathscr{M}_K$ the (non-empty) set of weight vectors the decision maker is restricted to work with, the task would then be to find an element of\footnote{Note that the set in Equation~\eqref{eqn:simplmax} is non-empty for every~$\mathbf{F} \in \mathscr{D} \times \hdots \times \mathscr{D}$ if, e.g.,~$\mathscr{D}$ is convex,~$\mathsf{T}$ is continuous on~$\mathscr{D}$ w.r.t.~the supremum metric induced from~$D_{cdf}([a,b])$ (that is for all~$F\in\mathscr{D}$ and all~$\eps>0$ there exists a~$\delta>0$ such that for all~$G\in\mathscr{D}$ satisfying~$||F-G||_\infty<\delta$ it holds that~$|\mathsf{T}(F)-\mathsf{T}(G)|<\eps$.) and~$\mathscr{M}_K$ is a (non-empty) closed subset of the standard simplex of~$\R^K$. This is guaranteed under Assumptions~\ref{as:dgp}, \ref{as:MAIN}, and~\ref{as:M}.}
\begin{equation}\label{eqn:simplmax}
	\argmax_{\delta \in \mathscr{M}_K} \mathsf{T}(\langle \delta, \mathbf{F} \rangle).
\end{equation}
Leaving aside for a moment that~$\mathbf{F}$ is unknown to the decision maker and that solving the optimization problem in~\eqref{eqn:simplmax} is only one part of the problem, we now introduce some notation used throughout this article and discuss in more detail frequently encountered constraints imposed upon the decision maker.

In the ``unrestricted'' case the set~$\mathscr{M}_K$ of weights coincides with the standard simplex in~$\R^K$, which we denote as~$$\mathscr{S}_K  := \{\delta = (\delta_1, \hdots, \delta_K)' \in [0, 1]^K: \delta_1 + \hdots + \delta_K = 1 \} \subseteq \R^K.$$

An important example of a constrained set is~$\mathscr{M}_K = \mathscr{E}_K := \{e_1(K), \hdots, e_K(K)\}$, where~$e_i(K)$ denotes the~$i$-th element of the standard basis in~$\R^K$. This set describes the situation in which the decision maker can only recommend a single treatment out of the~$K$ treatments. Other frequently encountered constraints are \emph{capacity constraints}, imposing upper- or lower-bound restrictions on single weights~$\delta_i$ or, more generally, imposing upper- or lower-bound restrictions on the total weight put on a combination of a subset of treatments; and \emph{similarity constraints}, leading to upper bounds on the absolute difference between weights. It is clear that capacity constraints or similarity constraints can typically be incorporated by imposing linear inequality restrictions on~$\delta$.

In the following example we shall discuss \emph{incompatibility constraints}, which are present whenever certain groups of treatments are ``incompatible," in the sense that treatments belonging to different groups cannot be rolled out jointly. This is relevant if, e.g., groups of treatments depend on different types of infrastructures, which cannot be administered together due to budget limitations. 
\begin{example}\label{ex:compa}
	Let~$m$ be an integer such that~$1 < m \leq K$, let~$\{A_1, \hdots, A_m\}$ be a partition of~$\mathcal{I}$ (i.e., the incompatible groups of treatments), and set 
	\begin{equation}\label{eqn:part}
		\mathscr{M}_K = \bigcup_{j = 1}^m \mathscr{M}_{A_j, K},~ \text{ where } \emptyset \neq \mathscr{M}_{A_j, K} \subseteq \mathscr{S}_{A_j, K} := \{\delta \in \mathscr{S}_K : \delta_k = 0 \text{ if } k \notin A_j \}.
	\end{equation}
	As a special case, note that the partition~$A_j = \{j\}$ for~$j = 1, \hdots, K$, leads to~$\mathscr{M}_K = \mathscr{E}_K$, i.e., when each treatment constitutes its own group. Observe furthermore that in addition to the incompatibility constraints described by the partition, additional constraints might be imposed through the choices of~$\mathscr{M}_{A_j, K}$.
\end{example}
The structure of~$\mathscr{M}_K$ described in Equation~\eqref{eqn:part} is already very flexible. For most results, however, we shall only need to impose the following assumption ensuring that the problem is non-trivial, and that the set in~\eqref{eqn:simplmax} is non-empty (for~$\mathsf{T}$ continuous on the convex set~$\mathscr{D}$).
\begin{assumption}\label{as:M}
	The set~$\mathscr{M}_K \subseteq \mathscr{S}_K$ contains at least two elements and is closed.	
\end{assumption}
Before we discuss policies that a decision maker may employ to learn an element of~\eqref{eqn:simplmax}, we comment on how the restricted set~$\mathscr{M}_K$ imposed upon the decision maker can sometimes be reduced without loss under certain conditions on~$\mathsf{T}$ and~$\mathscr{D}$. Furthermore, we discuss three examples of functionals.

\subsubsection{Special cases where~$\mathscr{M}_K$ can be reduced}\label{sec:qconv}

A set of potential weights~$\mathscr{M}_K$ may sometimes without loss be reduced by exploiting properties of the specific functional~$\mathsf{T}$ and parameter space~$\mathscr{D}$, both of which are known to the decision maker (while the cdfs~$\mathbf{F} = (F^1, \hdots, F^K)$ are unknown). This is the case if~$\mathsf{T}$ is such that for \emph{any}~$\mathbf{F} \in \mathscr{D} \times \hdots \times \mathscr{D}$ the set of maximizers in~\eqref{eqn:simplmax} contains an element of~$\mathscr{M}^*_K \subsetneqq \mathscr{M}_K$, where~$\mathscr{M}^*_K$ does not depend on~$\mathbf{F}$. In such a case, nothing is lost by restricting~$\mathscr{M}_K$ to~$\mathscr{M}^*_K$, i.e., by targeting~$\argmax_{\delta \in \mathscr{M}^*_K} \mathsf{T}(\langle \delta, \mathbf{F}\rangle)$. 

The leading example of such a reduction is the case where~$\mathsf{T}$ restricted to the convex set~$\mathscr{D}$ is continuous and is quasi-convex, that is
\begin{equation}\label{eqn:qc}
	\max(\mathsf{T}(F), \mathsf{T}(G)) \geq \max_{\alpha \in [0, 1]} \mathsf{T}(\alpha F + (1-\alpha) G), \text{ for every } F, G \in \mathscr{D}.
\end{equation}
In this case one may restrict~$\mathscr{M}_K$ to the set of its extreme points (i.e., the subset of elements of~$\mathscr{M}_K$ that cannot be written as a strict convex combination of two other elements of~$\mathscr{M}_K$).\footnote{This follows from quasi-convexity implying that~$\max_{\delta \in \mathscr{M}_K} \mathsf{T}(\langle \delta, \mathbf{F}\rangle) = \max_{\delta \in \mathrm{conv}(\mathscr{M}_K)} \mathsf{T}(\langle \delta, \mathbf{F}\rangle)$, for~$\mathrm{conv}(\mathscr{M}_K)$ the convex hull of~$\mathscr{M}_K$, together with~$\max_{\delta \in \mathrm{conv}(\mathscr{M}_K)} \mathsf{T}(\langle \delta, \mathbf{F}\rangle)$ being attained at an extreme point of~$\mathrm{conv}(\mathscr{M}_K)$, e.g., Theorem 4.6.3 in~\cite{MR2459665}, and thus at an extreme point of~$\mathscr{M}_K$.}
For example, if~$\mathscr{E}_K \subseteq \mathscr{M}_K$, one can then replace~$\mathscr{M}_K$ by~$\mathscr{E}_K$. That is, mixtures do not need to be taken into account and it suffices to consider the individual treatments. Note further that if~$\mathscr{E}_K \not \subseteq \mathscr{M}_K$, e.g., because one or more treatments have capacity constraints, the set of extreme points of~$\mathscr{M}_K$ has a more complicated structure, and may even be infinite or not closed, in which case one may work with its closure to enforce Assumption~\ref{as:M}.

Whether such reductions are possible depends on~$\mathsf{T}$ and~$\mathscr{D}$. We note that even if Equation~\eqref{eqn:qc} holds, not reducing~$\mathscr{M}_K$ obviously does not lead to a loss concerning the maximal attainable value of~$\mathsf{T}$.

\subsubsection{Examples of functionals}\label{sec:funex}

\begin{example}\label{ex:gini}
	A functional for which Assumption~\ref{as:MAIN} holds (with~$C = 2(b-a)$ and~$\mathscr{D} = D_{cdf}([a, b])$, cf.~the discussion after Lemma~E.9 in~\cite{kpv1}) is the Gini-welfare measure
	\begin{equation}
		F \mapsto \mu(F) - \frac{1}{2} \int \int | x_1 - x_2 | dF(x_1) dF(x_2), \text{ where } \mu(F) = \int x dF(x).
	\end{equation}
	The functional can equivalently be written as~$a + \int_{a}^b (1-F(x))^2 dx$, from which it easily follows that it is quasi-convex with~$\mathscr{D} = D_{cdf}([a,b])$.
\end{example}

\begin{example}\label{eqn:pov}
	One important poverty measure, for which Assumption~\ref{as:MAIN} holds (e.g., with~$a = 0$,~$b = 1$, ~$\mathscr{D}$ equal to the subset of cdfs in~$D_{cdf}([0, 1])$ with a density bounded from above by~$s > 0$, and~$C = 1 + s/2$, cf.~Lemma~E.10 in~\cite{kpv1}), but which is not quasi-convex in general, is the (negative) headcount ratio with poverty line equaling half the mean
	\begin{equation}\label{eqn:headcount}
		F \mapsto -F(\mu(F)/2).
	\end{equation}
	Here we multiply by~$-1$, as we aim at maximizing~$\mathsf{T}$, and one typically wants to find the treatment combination leading to the smallest fraction of ``poor'' individuals. Other poverty measures that satisfy Assumption~\ref{as:MAIN} but are not quasi-convex (in general) include the ones by~\cite{sen1976} and~\cite{foster84}.
\end{example}

\begin{example}\label{ex:distt}
	Consider finally a situation in which the decision maker has an ``ideal'' cdf~$F^* \in \mathscr{D} \subseteq D_{cdf}([a,b])$ in mind, and intends to assign the treatments in such a way that~$\langle \delta, \mathbf{F} \rangle$ resembles this ideal cdf~$F^*$ as closely as possible. Then, one could work with the functional
	\begin{equation}
		F \mapsto -\|F - F^*\|_{\infty}.
	\end{equation}
	By the inverse triangle inequality, Assumption~\ref{as:MAIN} is seen to be satisfied with~$C = 1$ (for any~$\mathscr{D} \subseteq D_{cdf}([a,b])$, and any pair of real numbers~$a < b$). Similar as in the previous example, it can be shown that~\eqref{eqn:qc} is not generally satisfied.
\end{example}

\subsection{Policies}\label{sec:policies}

The decision maker's targets~$\argmax_{\delta \in \mathscr{M}_K} \mathsf{T}(\langle \delta, \mathbf{F} \rangle)$ are not readily accessible, because the cdfs~$\mathbf{F}$ are unknown. Therefore, before rolling out a certain combination of treatments, the decision maker needs to learn~$\mathbf{F}$ in order to reach a recommendation concerning which~$\delta$ is best. The \emph{objective} of the present paper is to devise optimal policies for this problem, i.e., optimal ways of obtaining such recommendations based on the results of assigning (statically or sequentially) a small number of subjects to treatments in the course of an experimentation phase. 

Essentially, a policy is a prescription concerning two steps:  the first step concerns exploring the efficacies of the available treatments based on an experiment involving~$n$ subjects. Here, a policy needs to describe how to assign these subjects to one out of the~$K$ treatments. These assignments may be \emph{static}, e.g., based on an exogenous random draw from the set of all partitions of~$n$ subjects, or \emph{sequential}, where the assignment of the~$t$-th subject depends on the outcomes of the previously observed subjects. Note that sequential assignments can be designed so as to adaptively focus their sampling effort to where it is needed most. In the second phase, after the outcomes of all~$n$ subjects have been observed, the policy prescribes which element~$\hat{\delta} \in \mathscr{M}_K$ to recommend, aiming at a recommendation~$\hat{\delta}$ satisfying~$\mathsf{T}(\langle \hat{\delta}, \mathbf{F} \rangle) \approx \max_{\delta \in \mathscr{M}_K} \mathsf{T}(\langle \delta, \mathbf{F} \rangle)$.

Formally, a policy~$\pi$ with recommendations in~$\mathscr{M}_K$ is a triangular array~$\{ \pi_{n,t}: n \in \N, n \geq K, t = 1, \hdots, n+1\}$ of measurable functions, such that for every natural number~$n \geq K$ 
\begin{align*}
	\pi_{n,t} : &\left([a,b] \times \mathbb{R}\right)^{t-1} \times \mathbb{R} \to \mathcal{I} = \{1, \hdots, K\} \quad \text{ for } t = 1, \hdots, n, \\
	\pi_{n,n+1} : & \left([a,b] \times \mathbb{R}\right)^{n} \times \mathbb{R} \to \mathscr{M}_K.
\end{align*}
For~$t = 1, \hdots, n$ the outcome of~$\pi_{n,t}$ is to be interpreted as the assignment of subject~$t$, whereas the outcome of~$\pi_{n,n+1}$ is to be interpreted as the final recommendation. 

The input of~$\pi_{n,t}$ is denoted by~$(Z_{t-1}, G_t)$ and is recursively defined, where~$(Z_0, G_1) := (G_1)$ and one defines the~$2(t-1)$ dimensional random vector
\begin{equation}\label{eqn:zdef}
	Z_{t-1} = (Y_{\pi_{n, t-1}(Z_{t-2}, G_{t-1}), t-1}, G_{t-1}, \hdots, Y_{\pi_{n, 1}(G_{1}), 1}, G_1),
\end{equation}
i.e.,~$Z_{t-1}$ is the complete history of outcomes and randomizations observed before subject~$t$ arrives. Note that each assignment, and the final recommendation, can incorporate a draw from an exogenous random variable. 

The objective is to use the outcomes of the~$n$ subjects observed to give a recommendation~$\pi_{n,n+1}$ in~$\mathscr{M}_K$, such that~$\mathsf{T}(\langle \pi_{n,n+1}(Z_n, G_{n+1}), \mathbf{F} \rangle)$ is close to~$\max_{\delta \in \mathscr{M}_K} \mathsf{T}(\langle \delta, \mathbf{F} \rangle)$. Therefore, we will evaluate policies based on their \emph{regret}
\begin{equation}\label{eqn:reg2}
	r_n(\pi, {\mathscr{M}_K}) = r_n(\pi, \mathscr{M}_K; \mathbf{F}; Z_n, G_{n+1})  := \max_{\delta \in \mathscr{M}_K} \mathsf{T}(\langle \delta, \mathbf{F} \rangle) - \mathsf{T}(\langle \pi_{n,n+1}(Z_n, G_{n+1}), \mathbf{F} \rangle ).
\end{equation}
Note that~$r_n(\pi, {\mathscr{M}_K})$ measures the ``out-of-sample'' performance of the recommendation~$\pi_{n,n+1}(Z_n, G_{n+1})$. The regret we use in the present paper does not measure how ``well'' the~$n$ subjects used for obtaining the recommendation are assigned, it only incorporates the quality of the final recommendation. This makes sense in our framework, as~$n$ is assumed to be small compared to the whole population. In problems where~$n$ is relatively large, however, one may want to work with a different regret criterion also incorporating losses made during the experimentation phase. This problem is fundamentally different, and we refer the reader to~\cite{kpv1}, where an ``in-sample'' theory for a corresponding ``individual-specific regret criterion'' was developed. 

\section{Performance lower bounds}\label{sec:lowbound}

In this section we present lower bounds on maximal expected regret for policies as discussed in Section~\ref{sec:policies}. The lower bounds are given under weak conditions on the structure of~$\mathscr{M}_K$, allowing for rich forms of constraints. Besides delivering insights into the difficulty of the problem, the lower bounds will be instrumental in asserting that the specific policies introduced in later sections are optimal. 

To rule out trivial situations, we need an assumption which guarantees a minimal amount of variation of the functional~$\mathsf{T}$ evaluated on~$\mathscr{D}$ (if~$\mathsf{T}$ is constant on~$\mathscr{D}$ the regret of any policy is constant equal to~$0$, an uninteresting situation).
\begin{assumption}\label{as:lbcov}
	The functional~$\mathsf{T}: D_{cdf}([a,b]) \to \R$ satisfies Assumption~\ref{as:MAIN}, and~$\mathscr{D}$ contains two elements~$H_1$ and~$H_2$, such that 
	\begin{equation*}
		J_{\tau} := \tau H_1 + (1-\tau)H_2 \in \mathscr{D} \quad \text{ for every }\tau \in [0, 1],
	\end{equation*}
	and such that for some~$c_- > 0$ we have
	\begin{equation}\label{eqn:sublinASX}
		\mathsf{T}(J_{\tau_2}) - \mathsf{T}(J_{\tau_1}) \geq c_-(\tau_2-\tau_1) \quad \text{ for every }
		\tau_1 \leq \tau_2 \text{ in } [0, 1].
	\end{equation}
\end{assumption}
As already noted in~\cite{kpv2}, where this assumption was introduced, we emphasize that Equation~\eqref{eqn:sublinASX} in Assumption~\ref{as:lbcov} is satisfied if, e.g.,~$\tau \mapsto \mathsf{T}(J_{\tau})$ is continuously differentiable on~$[0,1]$ with an everywhere positive derivative. Note that Assumption~\ref{as:lbcov} is weak, as one is free to choose~$H_1$ and~$H_2$. In particular, it is typically satisfied as long as~$\mathscr{D}$ is reasonably large. For instance, as can be easily seen, this is the case for the examples given in Section~\ref{sec:funex}.

We start with a lower bound into which~$\mathscr{M}_K$ enters only via its squared diameter (with respect to the Euclidean norm~$\|\cdot\|$), which we shall abbreviate (under Assumption~\ref{as:M}) as
\begin{equation}
	\mathrm{diam}(\mathscr{M}_K) := \max\{\|\nu - \gamma\|: (\nu, \gamma) \in \mathscr{M}_K \times \mathscr{M}_K\}.
\end{equation}
Obviously,~$\mathrm{diam}(\mathscr{M}_K) = 0$ if and only if~$\mathscr{M}_K$ is a singleton, an uninteresting case which is ruled out in Assumption~\ref{as:M}. The lower bound is as follows.

\begin{theorem}\label{thm:lbn}
	Suppose Assumptions~\ref{as:dgp},~\ref{as:MAIN},~\ref{as:M} and~\ref{as:lbcov} hold. 
	Then there exists a constant~$c > 0$, independent of~$K$,~$n$ and~$\mathscr{M}_K$, such that for every policy~$\pi$ with recommendations in~$\mathscr{M}_K$, and any randomization measure~$\P_G$, it holds that
	\begin{equation}\label{eqn:rlowon}
		\sup_{F^1, \hdots, F^K \in \{J_{\tau}:\tau \in [0,1]\}} \mathbb{E}[r_n(\pi, {\mathscr{M}_K})] \geq c \times \mathrm{diam}^2(\mathscr{M}_K)/ \sqrt{n}, ~~\text{ for every } n \geq K,
	\end{equation}
	where the supremum is taken over all potential outcome vectors with independent marginals and cdfs in~$\{J_{\tau}:\tau \in [0,1]\}$.
\end{theorem}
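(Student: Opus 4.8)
I would prove the lower bound by a two-point (Le~Cam) argument that exploits the affine structure of the family $\{J_\tau:\tau\in[0,1]\}$. Restricting the adversary to product environments $\mathbf F=(J_{\tau_1},\dots,J_{\tau_K})$ with $\tau\in[0,1]^K$ (admissible since $J_\tau\in\mathscr D$ for all $\tau$ by Assumption~\ref{as:lbcov}), affineness of $\tau\mapsto J_\tau$ and $\sum_i\delta_i=1$ give $\langle\delta,\mathbf F\rangle=\sum_i\delta_i J_{\tau_i}=J_{\langle\delta,\tau\rangle}$ for every $\delta\in\mathscr S_K$; and since \eqref{eqn:sublinASX} forces $t\mapsto\mathsf T(J_t)$ to be strictly increasing, maximizing $\delta\mapsto\mathsf T(\langle\delta,\mathbf F\rangle)$ over $\mathscr M_K$ reduces to maximizing the \emph{linear} form $\delta\mapsto\langle\delta,\tau\rangle$ over $\mathscr M_K$. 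So the problem becomes a $K$-armed task of localizing the direction of $\tau$ precisely enough to identify a near-maximizer of a linear functional over $\mathscr M_K$, with the geometry entering solely through $\mathrm{diam}(\mathscr M_K)$.

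\textbf{Two hypotheses and regret separation.} Since $\mathscr M_K$ is compact (Assumption~\ref{as:M}), I would fix a diameter-achieving pair $\nu,\gamma\in\mathscr M_K$, so $\|\nu-\gamma\|=\mathrm{diam}(\mathscr M_K)=:d>0$, and set $u:=(\nu-\gamma)/d$. For $\eps\in(0,1/4]$ let $\tau^{\pm}:=\tfrac12\mathbf 1\pm\eps u$ and $\mathbf F^{\pm}:=(J_{\tau^{\pm}_1},\dots,J_{\tau^{\pm}_K})$; as $\|u\|_\infty\le\|u\|=1$, every coordinate of $\tau^{\pm}$ lies in $[1/4,3/4]\subset(0,1)$. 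By Cauchy--Schwarz and $\|\delta-\gamma\|\le d$, $\|\nu-\delta\|\le d$, one has $\langle\gamma,u\rangle\le\langle\delta,u\rangle\le\langle\nu,u\rangle$ for all $\delta\in\mathscr M_K$, with $\langle\nu,u\rangle-\langle\gamma,u\rangle=d$; since $\langle\delta,\tau^{\pm}\rangle=\tfrac12\pm\eps\langle\delta,u\rangle$, this makes $\nu$ a maximizer of $\mathsf T(\langle\cdot,\mathbf F^{+}\rangle)$ over $\mathscr M_K$ and $\gamma$ a maximizer of $\mathsf T(\langle\cdot,\mathbf F^{-}\rangle)$. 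With $\hat\delta:=\pi_{n,n+1}(Z_n,G_{n+1})$, $m:=\tfrac12(\langle\nu,u\rangle+\langle\gamma,u\rangle)$ and $A:=\{\langle\hat\delta,u\rangle\le m\}$, I would then use \eqref{eqn:sublinASX} (all arguments of $\mathsf T(J_\cdot)$ that occur lie in $[1/4,3/4]$) to obtain, on $A$, that the regret under $\mathbf F^{+}$ is at least $c_-\eps(\langle\nu,u\rangle-\langle\hat\delta,u\rangle)\ge c_-\eps(\langle\nu,u\rangle-m)=\tfrac12 c_-\eps d$, and, symmetrically, on $A^c$ that the regret under $\mathbf F^{-}$ is at least $\tfrac12 c_-\eps d$.

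\textbf{Information bound and conclusion.} Let $\P^{\pm}$ be the law of $(Z_n,G_{n+1})$ under $\mathbf F^{\pm}$. The randomization variables have a fixed law and are independent of the outcomes, so they drop out of the standard divergence decomposition for (possibly adaptive) sequential sampling, giving
\begin{equation*}
\mathrm{KL}(\P^{+}\|\P^{-})=\sum_{i=1}^{K}\E^{+}[N_i(n)]\,\mathrm{KL}(J_{\tau^{+}_i}\|J_{\tau^{-}_i}),\qquad N_i(n):=\sum_{t=1}^{n}\I\{\pi_{n,t}(Z_{t-1},G_t)=i\}.
\end{equation*}
With $h_1,h_2$ the densities of $H_1,H_2$ with respect to $\mu:=H_1+H_2$ (so $h_1+h_2=1$ holds $\mu$-a.e.\ and $h_1,h_2\in[0,1]$), for $t_1,t_2\in[1/4,3/4]$ one has $\mathrm{KL}(J_{t_1}\|J_{t_2})\le\chi^2(J_{t_1}\|J_{t_2})=(t_1-t_2)^2\int\frac{(h_1-h_2)^2}{t_2h_1+(1-t_2)h_2}\,d\mu\le 4(t_1-t_2)^2\int(h_1-h_2)^2\,d\mu\le 8(t_1-t_2)^2$, where the last step uses $\int(h_1-h_2)^2\,d\mu\le\int|h_1-h_2|\,d\mu\le\int(h_1+h_2)\,d\mu=2$. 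Since $\tau^{+}_i-\tau^{-}_i=2\eps u_i$, $\sum_i\E^{+}[N_i(n)]=n$ and $\|u\|_\infty\le1$, this yields $\mathrm{KL}(\P^{+}\|\P^{-})\le 32\eps^2\sum_i\E^{+}[N_i(n)]u_i^2\le 32\eps^2 n$, so Pinsker's inequality gives $\|\P^{+}-\P^{-}\|_{\mathrm{TV}}\le 4\eps\sqrt n$. Combining with the regret separation,
\begin{equation*}
\E^{+}[r_n]+\E^{-}[r_n]\ \ge\ \tfrac12 c_-\eps d\,(\P^{+}(A)+\P^{-}(A^c))\ \ge\ \tfrac12 c_-\eps d\,(1-\|\P^{+}-\P^{-}\|_{\mathrm{TV}})\ \ge\ \tfrac12 c_-\eps d\,(1-4\eps\sqrt n),
\end{equation*}
and taking $\eps=1/(8\sqrt n)$ (which is $\le 1/4$ for $n\ge1$) makes the last factor $1/2$, so the supremum over $\mathbf F\in\{\mathbf F^{+},\mathbf F^{-}\}$ is at least $c_- d/(64\sqrt n)$. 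This is already the stronger bound $c_-\,\mathrm{diam}(\mathscr M_K)/(64\sqrt n)$; since $\mathscr M_K\subseteq\mathscr S_K$ forces $d\le\sqrt2$, hence $d\ge\mathrm{diam}^2(\mathscr M_K)/\sqrt2$, it yields \eqref{eqn:rlowon} with a constant $c>0$ that depends only on $c_-$ and is therefore independent of $K$, $n$ and $\mathscr M_K$.

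\textbf{Where the difficulty lies.} The conceptual crux is the opening reduction --- spotting that restricting to the one-parameter family $\{J_\tau\}$ collapses the nominally infinite-dimensional, nonlinear recommendation problem into $K$-armed localization of a linear form, so that the hardness is governed purely by how precisely $n$ samples pin down $\langle\delta,\tau\rangle$ along the diameter direction of $\mathscr M_K$. The quantitative crux is making the information bound uniform over \emph{all} (adaptive) policies and over \emph{all} admissible $\mathscr M_K$ at once: one must keep the coordinatewise perturbations inside a fixed compact subinterval of $(0,1)$ so that the per-arm Kullback--Leibler divergence is genuinely $O((t_1-t_2)^2)$ with a \emph{universal} constant, and one must control $\sum_i\E^{+}[N_i(n)]u_i^2$ no matter how the policy allocates its sampling budget --- both of which rest on $\|u\|_\infty\le1$ together with $\eps\le1/4$. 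The divergence decomposition, Pinsker's inequality and the choice of $\eps$ are then routine.
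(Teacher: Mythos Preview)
Your proof is correct and follows essentially the same two-point Le~Cam argument as the paper: pick a diameter-achieving pair $(\nu,\gamma)\in\mathscr{M}_K^2$, exploit the affine structure $\langle\delta,\mathbf F\rangle=J_{\langle\delta,\tau\rangle}$ to reduce to maximizing a linear form, perturb along the direction $u=(\nu-\gamma)/d$, and conclude via a KL-based testing bound. The paper obtains the result as the special case $\mathcal{H}=\{\{1,\dots,K\}\}$ of a more general lower bound (its Theorem~\ref{prop: UniformLowermixed}, phrased in terms of the abstract quantity $\kappa(\mathcal{H},\mathscr{M}_K)$ and combined with Lemma~\ref{lem:Hsimp}), uses Tsybakov's exponential lower bound $\P^+(A)+\P^-(A^c)\ge\tfrac14 e^{-\mathrm{KL}}$ rather than Pinsker, and imports the quadratic KL control on the segment $\{J_\tau\}$ from an earlier article; your direct route --- in particular the self-contained $\chi^2$ bound $\mathrm{KL}(J_{t_1}\|J_{t_2})\le 8(t_1-t_2)^2$ for $t_2\in[1/4,3/4]$ --- is more elementary and yields an explicit constant, while the paper's abstraction pays off by giving Theorems~\ref{thm:lbn} and~\ref{Thm: UniformLowermixed} from a single argument.
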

The lower bound in Theorem~\ref{thm:lbn} actually holds over the \emph{parametric} subset~$\{J_{\tau}:\tau \in [0,1]\}$ (obtained from Assumption~\ref{as:lbcov}) of the potentially much larger and nonparametric set~$\mathscr{D}$. This also holds for all other bounds in this section. 

The main strength of Theorem~\ref{thm:lbn} is that it delivers a useful lower bound under \emph{minimal} assumptions on~$\mathscr{M}_K$: as soon as~$\mathscr{M}_K$ contains~$2$  elements that are bounded away from each other (uniformly over~$K$), the theorem shows that there exists no policy with maximal expected regret decreasing at a rate faster than~$1/\sqrt{n}$. The minimal assumption comes with a cost: namely that the theorem is silent about how the number of treatments affects the worst-case behavior of a policy. To make this more concrete, note for example that~$\mathscr{M}_2 = \{e_1(2), e_2(2)\}$ and~$\mathscr{M}_3 = \{e_1(3), e_2(3), e_3(3)\}$ both have diameter~$\sqrt{2}$, and thus both lead to the same lower bound in the previous theorem, even though~$\mathscr{M}_3$ obviously is more complex than~$\mathscr{M}_2$.

The remaining part of this section thus establishes lower bounds that also incorporate the dependence on~$K$ (in an optimal way, as the next section will show). It is clear that more refined structural properties of~$\mathscr{M}_K$ now need to enter the picture: the lower bounds need to reflect that the decision problem is more difficult if~$\mathscr{M}_K$ is very much ``spread out'' over~$\mathscr{S}_K$ instead of, e.g., being concentrated in a single corner of the simplex. 

To formulate the second result in this section, we now define the quantity~$\kappa(\mathscr{M}_K)$, which summarizes the structural properties of~$\mathscr{M}_K$ that enter into our general lower bound. The characteristic~$\kappa(\mathscr{M}_K)$ captures, in a way tailored towards our method of proof, how ``spread out'' the set~$\mathscr{M}_K$ is over the simplex, and (under Assumption~\ref{as:M}) is defined as
\begin{equation*}
	\kappa(\mathscr{M}_K) := \sup \left\{
	\left[
	\max_{\delta \in \mathscr{M}_K} v'\delta - \sup_{\delta \in \mathscr{T}} v'\delta\right] \wedge \min_{i = 1}^K
	\left[
	\max_{\delta \in \mathscr{M}_K}
	(v'\delta + w_i\delta_i)
	-
	\sup_{\delta \in  \mathscr{M}_K \setminus \mathscr{T}} (v'\delta + w_i\delta_i)
	\right]\right\},
\end{equation*}
where the outer supremum is taken over all nonempty Borel sets~$\mathscr{T} \subsetneqq \mathscr{M}_K$, all~$v \in [-1,1]^K$, and all~$w_i \in [-1,1]$ ($i = 1, \hdots, K$). It is easy to see  that~$\kappa(\mathscr{M}_K) \geq 0$ under Assumption~\ref{as:M}. 

Note that~$\kappa(\mathscr{M}_K)$ is large if there exists a vector~$v$, such that the maximum of~$\delta \mapsto v'\delta$ decreases substantially upon imposing the restriction~$\delta \in \mathscr{T}$, but where imposing the \emph{complementary} restriction~$\delta \notin \mathscr{T}$ decreases the maximum substantially upon suitably modifying \emph{any} single coordinate of~$v$. Since~$\kappa(\mathscr{M}_K)$ is a rather abstract quantity, the following result also provides two lower bounds for~$\kappa(\mathscr{M}_K)$ that are easy to interpret.
\begin{theorem}\label{Thm: UniformLowermixed}
	Suppose Assumptions~\ref{as:dgp},~\ref{as:MAIN},~\ref{as:M} and~\ref{as:lbcov} hold. 
	Then there exists a constant~$c > 0$, independent of~$K$,~$n$ and~$\mathscr{M}_K$, such that for every policy~$\pi$ with recommendations in~$\mathscr{M}_K$, and any randomization measure~$\P_G$, it holds that
	\begin{equation}\label{eqn:rlow3}
		\sup_{F^1, \hdots, F^K \in \{J_{\tau}:\tau \in [0,1]\}} \mathbb{E}[r_n(\pi, {\mathscr{M}_K})] \geq c  \kappa(\mathscr{M}_K) \sqrt{K/n}, ~~\text{ for every } n \geq K,
	\end{equation}
	where the supremum is taken over all potential outcome vectors with independent marginals and cdfs in~$\{J_{\tau}:\tau \in [0,1]\}$. Furthermore, 
	\begin{equation}\label{eqn:lq}
		\kappa(\mathscr{M}_K) \geq  \frac{1}{272} \times \max_{\delta \in \mathscr{M}_K} \|\delta\|^2 \times \min_{j = 1}^K (\max_{\delta \in \mathscr{M}_K} \delta_j - \min_{\delta \in \mathscr{M}_K} \delta_j)^3;
	\end{equation}
	and~$\kappa(\mathscr{M}_K)$ is also lower bounded by the squared positive part of~$\min_{j = 1}^K \max_{\delta \in \mathscr{M}_K} \delta_j - 1/2$. 
\end{theorem}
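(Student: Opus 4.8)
The plan is to establish the three claims separately, and in fact via two quite different techniques: the regret bound \eqref{eqn:rlow3} is an information-theoretic two-point argument dressed up with a pigeonhole over the $K$ treatments, whereas the two lower bounds on $\kappa(\mathscr{M}_K)$ are obtained by substituting explicit triples $(\mathscr{T},v,w)$ into the definition of $\kappa(\mathscr{M}_K)$ and optimizing. For \eqref{eqn:rlow3} I would first reduce to a linear problem: restricting the outcome cdfs to the parametric family $\{J_\tau:\tau\in[0,1]\}$ supplied by Assumption~\ref{as:lbcov}, a data-generating process is coded by $\tau=(\tau_1,\dots,\tau_K)\in[0,1]^K$ through $F^i=J_{\tau_i}$, and since $J$ is affine in its index $\langle\delta,\mathbf{F}\rangle=J_{\delta'\tau}$. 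By \eqref{eqn:sublinASX} the map $\tau\mapsto\mathsf{T}(J_\tau)$ is nondecreasing with slope at least $c_-$, so every recommendation $\hat{\delta}\in\mathscr{M}_K$ obeys $r_n(\pi,\mathscr{M}_K)\ge c_-\bigl(\max_{\delta\in\mathscr{M}_K}\delta'\tau-\hat{\delta}'\tau\bigr)$; thus, up to the factor $c_-$, the task becomes a linear bandit-type recommendation problem on $\mathscr{M}_K$, and only the simplex geometry encoded in $\kappa(\mathscr{M}_K)$ will matter from here on.

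Next I would build the hard instances. Fix $\eta>0$ and take $(\mathscr{T},v,w)$ that is $\eta$-optimal in the definition of $\kappa(\mathscr{M}_K)$, so the first bracket and each of the $K$ terms of the inner minimum are $\ge\kappa(\mathscr{M}_K)-\eta$. Set $\epsilon=\min\{1/2,\sqrt{K/n}\}$, let $\tau^0_j=(1+\epsilon v_j)/2$, and for each $i$ let $\tau^i$ coincide with $\tau^0$ except at coordinate $i$, where $\tau^i_i=\tau^0_i+\epsilon w_i/2$; all coordinates then lie in $[1/4,3/4]$. Plugging $\delta'\tau^0=\tfrac12+\tfrac{\epsilon}{2}v'\delta$ and $\delta'\tau^i=\tfrac12+\tfrac{\epsilon}{2}(v'\delta+w_i\delta_i)$ into the regret bound shows that under $\tau^0$ any recommendation with $\hat{\delta}\in\mathscr{T}$ incurs regret $\ge\tfrac{c_-\epsilon}{2}(\kappa(\mathscr{M}_K)-\eta)$, while under any $\tau^i$ any recommendation with $\hat{\delta}\notin\mathscr{T}$ incurs regret $\ge\tfrac{c_-\epsilon}{2}(\kappa(\mathscr{M}_K)-\eta)$. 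For the given policy $\pi$, pick $i^\star$ with $\mathbb{E}_{\tau^0}[N_{i^\star}]\le n/K$, where $N_i$ counts the subjects assigned to treatment $i$ (possible as $\sum_i N_i=n$). The standard divergence decomposition for (possibly sequential) bandit policies gives $\mathrm{KL}(\mathbb{P}_{\tau^0}\|\mathbb{P}_{\tau^{i^\star}})=\mathbb{E}_{\tau^0}[N_{i^\star}]\,\mathrm{KL}(J_{\tau^0_{i^\star}}\|J_{\tau^{i^\star}_{i^\star}})$, and the key estimate $\mathrm{KL}(J_\tau\|J_{\tau'})\le\mathrm{KL}(\mathrm{Ber}(\tau)\|\mathrm{Ber}(\tau'))$ follows from the data-processing inequality applied to the representation of $J_\tau$ as the law of $BX_1+(1-B)X_2$ with $B\sim\mathrm{Ber}(\tau)$ and independent $X_1\sim H_1$, $X_2\sim H_2$ — this is what lets one avoid assuming $H_1,H_2$ have densities. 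Since the two Bernoulli parameters lie in $[1/4,3/4]$ and differ by at most $\epsilon/2$, the right-hand side is $O(\epsilon^2)$, hence $\mathrm{KL}(\mathbb{P}_{\tau^0}\|\mathbb{P}_{\tau^{i^\star}})=O(1)$; the Bretagnolle--Huber inequality applied to $\{\hat{\delta}\in\mathscr{T}\}$ then bounds $\mathbb{P}_{\tau^0}(\hat{\delta}\in\mathscr{T})+\mathbb{P}_{\tau^{i^\star}}(\hat{\delta}\notin\mathscr{T})$ below by a positive constant. Combining with the two regret estimates, sending $\eta\downarrow0$, and using $\epsilon\ge\tfrac12\sqrt{K/n}$ (valid since $n\ge K$) delivers \eqref{eqn:rlow3}, with $c$ independent of $K$, $n$, $\mathscr{M}_K$; both $\tau^0$ and $\tau^{i^\star}$ lie in $\{J_\tau:\tau\in[0,1]\}^K$, so the bound has the stated supremum form.

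For the bounds on $\kappa(\mathscr{M}_K)$ write $M_j=\max_{\delta\in\mathscr{M}_K}\delta_j$ and $m_j=\min_{\delta\in\mathscr{M}_K}\delta_j$. For the last assertion, set $q=\min_j M_j$ and assume $q>1/2$ (otherwise the positive part vanishes). Choosing $j^\star$ with $M_{j^\star}=q$, I would use $\mathscr{T}=\{\delta\in\mathscr{M}_K:\delta_{j^\star}<1/2\}$ — nonempty because $m_{j^\star}\le 1-q<1/2$, and proper because $M_{j^\star}=q>1/2$ — together with $v=s\,e_{j^\star}(K)$ for a parameter $s\in(0,2q-1)$, $w_{j^\star}=-1$, and $w_j=+1$ for $j\ne j^\star$. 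A short calculation then bounds the first bracket below by $s(q-1/2)$, the $j^\star$-th term of the inner minimum below by $(1-s)(q-1/2)$, and every other term below by $(2q-1-s)/2$ (here $\delta_{j^\star}\ge1/2$ forces $v'\delta+\delta_j\le(1+s)/2$, while some $\delta\in\mathscr{M}_K$ has $\delta_j=M_j\ge q$); balancing $s$ gives $\kappa(\mathscr{M}_K)\ge(q-1/2)^2$. For \eqref{eqn:lq} the plan is analogous but with a more elaborate half-space triple built around the point $\delta^R\in\mathscr{M}_K$ of largest Euclidean norm: $v$ proportional to $\delta^R$ with a small factor, $\mathscr{T}$ a cut of $\mathscr{M}_K$ transverse to $\delta^R$, and $w_i=\pm1$ chosen coordinatewise so that each single-coordinate perturbation drags the $v$-maximizer across the cut. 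A coordinate with small $M_j-m_j$ both limits how far the maximizer can be moved and forces down the factor multiplying $\delta^R$, and pushing the arithmetic through yields the $\max_{\delta\in\mathscr{M}_K}\|\delta\|^2\cdot\min_j(M_j-m_j)^3$ scaling with the explicit constant $1/272$.

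The information-theoretic step is by now routine once the linear reduction and the bound $\mathrm{KL}(J_\tau\|J_{\tau'})\le\mathrm{KL}(\mathrm{Ber}(\tau)\|\mathrm{Ber}(\tau'))$ are in place, so the part I expect to require the most care is \eqref{eqn:lq}: one has to produce a single explicit triple $(\mathscr{T},v,w)$ that works for an arbitrary closed — and not necessarily convex — $\mathscr{M}_K\subseteq\mathscr{S}_K$, make all $K$ single-coordinate perturbations simultaneously effective, and then extract the precise cubic-in-$\min_j(M_j-m_j)$ dependence; this bookkeeping, rather than any conceptual difficulty, is the real crux.
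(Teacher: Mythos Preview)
Your proposal is correct and follows essentially the same route as the paper: the regret bound is proved via the same linear reduction on the segment~$\{J_\tau\}$, a pigeonhole choice of the least-sampled coordinate, and a two-point testing inequality, while both lower bounds on~$\kappa(\mathscr{M}_K)$ are obtained by plugging in explicit triples~$(\mathscr{T},v,w)$ built around~$e_{j^\star}$ and the maximal-norm point respectively, exactly as in the paper. The one noteworthy variation is your KL control via the data-processing inequality for the mixture representation~$J_\tau=\mathrm{Law}(BX_1+(1-B)X_2)$, which is a clean and self-contained substitute for the paper's appeal to an external bound~$\mathsf{KL}^{1/2}(\mu_{H_v},\mu_{H_w})\le\zeta|v-w|$; this buys you a proof that does not require~$H_1,H_2$ to admit densities, at no cost elsewhere.
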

An important situation in Theorem~\ref{Thm: UniformLowermixed} is~$\mathscr{M}_K \supseteq \mathscr{E}_K$. In this case, the second lower bound on~$\kappa(\mathscr{M}_K)$ provided in the last sentence of the theorem implies~$\kappa(\mathscr{M}_K) \geq 1/4$. Together with Equation~\eqref{eqn:rlow3} this proves that no policy can have  maximal expected regret decreasing at a faster rate than~$\sqrt{K/n}$. The same is true for any~$\mathscr{M}_K$ as long as~$\min_{j = 1}^K \max_{\delta \in \mathscr{M}_K} \delta_j - 1/2$ is bounded away from~$0$ (uniformly in~$K$). 

For the special case in which the functional~$\mathsf{T}$ is the mean functional,~$a = 0$,~$b = 1$, and~$\mathscr{M}_K = \mathscr{E}_K$, a~$\sqrt{K/n}$ lower bound is given in the distribution-free lower bound mentioned in~\cite{bubeck2009pure}. The lower bound in Theorem~\ref{Thm: UniformLowermixed} is a non-trivial generalizations of the lower bound in~\cite{bubeck2009pure}: besides working with a general functional~$\mathsf{T}$ and operating over a corresponding line segment~$\{J_{\tau}:\tau \in [0,1]\}$ in~$\mathscr{D}$ along which~$\mathsf{T}$ varies sufficiently, we do not impose the condition~$\mathscr{M}_K = \mathscr{E}_K$. This makes the situation substantially more delicate, as the recommendation then takes its values in a potentially complicated set~$\mathscr{M}_K$. 

At this point, one may wonder whether the lower bound obtained in Theorem~\ref{Thm: UniformLowermixed} can actually be attained in case~$\mathscr{M}_K \neq \mathscr{E}_K$ for the following reason: Note that if~$\mathscr{M}_K = \mathscr{E}_K$ the decision maker ``only'' needs to learn the~$K$~values~$\mathsf{T}(F^1), \hdots,\mathsf{T}(F^K)$ to determine~$\argmax_{\delta \in \mathscr{M}_K} \mathsf{T}(\langle \delta, \mathbf{F} \rangle)$, whereas in case, e.g.,~$\mathscr{M}_K = \mathscr{S}_K$, one needs to learn~$\mathsf{T}(\langle \delta, \mathbf{F} \rangle)$ for \emph{all}~$\delta \in \mathscr{S}_K$.  In the next section we shall see that lower bounds of rate~$\sqrt{K/n}$ are attainable (up to logarithmic terms in~$K$). 

In situations where~$\mathscr{M}_K$ restricts a weight to~$\delta_j \leq 1/2$, and the discussion right after Theorem~\ref{Thm: UniformLowermixed} thus does not apply, the lower bound for~$\kappa(\mathscr{M}_K)$ given in Equation~\eqref{eqn:lq} (and a simple argument) shows that
\begin{equation*}
	\max_{\delta \in \mathscr{M}_K}\|\delta\|^2  \times
	\min_{j = 1}^K (\max_{\delta \in \mathscr{M}_K} \delta_j - \min_{\delta \in \mathscr{M}_K} \delta_j)^3 \geq \min_{j = 1}^K (\max_{\delta \in \mathscr{M}_K} \delta_j - \min_{\delta \in \mathscr{M}_K} \delta_j)^5;
\end{equation*}
as long as this lower bound is bounded away from~$0$ (uniformly in~$K$), no policy exists that has maximal expected regret decreasing at a faster rate than~$\sqrt{K/n}$. This is a substantial generalization beyond the case already discussed after Theorem~\ref{Thm: UniformLowermixed}, and in particular covers most cases of practical relevance where constraints are put on each treatment. Note in particular that if~$\mathscr{M}_K$ is such that~$\max_{\delta \in \mathscr{M}_K} \delta_j - \min_{\delta \in \mathscr{M}_K} \delta_j > 0$ for every~$j = 1, \hdots, K$, i.e., if every~$\delta_j$ can take at least two different values, then the lower bound for~$\kappa(\mathscr{M}_K)$ given in Equation~\eqref{eqn:lq} is strictly greater than~$0$. 

Note also that in case~$\max_{\delta \in \mathscr{M}_K} \delta_j - \min_{\delta \in \mathscr{M}_K} \delta_j = 0$ for a treatment~$j$, this treatment always has to be assigned with the same weight. In this case the lower bound in Equation~\eqref{eqn:lq} is~$0$, and furthermore also~$\kappa(\mathscr{M}_K) = 0$ holds.\footnote{That~$\kappa(\mathscr{M}_K) = 0$ if~$\max_{\delta \in \mathscr{M}_K} \delta_j - \min_{\delta \in \mathscr{M}_K} \delta_j = 0$ for treatment~$j$ follows upon noting that for any choice of~$\mathscr{T}$,~$v \in [-1,1]^K$, and~$w_j \in [-1,1]$, either~$\max_{\delta \in \mathscr{M}_K} v'\delta = \sup_{\delta \in \mathscr{T}} v'\delta$ or~$\max_{\delta\in \mathscr{M}_K} (v'\delta + w_j \delta_j) = \sup_{\delta \in \mathscr{M}_K \setminus \mathscr{T}} (v'\delta + w_j \delta_j)$.} Hence, Theorem~\ref{Thm: UniformLowermixed} does \emph{not} deliver a~$\sqrt{K/n}$ lower bound; it is nevertheless worth mentioning that Theorem~\ref{thm:lbn} still delivers a lower bound of order~$1/\sqrt{n}$ as long as~$\mathscr{M}_K$ is not a singleton. In this situation, one could expect a lower bound to hold true that incorporates, instead of~$K$, the number of treatments for which~$\max_{\delta \in \mathscr{M}_K} \delta_j - \min_{\delta \in \mathscr{M}_K} \delta_j \neq 0$. Such a result can be established using Theorem~\ref{prop: UniformLowermixed} in Appendix~\ref{sec:LB}, which fills in this ``gap'' between Theorems~\ref{thm:lbn} and~\ref{Thm: UniformLowermixed}, but comes at the expense of more complicated notation. Since this case is very special, we shall not discuss it further.

\section{Static assignment policies}\label{sec:nonseq}

We now consider a class of \emph{static assignment} (SA) policies incorporating an ``empirical-success'' recommendation rule. Given balanced assignments in the experimentation phase, these policies will be shown to be minimax expected regret optimal, in the sense that they attain the~$\sqrt{K/n}$ lower bound established in Theorem~\ref{Thm: UniformLowermixed} (up to~$\sqrt{\log(K)}$). In the following discussion we assume that~$n \geq K$, that~$\mathsf{T}$ is continuous on the convex set~$\mathscr{D}$, and that~$\mathscr{M}_K$ satisfies Assumption~\ref{as:M}.

An~SA policy proceeds as follows: First, it allocates subject~$t = 1, \hdots, n$ to treatment~$j$ according to whether~$t \in \Pi_{n, j}$, where~$\Pi_n := (\Pi_{n, 1}, \hdots, \Pi_{n, K})$ is a  partition of~$\{1, \hdots, n\}$. For simplicity, we treat the partition as fixed; random partitions that are obtained by an exogenous randomization mechanism can be easily accommodated via conditioning. Once all~$n$ outcomes are observed, one estimates every~$F^j$ by the corresponding empirical cdf
\begin{equation}
	\hat{F}_{n, \Pi_n}^j(\cdot) := |\Pi_{n, j}|^{-1} \sum_{t \in \Pi_{n, j}} \mathds{1} \{ Y_{j, t} \leq \cdot \},
\end{equation}
where~$|\Pi_{n, j}|$ denotes the cardinality of~$\Pi_{n, j}$. The obvious way to proceed would now be to search for a maximizer of~$\mathsf{T}(\langle \delta, \mathbf{\hat{F}}_{n, \Pi_{n}} \rangle)$ over~$\mathscr{M}_K$, where~$\mathbf{\hat{F}}_{n, \Pi_{n}} := (\hat{F}_{n, \Pi_n}^1, \hdots, \hat{F}_{n, \Pi_n}^K)$. However, without further assumptions on~$\mathsf{T}$, there is no guarantee that such a maximizer exists, because~$\delta \mapsto \mathsf{T}(\langle \delta, \mathbf{\hat{F}}_{n, \Pi_{n}} \rangle)$ need not be continuous (even under Assumption~\ref{as:MAIN}; noting that the coordinates of $\mathbf{\hat{F}}_{n, \Pi_{n}}$ are not necessarily elements of~$\mathscr{D}$). To avoid adding additional assumptions, we shall work with an approximate maximum, based on a discretization of~$\mathscr{M}_K$, i.e., a non-empty and finite set~$\mathscr{M}^{n}_K \subseteq \mathscr{M}_K$. Based on the discretization~$\mathscr{M}_K^n$ the SA policy then recommends 
\begin{equation}\label{eqn:selmin}
	\min \argmax_{\delta \in \mathscr{M}^{n}_K} \mathsf{T}(\langle \delta, \mathbf{\hat{F}}_{n, \Pi_n}\rangle),
\end{equation}
where the minimum in~\eqref{eqn:selmin} is taken as a concrete way of breaking ties and where the ambient set~$\mathscr{S}_K$ is equipped with the lexicographic order.\footnote{Here, for two vectors $a \neq b$ of real numbers, $a$ is lexicographically smaller than $b$ if $a_{i^*} < b_{i^*}$ for the smallest index $i^*$ such that $a_{i^*} \neq b_{i^*}$.} SA policies will be denoted by the generic symbol~$\hat{\pi}$, and are summarized in Policy~\ref{pol:es} for later reference. Note that the assignments of the subjects~$t = 1, \hdots, n$ neither depend on~$Z_{t-1}$ (i.e., the policy is static) nor incorporate external randomization~$G_t$, which are therefore dropped as arguments from the policy in the summary in Policy~\ref{pol:es}. Note also that the recommendation given does not incorporate an external randomization~$G_{n+1}$, which we therefore drop from the notation in Policy~\ref{pol:es} as well.
\begin{policy}\label{pol:es}
	\caption{Static Assignment Policy~$\hat{\pi}$}
	
	\smallskip
	
	\textbf{Input:}~$n\in \N$,~$K \leq n$, partition~$\Pi_n$ of~$\{1, \hdots, n\}$, discretization~$\mathscr{M}_K^n$ of~$\mathscr{M}_K$ \\
	
	\smallskip
	
	\For{$t = 1, \hdots,n$}{
		$\hat{\pi}_{n, t} = \sum_{i = 1}^K i \mathds{1}_{\Pi_{n,i}}(t)$ 
	}
	$\hat{\pi}_{n,n+1}(Z_n) = \min \argmax_{\delta \in \mathscr{M}^n_K} \mathsf{T}(\langle \delta, \mathbf{\hat{F}}_{n, \Pi_n}\rangle)$  
\end{policy}

\subsection{Choosing a discretization}\label{sec:disc}

Associate to any discretization~$\mathscr{M}_K^n \subseteq \mathscr{M}_K$ its (worst case) ``optimization error"
\begin{equation}\label{eqn:opte}
\varepsilon(n) = \varepsilon(n, \mathscr{D}, \mathscr{M}_K,\mathscr{M}_K^n) := \sup_{\mathbf{F} \in \mathscr{D} \times \hdots \times \mathscr{D}} \big|\max_{\delta \in \mathscr{M}_K} \mathsf{T}(\langle \delta, \mathbf{F} \rangle) - 
\max_{\delta \in \mathscr{M}_K^n} \mathsf{T}(\langle \delta, \mathbf{F} \rangle) \big|, 
\end{equation}
i.e., the maximal loss possible by optimizing over~$\mathscr{M}_K^n$ rather than~$\mathscr{M}_K$. In case~$\mathscr{M}_K$ has finitely many elements, one may choose~$\mathscr{M}_K^n = \mathscr{M}_K$ implying~$\varepsilon(n, \mathscr{D}, \mathscr{M}_K,\mathscr{M}_K^n) = 0$. If~$\mathscr{M}_K$ is not finite (and cannot be reduced to a finite set, cf.~Section~\ref{sec:qconv}), or if~$\mathscr{M}_K$ is finite but large, controlling the optimization error is often more conveniently done by choosing a fine enough discretization. Note that under Assumption~\ref{as:MAIN} and for~$\mathscr{D}$ convex 
\begin{equation}\label{eqn:rese}
	\varepsilon(n, \mathscr{D}, \mathscr{M}_K,\mathscr{M}_K^n) \leq CK \sup_{\delta \in \mathscr{M}_K} \inf_{\gamma \in \mathscr{M}_K^n}\|\delta - \gamma \|_{\infty}  =: CK \rho(\mathscr{M}_K, \mathscr{M}_K^n),
\end{equation}
and we refer to~$\rho(\mathscr{M}_K, \mathscr{M}_K^n)$ as the (worst-case) ``resolution error'' of the discretization~$\mathscr{M}_K^n$ of~$\mathscr{M}_K$. Since~$\mathscr{M}_K$ is assumed to be closed throughout, a finite~$\mathscr{M}^n_K \subseteq \mathscr{M}_K$ with~$\rho(\mathscr{M}_K, \mathscr{M}_K^n)$ (and hence~$\varepsilon(n)$) as small as one wishes always exists. To give a concrete example, a common way of constructing discretizations is discussed next for the unrestricted case~$\mathscr{M}_K = \mathscr{S}_K$ and for sets~$\mathscr{M}_K$ incorporating incompatibility constraints (but are otherwise unrestricted).
\begin{example} \label{ex:disc}
	Let~$\mathscr{M}_K$ be as in~\eqref{eqn:part} (potentially with~$m = 1$) with~$\mathscr{M}_{A_j, K} = \mathscr{S}_{A_j, K}$ for~$j = 1, \hdots, m$. For~$P \in \N$ define
	\begin{equation}\label{eqn:simpgrid}
		\mathscr{S}_{K, P} := \{
		\delta \in \mathscr{S}_K: P \delta \in \mathbb{Z}^K\},
	\end{equation}
	i.e., the intersection of the simplex~$\mathscr{S}_K$ and~$P^{-1} \mathbb{Z}^K$. Theorem~7 of~\cite{bomze2014rounding} implies that for every~$x \in \mathscr{S}_K$ there exists an~$y \in \mathscr{S}_{K, P}$, such that~$\|x-y\|_{\infty} \leq P^{-1}(1-1/K)$. The same theorem, but applied to the simplex~$\mathscr{S}_{A_i,K}$, shows that for every~$x \in \mathscr{S}_{A_i, K}$ there exists an~$y \in \mathscr{S}_{A_i, K} \cap \mathscr{S}_{K, P}$ such that~$\|x - y\|_{\infty} \leq P^{-1} (1-1/|A_i|)$. It thus follows that for every~$\rho_n > 0$ the discretization
	\begin{equation}\label{eqn:disc}
		\mathscr{M}_K^n := \mathscr{M}_K \cap \mathscr{S}_{K, \lceil \rho_n^{-1} \rceil} =  \bigcup_{i = 1}^m \mathscr{S}_{A_i, K} \cap \mathscr{S}_{K, \lceil \rho_n^{-1} \rceil}
	\end{equation}
	satisfies~$\rho(\mathscr{M}_K, \mathscr{M}_K^n) \leq \rho_n$ and~$\mathscr{M}_K^n \subseteq \mathscr{M}_K$.
\end{example}

\subsection{Maximal expected regret upper bound for~$\hat{\pi}$}

We now prove an upper bound on the maximal expected regret of~$\hat{\pi}$. One question we have not discussed so far is the measurability of~$\hat{\pi}$, which holds under the following weak condition as shown in the proof of Theorem~\ref{thm:NMAPup} in Appendix~\ref{app:es}.
\begin{assumption}\label{as:MB}
	For every~$m \in \N$, and every~$\delta \in \mathscr{S}_m$ the function on~$[a,b]^m$ defined via~$x \mapsto \mathsf{T}(\sum_{j = 1}^m \delta_j \mathds{1}\cbr[0]{x_j \leq \cdot})$ is Borel measurable.
\end{assumption}
\begin{theorem}\label{thm:NMAPup}
	Suppose Assumptions~\ref{as:dgp},~\ref{as:MAIN},~\ref{as:M}, and~\ref{as:MB} hold. Then the SA policy~$\hat{\pi}$ with partition~$\Pi_n$,~discretization~$\mathscr{M}_K^n$ and~$\beta_n := \min_{j = 1, \hdots, K}|\Pi_{n,j}|$ satisfies
	\begin{equation}\label{eqn:rup2}
		\sup_{\substack{F^i \in \mathscr{D} \\ i = 1, \hdots, K }} \mathbb{E}\left[r_n(\hat{\pi}, \mathscr{M}_K)\right] \leq \varepsilon(n) + 3.01 \times C ~\sqrt{
			\log(K)	/\beta_n}, \text{ for every } n \geq K.
	\end{equation}
\end{theorem}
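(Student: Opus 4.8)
The plan is to split the regret into a deterministic optimization error — absorbed into $\varepsilon(n)$ by definition — and a stochastic estimation error, and then to control the latter by a uniform Dvoretzky--Kiefer--Wolfowitz (DKW) bound on the empirical cdfs. The measurability of $\hat\pi$ claimed before the theorem is handled as part of this. For measurability, fix $n\geq K$, the partition $\Pi_n$, and $\delta\in\mathscr{M}_K^n$. Writing $Y_{\hat\pi_{n,t},t}$ for the outcome of subject $t$, one has $\langle\delta,\mathbf{\hat{F}}_{n,\Pi_n}\rangle=\sum_{t=1}^n\tilde\delta_t\mathds{1}\{Y_{\hat\pi_{n,t},t}\leq\cdot\}$ with $\tilde\delta_t:=\delta_{\hat\pi_{n,t}}/|\Pi_{n,\hat\pi_{n,t}}|\geq0$ and $\sum_{t=1}^n\tilde\delta_t=\sum_{j=1}^K\delta_j=1$, i.e.\ $(\tilde\delta_t)_{t=1}^n\in\mathscr{S}_n$. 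Hence Assumption~\ref{as:MB} with $m=n$ shows that the outcomes observed in the experimentation phase are mapped in a Borel-measurable way to $\mathsf{T}(\langle\delta,\mathbf{\hat{F}}_{n,\Pi_n}\rangle)$ for each fixed $\delta$; since $\mathscr{M}_K^n$ is finite and non-empty, the lexicographically smallest maximizer over $\mathscr{M}_K^n$ is a measurable selection, so $\hat\pi_{n,n+1}$ (and therefore $\hat\pi$) is a well-defined policy.

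For the regret decomposition, fix $\mathbf{F}\in\mathscr{D}\times\hdots\times\mathscr{D}$, set $\hat\delta:=\hat\pi_{n,n+1}(Z_n)$, and choose $\delta^\star\in\argmax_{\delta\in\mathscr{M}_K^n}\mathsf{T}(\langle\delta,\mathbf{F}\rangle)$ (which exists, $\mathscr{M}_K^n$ being finite and non-empty). Then
\begin{align*}
r_n(\hat\pi,\mathscr{M}_K)
&=\Big(\max_{\delta\in\mathscr{M}_K}\mathsf{T}(\langle\delta,\mathbf{F}\rangle)-\mathsf{T}(\langle\delta^\star,\mathbf{F}\rangle)\Big)\\
&\quad+\big(\mathsf{T}(\langle\delta^\star,\mathbf{F}\rangle)-\mathsf{T}(\langle\delta^\star,\mathbf{\hat{F}}_{n,\Pi_n}\rangle)\big)
+\big(\mathsf{T}(\langle\delta^\star,\mathbf{\hat{F}}_{n,\Pi_n}\rangle)-\mathsf{T}(\langle\hat\delta,\mathbf{\hat{F}}_{n,\Pi_n}\rangle)\big)\\
&\quad+\big(\mathsf{T}(\langle\hat\delta,\mathbf{\hat{F}}_{n,\Pi_n}\rangle)-\mathsf{T}(\langle\hat\delta,\mathbf{F}\rangle)\big).
\end{align*}
The first line is at most $\varepsilon(n)$ by~\eqref{eqn:opte}, and the second summand on the second line is $\leq 0$ because $\hat\delta$ maximizes $\mathsf{T}(\langle\cdot,\mathbf{\hat{F}}_{n,\Pi_n}\rangle)$ over $\mathscr{M}_K^n\ni\delta^\star$. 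Consequently $r_n(\hat\pi,\mathscr{M}_K)\leq\varepsilon(n)+2\sup_{\delta\in\mathscr{M}_K}\big|\mathsf{T}(\langle\delta,\mathbf{F}\rangle)-\mathsf{T}(\langle\delta,\mathbf{\hat{F}}_{n,\Pi_n}\rangle)\big|$, pointwise on $\Omega$.

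It remains to bound the uniform estimation error. For $\delta\in\mathscr{M}_K\subseteq\mathscr{S}_K$, convexity of $\mathscr{D}$ (Assumption~\ref{as:dgp}) gives $\langle\delta,\mathbf{F}\rangle\in\mathscr{D}$, whereas $\langle\delta,\mathbf{\hat{F}}_{n,\Pi_n}\rangle$ lies only in $D_{cdf}([a,b])$ — this is precisely the asymmetric situation covered by Assumption~\ref{as:MAIN}, which yields $|\mathsf{T}(\langle\delta,\mathbf{F}\rangle)-\mathsf{T}(\langle\delta,\mathbf{\hat{F}}_{n,\Pi_n}\rangle)|\leq C\|\sum_{j=1}^K\delta_j(F^j-\hat{F}^j_{n,\Pi_n})\|_\infty\leq C\max_{j=1,\hdots,K}\|F^j-\hat{F}^j_{n,\Pi_n}\|_\infty$, using $\delta_j\geq0$ and $\sum_j\delta_j=1$. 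Hence $r_n(\hat\pi,\mathscr{M}_K)\leq\varepsilon(n)+2C\max_j\|F^j-\hat{F}^j_{n,\Pi_n}\|_\infty$. By Assumption~\ref{as:dgp} each $\hat{F}^j_{n,\Pi_n}$ is the empirical cdf of $|\Pi_{n,j}|\geq\beta_n$ i.i.d.\ draws from $F^j$, so the DKW inequality (with Massart's constant) gives $\P(\|F^j-\hat{F}^j_{n,\Pi_n}\|_\infty>t)\leq2e^{-2\beta_n t^2}$ for all $t>0$, a union bound gives $\P(\max_j\|F^j-\hat{F}^j_{n,\Pi_n}\|_\infty>t)\leq2Ke^{-2\beta_n t^2}$, and integrating this tail — splitting the range at $t_0:=(\log(2K)/(2\beta_n))^{1/2}$, where the bound $2Ke^{-2\beta_n t^2}$ equals $1$, and using $\int_{t_0}^\infty e^{-2\beta_n t^2}\,dt\leq e^{-2\beta_n t_0^2}/(4\beta_n t_0)$ — bounds $\E[\max_j\|F^j-\hat{F}^j_{n,\Pi_n}\|_\infty]$ by an explicit constant times $\sqrt{\log(K)/\beta_n}$ for $K\geq2$ (here one also uses $\log(2K)\leq2\log K$). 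Taking the supremum over $\mathbf{F}\in\mathscr{D}\times\hdots\times\mathscr{D}$ and carrying the constants through gives~\eqref{eqn:rup2}.

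The argument is largely routine; the points needing care are the rewriting of $\langle\delta,\mathbf{\hat{F}}_{n,\Pi_n}\rangle$ as an element of the class to which Assumption~\ref{as:MB} with $m=n$ applies (for measurability), the observation that Assumption~\ref{as:MAIN} must be invoked in its one-sided form because the empirical mixture $\langle\delta,\mathbf{\hat{F}}_{n,\Pi_n}\rangle$ need not lie in $\mathscr{D}$, and the bookkeeping in the tail integral required to reach the stated numerical constant $3.01$. None of these constitutes a genuine obstacle, and the overall strategy — optimization/estimation split, cancellation of the middle term via optimality of $\hat\delta$ on the empirical objective, Lipschitz transfer to $\|\cdot\|_\infty$ distances, and a DKW-plus-union-bound tail integral — is the natural one.
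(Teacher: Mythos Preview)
Your proof is correct and the regret decomposition---arriving at $r_n(\hat\pi,\mathscr{M}_K)\leq\varepsilon(n)+2C\max_{j}\|\hat F^j_{n,\Pi_n}-F^j\|_\infty$---is exactly what the paper does. The measurability argument is also equivalent in substance (your reduction to Assumption~\ref{as:MB} with $m=n$ via the absorbed weights $\tilde\delta_t$ is a clean way to phrase it).

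The one genuine difference is in how $\mathbb{E}\big[\max_j\|\hat F^j_{n,\Pi_n}-F^j\|_\infty\big]$ is bounded. The paper does not integrate the DKW tail directly; instead it uses an MGF route: from the DKWM inequality it derives a bound on $\mathbb{E}\exp(s\,|\Pi_{n,j}|\,\|\hat F^j_{n,\Pi_n}-F^j\|_\infty)$, applies Jensen to $\exp(t\,\mathbb{E}\max_j M_j)\leq\sum_j\mathbb{E}e^{tM_j}$, and then optimizes over $t$ (with a small monotonicity lemma to simplify the resulting expression), landing on the factor $1.505$. Your approach---union bound plus integrating the tail $2Ke^{-2\beta_n t^2}$ split at $t_0=\sqrt{\log(2K)/(2\beta_n)}$---is the more elementary of the two and, if you carry the constants through, actually yields a multiplier strictly below $1.505$ (the worst case is $K=2$, giving roughly $1.36$), so the claimed $3.01$ follows a fortiori. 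The paper's MGF machinery is not needed here but is reused later for the sequential policy, where maximal inequalities over random ranges of sample sizes are required; that is presumably why it is set up this way.
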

Theorem~\ref{thm:NMAPup} shows that working with a discretization having maximal optimization error~$\varepsilon(n) \leq \sqrt{K/n}$ (cf.~Equation~\eqref{eqn:opte}) together with ``balanced partitions,'' i.e., partitions satisfying~$|\Pi_{n,j}| \geq \lfloor n/K \rfloor$ for every~$n \in \N$ and every~$j = 1, \hdots, K$, results in an SA policy that attains the lower bound established in Theorem~\ref{Thm: UniformLowermixed}, up to a~$\sqrt{\log(K)}$ factor (and multiplicative constants). Hence, for such a choice of partition and discretization, the SA policy is (near) minimax expected regret optimal. Note that the optimality statement also holds in the high-dimensional regime where the number of treatments grows with~$n$. By Equation~\eqref{eqn:rese}, a discretization satisfying~$\varepsilon(n) \leq \sqrt{K/n}$ can be obtained by choosing~$\mathscr{M}_K^n$ such that its resolution error~$\rho(\mathscr{M}_K, \mathscr{M}_K^n) \leq \sqrt{K/n}/(CK)$, cf.~Example~\ref{ex:disc} for specific constructions. 

In the special case where~$\mathsf{T}$ is the mean functional (which is quasi-convex) and~$\mathscr{M}_K = \mathscr{E}_K$, the SA policy~$\hat{\pi}$ with~$\mathscr{M}_K^{n} = \mathscr{E}_K$ and balanced assignment reduces to the ``uniform allocation policy'' in~\cite{bubeck2009pure} and to the empirical-success rule studied in~\cite{manski2004statistical} and~\cite{Manski10518}. In this special case Assumption~\ref{as:MAIN} is satisfied with~$C = b-a$,~$\varepsilon(n) = 0$, and Theorem~\ref{thm:NMAPup} delivers and upper bound with the same dependence on~$K$ and~$n$ as the upper bound in the just mentioned articles, but with an additional factor~3.01. This additional factor is due to our different method of proof that needs to deal with the possibility that in general~$\mathscr{E}_K \neq \mathscr{M}_K$ (and that~$\mathsf{T}$ is not the mean functional).

\section{Sequential elimination policies}\label{sec:seq}
The static assignment policy~$\hat{\pi}$ assigns each subject~$t =1, \hdots, n$ to a  treatment in~$\mathcal{I}$ according to the partition~$\Pi_n$. This partition is fixed by the decision maker before the first assignment is made. Thus, the policy~$\hat{\pi}$ is static. In the present section we consider policies that are sequential, in the sense that the assignment of subject~$t \in \{1, \hdots, n\}$ can depend on the outcomes of all previously assigned subjects~$1, \hdots, t-1$. Intuitively, this opens up the following opportunity: by sequentially monitoring the performance of the treatments, one can target the sampling effort to where it is most useful, rather than deciding up front to assign each treatment, e.g., equally often.

We already know from Theorem~\ref{thm:NMAPup} that the maximal expected regret lower bound from Theorem~\ref{Thm: UniformLowermixed} is attainable (up to a~$\sqrt{\log(K)}$ term) in the class of static policies. Therefore, not much can be gained in terms of worst-case expected regret from sequential policies. Nevertheless, it is plausible that a policy which exploits that subjects arrive sequentially can improve on the static policy~$\hat{\pi}$ for many potential outcome distributions~$\mathbf{F} = (F^1, \hdots, F^K) \in \mathscr{D} \times \hdots \times \mathscr{D}$, without having a higher worst-case expected regret. For the policies suggested in the present article this is confirmed by Theorems~\ref{thm:FSApartition} and~\ref{thm:FSAcheck} and the numerical results in Section~\ref{sec:num}.

Essentially, we propose sequential policies which are based on the following rationale: \emph{stop assigning ``inferior'' treatments as soon as possible, and do not include treatments in the recommendation once eliminated}. While it is clear in the special case of~$\mathscr{M}_K = \mathscr{E}_K$ that all treatments~$i$ not contained in~$\argmax_{j \in \mathcal{I}} \mathsf{T}(F^j)$ are inferior, it is less clear in the general case what an inferior treatment is. We shall consider treatment~$i$ to be inferior, if \emph{all} elements of~$\argmax_{\delta \in \mathscr{M}_K} \mathsf{T}(\langle \delta, \mathbf{F} \rangle)$ have zero~$i$-th coordinate, i.e., if
\begin{equation}\label{eqn:top1}
	\argmax_{\delta \in \mathscr{M}_K} \mathsf{T}(\langle \delta, \mathbf{F} \rangle) \cap \{\delta \in \mathscr{M}_K: \delta_i > 0\} = \emptyset.
\end{equation}
In this case treatment~$i$ does not contribute to any maximizer. Note that the maximum in the previous display is well defined if~$\mathsf{T}$ is continuous on~$\mathscr{D}$, convex, and~$\mathscr{M}_K$ is as in Assumption~\ref{as:M}, which we shall assume throughout this section. Since~$\mathbf{F}$ is unknown and~$\mathsf{T}$ is continuous, estimation uncertainty implies that an inferior treatment~$i$ is only empirically detectable if it is actually ``strongly inferior'' in the sense that
\begin{equation}\label{eqn:intbd}
	\argmax_{\delta \in \mathscr{M}_K} \mathsf{T}(\langle \delta, \mathbf{F} \rangle) \cap \overline{\{\delta \in \mathscr{M}_K: \delta_i > 0\}} = \emptyset
\end{equation}
(that is~$\{\delta \in \mathscr{M}_K: \delta_i > 0\}$ in~\eqref{eqn:top1} is replaced by its closure). It is easy to see that structural assumptions need to be put on~$\mathscr{M}_K$ in order to guarantee that strongly inferior treatments exist for some~$\mathbf{F}$. As an example, no strongly inferior treatments exist for~$\mathscr{M}_K = \mathscr{S}_K$, since then~$\overline{\{\delta \in \mathscr{S}_K : \delta_i > 0\}} = \mathscr{S}_K$ for every~$i$. In such cases, attempting  to eliminate treatments based on assessing~\eqref{eqn:intbd} can never result in efficiency gains over~$\hat{\pi}$, because~\eqref{eqn:intbd} is then never satisfied.

In the construction of our sequential policies it thus only makes sense to consider~$\mathscr{M}_K$ that do not a priori rule out the existence of strongly inferior treatments. We shall therefore mainly focus on~$\mathscr{M}_K = \bigcup_{j = 1}^m \mathscr{M}_{A_j, K}$ for a partition~$A_1, \hdots, A_m$ of~$\mathcal{I}$ as in Example~\ref{ex:compa}, which covers many situations of practical relevance (a sequential policy that also works without this structural assumption is considered in Section~\ref{sec:beyond}).  Since~$\mathsf{T}$ is continuous on~$\mathscr{D}$, convex, the condition
\begin{equation}\label{eqn:maxsep}
	\max_{\delta \in \mathscr{M}_K} \mathsf{T}(\langle \delta, \mathbf{F} \rangle) = \max_{\delta \in \bigcup_{l \neq j} \mathscr{M}_{A_l, K}} \mathsf{T}(\langle \delta, \mathbf{F} \rangle) > \max_{\delta \in  \mathscr{M}_{A_j, K}} \mathsf{T}(\langle \delta, \mathbf{F} \rangle)
\end{equation}
is then sufficient for \emph{all} treatments in~$A_j$ to be strongly inferior, because the closure of~$\mathscr{M}_{A_j, K}$ is contained in~$\mathscr{S}_{A_j, K}$, which is disjoint with~$\bigcup_{l \neq j} \mathscr{M}_{A_l, K}$.
%
In the special case~$\mathscr{M}_K = \mathscr{E}_K$, amounting to~$m = K$, the relation in~\eqref{eqn:maxsep} is actually equivalent to
\begin{equation}\label{eqn:maxsepE}
	\max_{i \in \mathcal{I}} \mathsf{T}(F^i) = \max_{l \neq j} \mathsf{T}(F^l) > \mathsf{T}(F^j).
\end{equation}

The policies introduced in the present section eliminate strongly inferior treatments based on verifying an empirical equivalent of~\eqref{eqn:maxsep}. Once the data firmly suggest~\eqref{eqn:maxsep},  all treatments in~$A_j$ are eliminated. Our policy carefully needs to avoid concluding~\eqref{eqn:maxsep} prematurely, in case this inequality is false. Whether or not the validity of~\eqref{eqn:maxsep} can actually be confirmed depends on the difference between the two maxima in~\eqref{eqn:maxsep} in relation to sample size. In case all treatments in~$A_1$ and all treatments in~$A_2$ are strongly inferior, it can for example happen that~\eqref{eqn:maxsep} can be concluded for~$j = 1$ at a much earlier stage than for~$j = 2$. Therefore, in a sequential policy, the set of indices over which~\eqref{eqn:maxsep} is assessed evolves during the sampling process, and depends on the previously eliminated treatments.

Before we proceed to the policies we suggest, we need some notation. Given a policy~$\pi$,~$n \in \N$ and~$t = 1, \hdots, n$, we denote the number of times treatment~$i$ has been assigned up to time~$t$ by
\begin{equation}\label{eqn:Sintdef}
	S_{i,n}(t):= \sum_{s = 1}^t \mathds{1}\{ \pi_{n,s}(Z_{s-1}, G_s) = i\}.
\end{equation}
On the event~$\{S_{i,n}(t) > 0\}$ we define the empirical cdf based on the outcomes of all subjects in~$\{1, \hdots, t\}$ that have been assigned to treatment~$i$
\begin{equation}\label{eq:Fhat}
	\hat{F}_{i, t, n}(z) := S^{-1}_{i,n}(t) \sum_{\substack{1 \leq s \leq t \\
			\pi_{n, s}(Z_{s-1}, G_s) = i
	}} \mathds{1}\{Y_{i, s} \leq z\}, \quad \text{ for every } z \in \R;
\end{equation}
we leave~$\hat{F}_{i, t, n}$ undefined on~$\{S_{i,n}(t) = 0\}$. Note that the random sampling times~$s$ such that~$\pi_{n, s}(Z_{s-1}, G_s) = i$ depend on previously observed outcomes. Finally, we define~$\hat{\mathbf{F}}_{t,n}=(\hat{F}_{1, t, n},\hdots,\hat{F}_{K, t, n})$ on the event~$\{S_{i,n}(t) > 0 \text{ for every } i = 1, \hdots, K\}$.

Equipped with this notation, we can now introduce and discuss sequential policies. To ease the exposition, we start with the simplest case~$\mathscr{M}_K = \mathscr{E}_K$, which is then extended to all~$\mathscr{M}_K$ as in Example~\ref{ex:compa}. Finally, we shall go beyond incompatibility constraints and consider a sequential policy, which can be applied without $\mathscr{M}_K$ possessing any particular structure (whether anything can be gained in practice compared to a non-sequential policy then depends on whether the set $\mathscr{M}_k$ allows for the existence of strongly inferior treatments). 

\subsection{The case~$\mathscr{M}_K=\mathscr{E}_K$}

We shall refer to the sequential policy introduced as the \emph{sequential elimination} (SE) policy~$\tilde{\pi}$. In the SE policy, the treatments are assigned in rounds. In every round~$r$, all treatments~$\mathcal{I}_{r-1} \subseteq \mathcal{I}$, say, that have not been eliminated in one of the previous rounds are assigned exactly once. Elimination is based on checking whether the data observed so far firmly suggest Equation~\eqref{eqn:maxsepE}. This is done as follows: at the end of round~$r$ we eliminate all treatments~$i \in \mathcal{I}_{r-1}$ for which
\begin{equation}\label{eq:FSEp}
	\max_{j \in \mathcal{I}_{r-1}} \mathsf{T}(\hat{F}_{j,t,n}) > \mathsf{T}(\hat{F}_{i,t,n}) + u_\eta(r,n);
\end{equation}
i.e., one defines~$\mathcal{I}_r$ as the subset of elements of~$\mathcal{I}_{r-1}$ that do not satisfy~\eqref{eq:FSEp}. Here~$u_\eta(r,n):\N^2\to[0,\infty)$ is a threshold function giving a ``critical value'' determining whether a treatment can be eliminated. We shall base our policies upon the threshold 
\begin{equation}
	\label{eqn:ueta}u_\eta(r,n)=C\sqrt{\frac{1+\eta}{r}\left[0.5\log(n)+\log(rK)\right]}, \text{ for a tuning parameter } \eta>0,
\end{equation}
and where~$C$ is the constant from~Assumption~\ref{as:MAIN}. Note that the more rounds~$r$ have been completed, the better the unknown cdfs can be estimated, and thus the more aggressively~$\tilde{\pi}$ eliminates treatments. 

The decision maker may not want to check~\eqref{eq:FSEp} and update~$\mathcal{I}_r$ after every single round. For example, subjects may arrive in batches. We allow for this by assuming that the decision maker a priori decides on a subset of ``elimination rounds"~$\mathcal{R}\subseteq\cbr[0]{1,\hdots,\lfloor \frac{n}{2}\rfloor}$, after which~\eqref{eq:FSEp} is assessed, and treatments are potentially eliminated. Note that there are at most~$\lfloor n/2 \rfloor$ rounds at the beginning of which there are at least two treatments left. We shall denote by~$\underline{r}:=\min \mathcal{R}$ the first round after which elimination takes place, and assume throughout that~$\underline{r} \leq \lfloor n/K \rfloor$, since otherwise an elimination round can never be reached. To provide some examples and motivation consider the following relevant special cases:
\begin{enumerate}
	\item $\mathcal{R}=\cbr[0]{1,\hdots,\lfloor \frac{n}{2}\rfloor}$: This amounts to checking after every round whether any of the remaining treatments can be eliminated.
	\item $\mathcal{R}=\cbr[0]{\underline{r},\hdots,\lfloor \frac{n}{2}\rfloor}$ for some $1 < \underline{r} \leq \lfloor n/K \rfloor$: This amounts to allowing for a ``burn-in'' phase before checking (after every subsequent round) whether treatments can be eliminated. This is relevant in case one does not want to eliminate any treatments based on very few assignments. 
	\item $\mathcal{R}=k\N\cap\cbr[0]{1,\hdots,\lfloor \frac{n}{2}\rfloor}$ for some~$k\in\cbr[0]{1,\hdots,\lfloor \frac{n}{K}\rfloor}$: This choice amounts to checking after every~$k$th round whether treatments can be eliminated. 
\end{enumerate}

A detailed description of the policy~$\tilde{\pi}$, concretizing the explanation above, is given in Policy \ref{pol:FSE_corners}. Here, we also need to take care of the possibility that after round~$r$ has been completed, there might be less subjects left than treatments in~$\mathcal{I}_r$, not allowing for a further complete round of assignments. This is why the policy is separated in an outer ``while''-loop and an outer if expression. As long as enough subjects are left to assign all remaining treatments once, the ``while''-loop proceeds in rounds as discussed above, where after each round in~$\mathcal{R}$ treatments may be eliminated. Once there are more treatments than subjects left, all remaining subjects (if there are any left) are assigned to a subset of the remaining treatments. This happens within the outer if expression. Finally, the recommendation is based on an empirical-success rule, where the minimum is taken as a concrete way of breaking ties. The policy does not use external randomization~$G_t$, which is therefore suppressed in the notation. 
\begin{policy}\label{pol:FSE_corners}
	\caption{Sequential Elimination Policy~$\tilde{\pi}$}
	
	\smallskip
	
	\textbf{Input:}~$n \in \N$,~$K \leq n$, set of elimination rounds~$\mathcal{R}\subseteq\cbr[0]{1,\hdots,\lfloor \frac{n}{2}\rfloor}$,~$\eta>0$\\
	
	\smallskip
	
	\textbf{Set:}~$t\leftarrow0$,~$r\leftarrow0$,~$\mathcal{I}_0 \leftarrow \mathcal{I}$
	
	\smallskip
	
	\While{$n-t\geq |\mathcal{I}_r|$}{
		\For{$i\in\mathcal{I}_r$}{
			$t\leftarrow t+1$ \\
			$\tilde{\pi}_{n,t}(Z_{t-1})=i$ 
		}
		$r\leftarrow r+1$\\
		\If{$r\in \mathcal{R}$}{
			$\mathcal{I}_{r}\leftarrow  \{j\in\mathcal{I}_{r-1}:\max_{i\in\mathcal{I}_{r-1}}\mathsf{T}(\hat{F}_{i,t,n})-\mathsf{T}(\hat{F}_{j,t,n})\leq u_\eta(r,n)\}$
		}
		\Else{$\mathcal{I}_r\leftarrow \mathcal{I}_{r-1}$}
	}
	\If{$t<n$}{
		Let~$(j_{1}, \hdots, j_{|\mathcal{I}_r|})$ be the elements of~$\mathcal{I}_r$, ordered from smallest to largest.
		
		\For{$i=1,\hdots,(n-t)$}{$\tilde{\pi}_{n, t+i}(Z_{t+i-1})= j_i$}
	}
	$\tilde{\pi}_{n,n+1}(Z_n)=\min\argmax \cbr[1]{\mathsf{T}(\hat{F}_{i,n,n}):i\in \mathcal{I}_{r}}$
\end{policy}

Most importantly, our theoretical results in the general case in the next section imply that Policy~\ref{pol:FSE_corners} is minimax regret optimal up to logarithmic factors. This follows from the upper bound on expected maximal regret established in Theorem \ref{thm:FSApartition} together with the lower bound established in Theorem~\ref{Thm: UniformLowermixed}. We abstain from formulating a theorem for the special case~$\mathscr{M}_K = \mathscr{E}_K$. 

In pure exploration problems targeting the mean and~$\mathscr{M}_K=\mathscr{E}_K$, sequential policies have been studied in~\cite{audibert2010best} and \cite{karnin2013almost}. Furthermore, \cite{tran2014functional} studied these policies for a family of quasi-convex functionals in the case of~$\mathscr{M}_K=\mathscr{E}_K$. The just mentioned policies, however, fix a set of elimination times upfront at which a pre-specified (nonzero) number of treatments \emph{must} be eliminated. Policy~\ref{pol:FSE_corners}, on the other hand, decides in a data-driven way if and when a treatment may be eliminated.

\subsection{General incompatibility constraints}\label{sec:FSEm}

We shall now extend the policy from~$\mathscr{M}_K = \mathscr{E}_K$ to the more general case where~$\mathscr{M}_K = \bigcup_{j = 1}^m \mathscr{M}_{A_j, K}$ for a partition~$A_1, \hdots, A_m$ of~$\mathcal{I}$, cf.~Example~\ref{ex:compa}. In this more general case, the policy proceeds in rounds in the same way as detailed in the previous section. Again, the decision maker needs to a priori specify a subset of ``elimination rounds''~$\mathcal{R}$, in which treatments can potentially be eliminated. Checking whether treatments can be eliminated is a bit more involved and explained in the following.

Recall from the discussion around~\eqref{eqn:maxsep} that the treatments in~$A_j$ are strongly inferior if 
\begin{equation}\label{eqn:wtc}
	\max_{\delta \in \bigcup_{l \neq j} \mathscr{M}_{A_l, K}} \mathsf{T}(\langle \delta, \mathbf{F} \rangle) > \max_{\delta \in  \mathscr{M}_{A_j, K}} \mathsf{T}(\langle \delta, \mathbf{F} \rangle).
\end{equation}
In order to empirically check whether this inequality holds (and whether the set of treatments~$A_j$ can be eliminated) we again rely on a discretization of~$\mathscr{M}_K$. To this end fix a discretization~$\mathscr{M}_K^n = \bigcup_{j =1}^m \mathscr{M}_{A_j, K}^n$ such that~$\emptyset \neq \mathscr{M}_{A_j, K}^n \subseteq \mathscr{M}_{A_j,K}$ for every~$j = 1, \hdots, m$.
%
Given~$\mathscr{M}_K^n$ of this form, our policy aims to verify whether Equation~\eqref{eqn:wtc} holds by checking its discretized version
\begin{equation}\label{eqn:elim2} 
	\max_{\delta \in \bigcup_{l \neq j} \mathscr{M}_{A_l, K}^n} \mathsf{T}(\langle \delta, \mathbf{F} \rangle) > \max_{\delta \in  \mathscr{M}_{A_j, K}^n} \mathsf{T}(\langle \delta, \mathbf{F} \rangle).
\end{equation}
Intuitively, we empirically check this condition in two steps: (i) for every~$j$ and in every elimination round we ``remove'' those elements~$\gamma \in \mathscr{M}_{A_j, K}^n$, for which the data firmly suggest~$\max_{\delta \in \bigcup_{l \neq j} \mathscr{M}_{A_l, K}^n} \mathsf{T}(\langle \delta, \mathbf{F} \rangle) >  \mathsf{T}(\langle \gamma, \mathbf{F} \rangle)$; and (ii) we eliminate all treatments in~$A_j$, once all~$\gamma \in \mathscr{M}_{A_j, K}^n$ have been~removed. As in the previous subsection, if a treatment is eliminated, it is no longer assigned, and does not contribute to the final recommendation. 

To make this two-step check more precise, we denote by~$\mathscr{M}_{A_j, K, r}^n$ the subset of elements of~$\mathscr{M}_{A_j, K}^n$, which have not been removed after~$r$ rounds. After each elimination round~$r \in \mathcal{R}$, we remove all~$\gamma \in \mathscr{M}_{A_j, K, r}^n$ for which
\begin{equation}\label{eq:FSEpmixing}
	\max_{\delta \in \bigcup_{l \neq j} \mathscr{M}_{A_l, K, r}^n} \mathsf{T}(\langle \delta, \hat{\mathbf{F}}_{t,n} \rangle) >  \mathsf{T}(\langle \gamma, \hat{\mathbf{F}}_{t,n} \rangle) + u_{\eta}(r, n),
\end{equation}
where~$u_\eta(r,n)$ is as defined in Equation~\eqref{eqn:ueta}, and~$\hat{\mathbf{F}}_{t,n}$ was defined in Equation~\eqref{eq:Fhat} (in case the set over which the maximum in the previous display is taken is empty, no removal takes place). If \emph{all} elements in~$\mathscr{M}_{K,A_j}^n$ are removed, we eliminate \emph{all} treatments in~$A_j$, and no longer assign them.

A detailed description of~$\tilde{\pi}$ (for general partitions) is given in Policy \ref{pol:FSE_partition}. The structure is the same as in Policy~\ref{pol:FSE_corners}, but elimination is based on the two-step procedure just explained. The recommendation is again based on an empirical-success rule. Furthermore, the policy does not depend on external randomization, which is therefore dropped from the notation. It is easy to verify that for~$m = K$, i.e.,~$\mathscr{M}_K = \mathscr{E}_K$, Policy~\ref{pol:FSE_partition} reduces to Policy~\ref{pol:FSE_corners}.  
\begin{policy}[t!]\label{pol:FSE_partition}
	\caption{Sequential Elimination~$\tilde{\pi}$}
	
	\smallskip
	
	\textbf{Input:}~$n \in \N$,~$K \leq n$, partition~$\cbr[0]{A_1,\hdots,A_m}$ of~$\mathcal{I}$ such that~$m\geq 2$, discretization~$\mathscr{M}_K^n = \bigcup_{j =1}^m \mathscr{M}_{A_j, K}^n$, such that~$\emptyset \neq \mathscr{M}_{A_j, K}^n \subseteq \mathscr{M}_{A_j,K}$ for every~$j = 1, \hdots, m$, set of elimination rounds~$\mathcal{R}$,~$\eta>0$
	
	\smallskip
	
	\textbf{Set:}~$t\leftarrow 0$,~$r\leftarrow0$,~$\mathcal{I}_0\leftarrow\mathcal{I}$,~$\mathcal{J}_0 \leftarrow \{1, \hdots, m\}$, 
	$\mathscr{M}_{A_j, K, 0}^{n} \leftarrow \mathscr{M}_{A_j, K}^n$ for~$j = 1, \hdots, m$.
	
	\smallskip
	
	\While{$n-t\geq |\mathcal{I}_r|$}{
		
		\For{$i\in\mathcal{I}_r$}{
			$t\leftarrow t+1$ \\
			$\tilde{\pi}_{n,t}(Z_{t-1})=i$ 
		}
		$r\leftarrow r+1$\\
		$\mathcal{I}_r\leftarrow \mathcal{I}_{r-1}$ \\
		$\mathcal{J}_r\leftarrow \mathcal{J}_{r-1}$ \\
		
		\For{$j \in \mathcal{J}_r$}{
			\If{$r\in\mathcal{R}$ and~$|\mathcal{J}_r| \geq 2$}{
				\smallskip
				
				$h \leftarrow \max\{\mathsf{T}(\langle \delta, \mathbf{\hat{F}}_{t,n}\rangle):  \delta\in \mathscr{M}_{A_l, K, r-1}^n, l \in \mathcal{J}_r \backslash \{j\}\}$ 
				
				\smallskip
				
				$\mathscr{M}_{A_j, K, r}^n \leftarrow  \cbr[1]{\gamma\in\mathscr{M}_{A_j, K, r-1}^n: h-\mathsf{T}(\langle \gamma, \mathbf{\hat{F}}_{t,n}\rangle)\leq u_\eta(r,n)}$
				
				\smallskip
				
				\If{$\mathscr{M}_{A_j, K, r}^n = \emptyset$}{$\mathcal{I}_r \leftarrow \mathcal{I}_r \backslash A_j$\\
					$\mathcal{J}_r \leftarrow \mathcal{J}_r \backslash \{j\}$ }
			}
			\Else{
				$\mathscr{M}_{A_j, K, r}^n \leftarrow \mathscr{M}_{A_j, K, r-1}^n$}
		}

	}
	
	\If{$t<n$}{
		Let~$(j_{1}, \hdots, j_{|\mathcal{I}_r|})$ be the elements of~$\mathcal{I}_r$, ordered from smallest to largest.
		
		\For{$i=1,\hdots,(n-t)$}{$\tilde{\pi}_{n, t+i}(Z_{t+i-1})= j_i$}
	}
	$\tilde{\pi}_{n,n+1}(Z_n)=\min\argmax \cbr[1]{\mathsf{T}(\langle\delta,\hat{\mathbf{F}}_{n,n}\rangle):\delta\in \bigcup_{j \in \mathcal{J}_r} \mathscr{M}^n_{A_j, K, r}}$
\end{policy}

Now that we have defined the policy in the general case, we are ready to state a theorem regarding its maximal expected regret. The sequential nature of the SE policy makes the analysis more involved than that of the SA policy, which leads to a slightly more complicated upper bound than the one from Theorem~\ref{thm:NMAPup}.
\begin{theorem}\label{thm:FSApartition}Suppose Assumptions \ref{as:dgp}, \ref{as:MAIN}, and \ref{as:MB} hold, and that~$\mathscr{M}_K$ is as in Example~\ref{ex:compa} and satisfies Assumption~\ref{as:M}. Then, the SE policy~$\tilde{\pi}$ as in Policy~\ref{pol:FSE_partition} satisfies
	\begin{align*}
		&\sup_{\substack{F^i \in \mathscr{D} \\ i = 1, \hdots, K }}\E \left[ r_n(\tilde{\pi}, \mathscr{M}_K) \right]
		\leq \eps(n) + 4C \sqrt{\frac{K}{n}} D_{n,K} + 4C\frac{K^{1-\frac{3\eta}{4+2\eta}} }{\sqrt{n}} \times (1+18.5\eta^{-1})^2, ~\forall n \geq 2K,
	\end{align*}
	where~$D_{n,K} :=  \sqrt{K\log\del[0]{17K}} \wedge  \sqrt{\log(3.01 \times K(n+2))}$.
\end{theorem}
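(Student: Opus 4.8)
The plan is to control the expected regret by splitting it according to the event on which the policy behaves ``well'', i.e.\ eliminates exactly the groups of treatments that are strongly inferior and keeps all the others, and to bound the contribution of the complement of this event. Concretely, I would first reduce to the discretized problem: using Assumption~\ref{as:MAIN} and convexity of~$\mathscr{D}$, replace~$\max_{\delta \in \mathscr{M}_K} \mathsf{T}(\langle \delta, \mathbf{F}\rangle)$ by~$\max_{\delta \in \mathscr{M}_K^n} \mathsf{T}(\langle \delta, \mathbf{F}\rangle)$ at the cost of the additive term~$\eps(n)$. After this reduction, the recommended~$\hat\delta = \tilde\pi_{n,n+1}(Z_n)$ lies in~$\mathscr{M}_K^n$, and the remaining regret is~$\max_{\delta \in \mathscr{M}_K^n}\mathsf{T}(\langle\delta,\mathbf{F}\rangle) - \mathsf{T}(\langle\hat\delta,\mathbf{F}\rangle)$, which I would bound by a uniform-concentration argument: on the event~$\Omega_\beta$ that every surviving treatment~$i$ has been sampled at least~$\beta$ times and all empirical cdfs~$\hat F_{i,n,n}$ are within~$\epsilon$ of~$F^i$ in sup-norm (uniformly over the finitely many~$\delta \in \mathscr{M}_K^n$), the empirical-success rule incurs regret at most a constant multiple of~$C\epsilon$, plus whatever error comes from groups that were (wrongly) eliminated. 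The key device here is the same chaining/union-bound over~$\mathscr{M}_K^n$ and the DKW-type inequality already exploited for Theorem~\ref{thm:NMAPup}, so the first summand~$4C\sqrt{K/n}\,D_{n,K}$ should emerge from the ``good'' event where each treatment gets roughly~$n/K$ samples, with~$D_{n,K}$ the cost of the union bound over~$\mathscr{M}_K^n$ (the~$\sqrt{K\log(17K)}$ branch bounds the log-cardinality of a simplex grid, and the~$\sqrt{\log(3.01K(n+2))}$ branch is the cruder bound used when~$K$ is large).

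The heart of the argument is the bad event: the policy might eliminate a group~$A_j$ whose treatments are \emph{not} strongly inferior, in which case the recommendation can be far from optimal, contributing up to~$2C$ (the diameter of the range of~$\mathsf{T}$ on~$\mathscr{D}$, via Assumption~\ref{as:MAIN}) to the regret. I would bound the probability of ``wrongful elimination'' by a union bound over rounds~$r \in \mathcal{R}$, groups~$j$, and grid points~$\gamma \in \mathscr{M}_{A_j,K}^n$: a wrongful removal of~$\gamma$ at round~$r$ requires that some empirical functional deviates from its population counterpart by more than roughly~$u_\eta(r,n)/2$, while each surviving treatment has been sampled exactly~$r$ times by the end of round~$r$ (before any elimination in that round). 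Plugging the DKW bound with~$r$ observations against the threshold~$u_\eta(r,n) = C\sqrt{\tfrac{1+\eta}{r}[0.5\log n + \log(rK)]}$ yields a per-event failure probability of order~$\exp(-(1+\eta)[0.5\log n + \log(rK)]) = n^{-(1+\eta)}(rK)^{-2(1+\eta)}$, and summing over~$r \le \lfloor n/2\rfloor$, over the~$m \le K$ groups, and over the at most~$|\mathscr{M}_K^n|$ grid points in each group gives a total failure probability that, after multiplication by the~$2C$ worst-case loss, collapses to something of the form~$C n^{-1/2} K^{1-\frac{3\eta}{4+2\eta}}(1+18.5\eta^{-1})^2$ — the exponent~$1 - \frac{3\eta}{4+2\eta}$ reflecting the trade-off between the polynomial growth of~$|\mathscr{M}_K^n|$ in~$K$ (needed to make~$\eps(n) \lesssim \sqrt{K/n}$) and the polynomial decay~$n^{-(1+\eta)}$ from the threshold, and the factor~$(1+18.5\eta^{-1})^2$ coming from summing the geometric-type series in~$r$ and the per-round constants.

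I would organize the bound as: (i) regret~$\le \eps(n) + (\text{regret on the good event}) + 2C\,\P(\text{bad event})$; (ii) on the good event, further split into the case where the optimal group survived (then the empirical-success rule over the surviving grid points is within~$O(C\sqrt{K/n}\,D_{n,K})$ of optimal, exactly as in Theorem~\ref{thm:NMAPup} but now with possibly fewer than~$K$ active treatments, which only helps) and use that a strongly-inferior group being eliminated does not hurt because its grid points were suboptimal anyway — here one needs that the policy never eliminates \emph{all} groups, which follows because~\eqref{eq:FSEpmixing} is only checked while~$|\mathcal{J}_r|\ge 2$ and uses the leave-one-out maximum~$h$; (iii) the bad-event probability bound described above. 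The main obstacle I anticipate is (iii): carefully defining the ``bad event'' as a union over rounds so that at the round of first wrongful elimination the sample sizes are \emph{deterministically} equal to~$r$ for every not-yet-eliminated treatment (this determinism is what makes the DKW bound applicable with a fixed sample size rather than a random one), and then bookkeeping the union bound so that the~$\log(rK)$ inside~$u_\eta$ exactly cancels the cardinality and round counts — in particular tracking how the grid cardinality~$|\mathscr{M}_{A_j,K}^n| \le (\lceil \rho_n^{-1}\rceil + 1)^{|A_j|}$ enters, and choosing the bookkeeping so that the final exponent of~$K$ comes out as~$1 - \frac{3\eta}{4+2\eta}$ rather than something larger. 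A secondary technical point is handling the ``leftover'' subjects assigned in the final if-block when~$n - t < |\mathcal{I}_r|$: these contribute at most a bounded number of under-sampled treatments and can be absorbed into the good-event analysis by noting~$n \ge 2K$ guarantees at least one full elimination round and that~$\beta_n \gtrsim n/K$ for the treatments that matter.
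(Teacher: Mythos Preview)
Your overall decomposition (good event / bad event, reduce to the discretization at cost~$\eps(n)$) matches the paper, but there are two genuine gaps that would prevent you from landing on the stated bound.

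\medskip

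\textbf{Gap 1: the grid cardinality does \emph{not} enter.} You plan to union-bound the bad event over rounds, groups, \emph{and} grid points~$\gamma \in \mathscr{M}_{A_j,K}^n$, and you flag ``tracking how the grid cardinality enters'' as the main obstacle. In the paper's argument the grid cardinality never appears. The bad event is defined as ``a \emph{fixed} maximizer~$\delta^* \in \argmax_{\delta \in \mathscr{M}_K^n} \mathsf{T}(\langle \delta, \mathbf{F}\rangle)$ has been removed''. If~$\delta^*$ is removed in round~$r$, some~$\delta_*$ in another group satisfies~$\mathsf{T}(\langle\delta_*,\hat{\mathbf{F}}\rangle) - \mathsf{T}(\langle\delta^*,\hat{\mathbf{F}}\rangle) > u_\eta(r,n)$; since~$\mathsf{T}(\langle\delta^*,\mathbf{F}\rangle) \geq \mathsf{T}(\langle\delta_*,\mathbf{F}\rangle)$ and by the Lipschitz Assumption~\ref{as:MAIN}, this forces
\[
\max_{i \in A_{l_*}} \|\hat{F}_{i,\cdot} - F^i\|_\infty + \max_{i \in A_{i^*}} \|\hat{F}_{i,\cdot} - F^i\|_\infty > u_\eta(r,n)/C,
\]
which depends only on the \emph{group} of~$\delta_*$, not on~$\delta_*$ itself. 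So the union is over rounds and over the~$m \leq K$ groups only, and Corollary~\ref{cor:summax_DKW} (a DKW-type bound for a sum of two independent maxima) gives the~$C_{n,K}$ term directly. Your proposed union over grid points would multiply by~$|\mathscr{M}_K^n|$, which is polynomial in~$n$ and can be exponential in~$K$; the resulting bound would not match the theorem.

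Relatedly, your interpretation of~$D_{n,K}$ as ``the cost of the union bound over~$\mathscr{M}_K^n$'' is incorrect: on the good event the regret is bounded by~$2C \max_{i \in \mathcal{I}_{\bar r}} \|\hat{F}_{i,n,n} - F^i\|_\infty$ (again via Lipschitz, uniformly in~$\delta$), a maximum over at most~$K$ \emph{treatments}, not grid points.

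\medskip

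\textbf{Gap 2: random sample sizes.} You assume ``each treatment gets roughly~$n/K$ samples'' on the good event and apply DKW with a fixed sample size. But~$S_i(n)$ is random and depends on the data through past eliminations, so~$\hat{F}_{i,n,n}$ is \emph{not} an empirical cdf of i.i.d.\ observations at a deterministic sample size. The paper handles this in two steps. First, an optional-sampling argument (Lemma~\ref{lem:OptionalSkipping}, Parts~1--2) shows that if one re-indexes outcomes by their order of assignment within each arm, the resulting sequences are i.i.d.\ and independent across arms; hence on~$\{S_i(n)=s\}$ the cdf~$\hat{F}_{i,n,n}$ coincides with an honest empirical cdf~$\check{F}_{i,s}$ based on~$s$ i.i.d.\ observations. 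Second, on the good event with~$|\mathcal{I}_{\bar r}| \geq 2$, the surviving arms have~$S_i(n)$ in the \emph{deterministic} range~$[\lfloor n/K\rfloor, R]$ with~$R = \underline{r} + \lceil (n-\underline{r}K)/2\rceil$, so one needs~$\max_{s \in [\lfloor n/K\rfloor, R]} \|\check{F}_{i,s} - F^i\|_\infty$. This is controlled either by a Doob-type maximal inequality for~$s\|\check{F}_{i,s}-F^i\|_\infty$ (Lemma~\ref{lem:maxDKW}, yielding~$A_{n,K}$ and the~$\sqrt{K\log(17K)}$ branch of~$D_{n,K}$) or by a union bound over the~$O(n)$ values of~$s$ in the range (yielding~$B_{n,K}$ and the~$\sqrt{\log(3.01K(n+2))}$ branch). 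This is where~$D_{n,K}$ actually comes from, and without the optional-sampling step you cannot legally invoke DKW at all.
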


The upper bound on maximal expected regret is a sum of three terms. The first term is the optimization error resulting from targeting the best recommendation in~$\mathscr{M}_K^n$, rather than the best recommendation in~$\mathscr{M}_K$. Inspection of the proof shows that the second term is an upper bound on the expected regret on the event where not all maximizers of~$\delta\mapsto\mathsf{T}(\langle \delta, \mathbf{F}\rangle)$ over~$\mathscr{M}_K^n$ have been eliminated after~$n$ assignments, and that the last term upper bounds the probability that all maximizers of~$\delta\mapsto\mathsf{T}(\langle \delta, \mathbf{F}\rangle)$ over~$\mathscr{M}_K^n$ are eliminated by~$\tilde{\pi}$ after~$n$ assignments. We note that in Theorem~\ref{thm:FSApartitionstg} in Appendix~\ref{sec:SEP} we prove a slightly more general result, from which Theorem~\ref{thm:FSApartition} follows. The generality is bought at the expense of more complicated notation, but it provides a stronger bound valid for all~$n > K$, which shows in more detail how the choice of~$\underline{r}$ affects the upper bound. 

For~$K$ fixed and any choice of~$\eta > 0$, the bound in Theorem \ref{thm:FSApartition} achieves the minimax optimal dependence in~$n$ if one chooses a discretization with optimization error~$\eps(n)$ of no larger order than~$n^{-1/2}$ (cf.~the discussion after Theorem \ref{Thm: UniformLowermixed}). For regimes where~$K$ grows with~$n$, the dependence of the upper bound in~Theorem \ref{thm:FSApartition} on~$n$ and~$K$ is optimal (up to a factor of order~$\sqrt{\log(nK)}$) as long as~$\eta \geq 1$. 

Although the policy~$\tilde{\pi}$ cannot improve on~$\hat{\pi}$ in terms of \emph{worst-case} expected regret (and there is not much room for improvement given our lower bound results), there are many empirically relevant potential outcome distributions for which the policy~$\tilde{\pi}$ is more efficient than~$\hat{\pi}$. 
 A prominent example of such distributions is the following:
\begin{example}
 For simplicity, assume that the functional of interest~$\mathsf{T}$ is quasi-convex and that~$\mathscr{M}_K=\mathscr{E}_K$ such that no mixtures need to be studied. Consider~$K>2$ potential outcome distributions~$F^1,\hdots,F^K$ for which it holds that
	\begin{align*}
		\mathsf{T}(F^1)\approx \mathsf{T}(F^2)\qquad\text{and}\qquad 
		\mathsf{T}(F^3)=\hdots =\mathsf{T}(F^K)\ll \mathsf{T}(F^1),
	\end{align*}
	that is there are two top treatments (Treatments 1 and 2) that are approximately equally good and~$K-2$ very inferior treatments. Instead of assigning all~$K$ treatments~$n/K$ times as in the basic version of the static policy (assuming here that~$n/K$ is an integer), it is often possible to identify that Treatments~$3$ to~$K$ are inferior based on much fewer than~$n/K$ assignments to each of these. This allows one to assign Treatments 1 and 2 more often, which is beneficial since these are the two real candidates for the best treatment. In particular if~$K$ is large this can lead to better recommendations as one can sample much more often from the two distributions that are difficult to distinguish than in the static policy. Indeed, this is the idea underlying the third simulation setting in Section~\ref{sec:Ginisim} where the outcome distribution with maximal Gini-welfare is targeted. There, it is illustrated numerically that in such a setting the sequential elimination policy can result in much lower regret than the static assignment policy.
\end{example}

\subsection{Beyond incompatibility constraints}\label{sec:beyond}

So far, we have considered the case where~$\mathscr{M}_K$ satisfies an incompatibility constraint, which we exploited in the construction of the sequential policy. As our final investigation, we now ask whether a sequential policy with optimal expected regret guarantees can also be designed without exploiting that type of constraint. Recall that we have already argued that strongly inferior treatments should exist in order that a sequential policy can eliminate treatments. Because~$\mathscr{M}_K$ is known to the decision maker, this can be checked before deciding on which policy to use. 

\begin{example}
One example where strongly inferior treatments can exist, without an incompatibility constraint being satisfied, are capacity constrained sets~$\mathscr{M}_K$, where, for some index~$i$, it holds for every~$\delta \in \mathscr{M}_K$ that either~$\delta_i = 0$ or~$\delta_i \geq \epsilon$ for some~$\epsilon \in (0, 1)$. This corresponds to ``entry cost" restrictions, where treatment~$i$ can only be rolled out if a certain minimal proportion~$\epsilon$ is assigned to that treatment. Then, if all optimal mixture vectors are such that their~$i$-th coordinate is~$0$, treatment~$i$ is clearly strongly inferior. An illustration of such a situation in comparison to an incompatibility constraint is given in Figure~\ref{fig:exe}, where there are three treatments. The set~$\mathscr{M}_K$ is colored red, whereas its complement in the simplex (i.e., those weights that are ruled out) is shown in gray. In Part~(A) of that figure, we show a situation corresponding to an incompatibility constraint, where Treatments~$1$ and~$3$ are elements of the same group and Treatment~$2$ constitutes the second group. In Part~(B) of that figure we show an ``entry cost'' constraint on Treatment~$2$. Note that in~(A) Treatments~1 and~3, or Treatment~2 can be strongly inferior treatments, whereas in~(B), only Treatment~2 can be strongly inferior.
\begin{figure}
	\centering
	\includegraphics[width=0.7\linewidth]{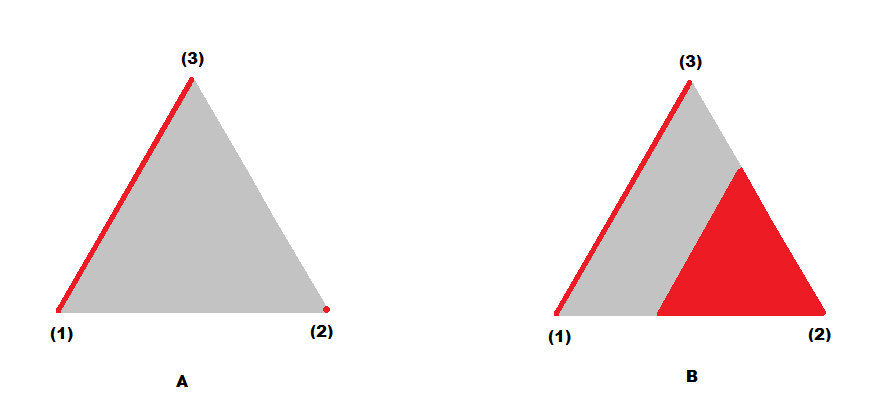}
	\caption{Incompatibility constraint (A) vs.~an entry cost constraint (B).}
	\label{fig:exe}
\end{figure}
\end{example}

The policy we propose proceeds in rounds (analogous to the sequential policies already discussed). In every round, treatments that have not been eliminated in previous rounds are assigned once. The decision maker needs to a priori specify a subset of ``elimination rounds''~$\mathcal{R}$, in which treatments can potentially be eliminated. Fix a discretization~$\mathscr{M}_{K}^n$ of~$\mathscr{M}_K$, which is updated after each round, starting with~$\mathscr{M}_{K, 0}^n = \mathscr{M}_{K}^n$. In each round~$r$ our policy determines~$\mathscr{M}_{K, r}^n$ by dropping from that set all~$\gamma \in \mathscr{M}_{K, r-1}^n$ for which
\begin{equation}\label{eqn:elim21} 
\max_{\delta \in \mathscr{M}_{K, r-1}^n} \mathsf{T}(\langle \delta, \mathbf{F} \rangle) > \mathsf{T}(\langle \gamma, \mathbf{F} \rangle) + 2u_{\eta}(r, n),
\end{equation}
cf.~Equation~\eqref{eqn:ueta} where~$u_{\eta}(r, n)$ was introduced. If~$\mathscr{M}_{K, r}^n$ does not contain a single weights vector with positive~$i$-th coordinate, treatment~$i$ is eliminated, i.e., is no longer assigned in subsequent rounds. Note that compared to the previously introduced sequential policies (that were tailored to incompatibility constraints), the elimination threshold~$2u_{\eta}(r, n)$ is now \emph{more conservative}, because we can no longer exploit any particular structure of~$\mathscr{M}_K$. This is the price we have to pay for obtaining a policy that can be applied under weaker structural assumptions on~$\mathscr{M}_K$. 

A detailed description of~$\check{\pi}$ is given in Policy \ref{pol:FSE_seq}. The recommendation is based on an empirical-success rule and does not depend on external randomization, which is therefore dropped from the notation. Note also that one can only eliminate treatments, if the user chooses the discretization~$\mathscr{M}_K^n \subseteq \mathscr{M}_K$ in such a way that it contains ``sparse'' elements (i.e., vectors with coordinates equaling~$0$), because otherwise the policy will never eliminate treatments. 
\begin{policy}[t!]\label{pol:FSE_seq}
\caption{Sequential Elimination~$\check{\pi}$}

\smallskip

\textbf{Input:}~$n \in \N$,~$K \leq n$, discretization~$\mathscr{M}_K^n$, set of elimination rounds~$\mathcal{R}$,~$\eta>0$

\smallskip

\textbf{Set:}~$t\leftarrow 0$,~$r\leftarrow0$,~$\mathcal{I}_0\leftarrow\mathcal{I}$,
$\mathscr{M}_{K, 0}^{n} \leftarrow \mathscr{M}_{K}^n$.

\smallskip

\While{$n-t\geq |\mathcal{I}_r|$}{

\For{$i\in\mathcal{I}_r$}{
$t\leftarrow t+1$ \\
$\check{\pi}_{n,t}(Z_{t-1})=i$ 
}
$r\leftarrow r+1$\\
$\mathcal{I}_r\leftarrow \mathcal{I}_{r-1}$ \\

\If{$r\in\mathcal{R}$ and~$|\mathcal{I}_r| \geq 2$}{
\smallskip
$h \leftarrow \max\{\mathsf{T}(\langle \delta, \hat{\mathbf{F}}_{t,n} \rangle): \delta \in \mathscr{M}_{K, r-1}^n\}$ \\
$\mathscr{M}_{K, r}^n \leftarrow \{\gamma \in \mathscr{M}_{K, r-1}^n: h - \mathsf{T}(\langle \gamma, \hat{\mathbf{F}}_{t,n} \rangle) \leq 2 u_{\eta}(r, n)\}$ \\
$\mathcal{I}_r \leftarrow \{i \in \mathcal{I}_r: \text{there exists a } \gamma \in \mathscr{M}_{K, r}^n \text{ with } \gamma_i > 0\}$
}

}

\If{$t<n$}{
Let~$(j_{1}, \hdots, j_{|\mathcal{I}_r|})$ be the elements of~$\mathcal{I}_r$, ordered from smallest to largest.

\For{$i=1,\hdots,(n-t)$}{$\check{\pi}_{n, t+i}(Z_{t+i-1})= j_i$}
}
$\check{\pi}_{n,n+1}(Z_n)=\min\argmax \cbr[1]{\mathsf{T}(\langle\delta,\hat{\mathbf{F}}_{n,n}\rangle):\delta\in \mathscr{M}_{K, r}^n}$
\end{policy}

\begin{theorem}\label{thm:FSAcheck}Suppose Assumptions \ref{as:dgp}, \ref{as:MAIN}, and \ref{as:MB} hold, and that~$\mathscr{M}_K$ satisfies Assumption~\ref{as:M}. Then, the SE policy~$\check{\pi}$ as in Policy~\ref{pol:FSE_seq} satisfies the same regret upper bound as the policy~$\tilde{\pi}$ in Theorem~\ref{thm:FSApartition}.
\end{theorem}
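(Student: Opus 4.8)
The plan is to mirror the architecture of the proof of Theorem~\ref{thm:FSApartition} (more precisely, of its sharper appendix version Theorem~\ref{thm:FSApartitionstg}), the only genuinely new point being that the incompatibility structure, which there certifies that removed discretization points are ``correctly'' removed, is replaced by the more conservative threshold $2u_\eta(r,n)$ appearing in Policy~\ref{pol:FSE_seq}. Fix $\mathbf{F}=(F^1,\dots,F^K)\in\mathscr{D}^K$ and let $\delta^*\in\argmax_{\delta\in\mathscr{M}_K^n}\mathsf{T}(\langle\delta,\mathbf{F}\rangle)$ be a maximizer over the discretization. Writing $\hat{\delta}=\check{\pi}_{n,n+1}(Z_n)$ for the recommendation, I would decompose
\[
r_n(\check{\pi},\mathscr{M}_K)=\Big(\max_{\delta\in\mathscr{M}_K}\mathsf{T}(\langle\delta,\mathbf{F}\rangle)-\mathsf{T}(\langle\delta^*,\mathbf{F}\rangle)\Big)+\Big(\mathsf{T}(\langle\delta^*,\mathbf{F}\rangle)-\mathsf{T}(\langle\hat{\delta},\mathbf{F}\rangle)\Big),
\]
whose first term is $\le\varepsilon(n)$ by definition of the optimization error in~\eqref{eqn:opte}. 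Let $\mathcal{G}$ be the event that $\delta^*$ is never removed by $\check{\pi}$, i.e.\ $\delta^*\in\mathscr{M}_{K,r}^n$ for every round $r$ attained. Since $\mathsf{T}$ is Lipschitz (Assumption~\ref{as:MAIN}) and any two cdfs differ by at most $1$ in sup-norm, the second term above is always $\le C$, so it suffices to bound the expectation of the second term on $\mathcal{G}$ and $C\,\P(\mathcal{G}^c)$ separately. Measurability of $\check{\pi}$ and of $r_n(\check{\pi},\mathscr{M}_K)$ under Assumption~\ref{as:MB} is obtained exactly as in the proof of Theorem~\ref{thm:NMAPup}.

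On $\mathcal{G}$ the surviving discretization $\mathscr{M}_{K,r}^n$ contains $\delta^*$ at every round, hence every treatment in $\supp(\delta^*)$ survives all elimination steps; moreover $\hat{\delta}$ is the lexicographically smallest empirical-success maximizer over the final surviving set, so $\supp(\hat{\delta})$ also consists of surviving treatments and $\mathsf{T}(\langle\hat{\delta},\hat{\mathbf{F}}_{n,n}\rangle)\ge\mathsf{T}(\langle\delta^*,\hat{\mathbf{F}}_{n,n}\rangle)$. The standard empirical-success comparison, together with Assumption~\ref{as:MAIN} and $\|\langle\delta,\mathbf{F}\rangle-\langle\delta,\hat{\mathbf{F}}_{n,n}\rangle\|_\infty\le\max_{i\in\supp(\delta)}\|F^i-\hat{F}_{i,n,n}\|_\infty$, then gives on $\mathcal{G}$
\[
\mathsf{T}(\langle\delta^*,\mathbf{F}\rangle)-\mathsf{T}(\langle\hat{\delta},\mathbf{F}\rangle)\le 2C\max_{i\in\mathcal{I}_{\mathrm{fin}}}\|F^i-\hat{F}_{i,n,n}\|_\infty,
\]
where $\mathcal{I}_{\mathrm{fin}}$ is the set of treatments still active at termination. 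Tracking the while-loop/if-clause bookkeeping of Policy~\ref{pol:FSE_seq} exactly as for $\tilde{\pi}$ shows that every active treatment is assigned in each completed round and hence has at least $\lfloor n/K\rfloor$ observations by time $n$ (eliminations only increase this count for survivors); feeding this into the maximal inequality for the sequential empirical cdfs used in the proof of Theorem~\ref{thm:FSApartition} (a DKW/peeling bound uniform over the data-dependent number of observations and over $i\le K$) yields the expectation of the left-hand side on $\mathcal{G}$ bounded by $4C\sqrt{K/n}\,D_{n,K}$, with $D_{n,K}$ as in the statement.

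For $\mathcal{G}^c$: if $\delta^*$ is removed in some round $r\in\mathcal{R}$, then there is a $\delta\in\mathscr{M}_{K,r-1}^n$ with $\mathsf{T}(\langle\delta,\hat{\mathbf{F}}_{t,n}\rangle)>\mathsf{T}(\langle\delta^*,\hat{\mathbf{F}}_{t,n}\rangle)+2u_\eta(r,n)$, and at that point each treatment in $\supp(\delta)\cup\supp(\delta^*)$ has been assigned exactly $r$ times. Combining with $\mathsf{T}(\langle\delta,\mathbf{F}\rangle)\le\mathsf{T}(\langle\delta^*,\mathbf{F}\rangle)$ and Assumption~\ref{as:MAIN} forces $\|F^i-\hat{F}_{i,t,n}\|_\infty>u_\eta(r,n)/C$ for some treatment $i$ with exactly $r$ observations. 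A union bound over $r\in\mathcal{R}\subseteq\{1,\dots,\lfloor n/2\rfloor\}$ and $i\le K$, combined with the DKW inequality and the explicit form~\eqref{eqn:ueta} of $u_\eta$, bounds $\P(\mathcal{G}^c)$; because the removal threshold here is $2u_\eta$ rather than the $u_\eta$ of Policy~\ref{pol:FSE_partition}, the exponent in each DKW term is four times larger, so the resulting bound on $\P(\mathcal{G}^c)$ is no larger than the corresponding quantity in the proof of Theorem~\ref{thm:FSApartition}, and multiplying by $C$ it is absorbed into the third term $4C K^{1-3\eta/(4+2\eta)}n^{-1/2}(1+18.5\eta^{-1})^2$.

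The main obstacle is not a single inequality but the bookkeeping: one must verify carefully, from the precise definition of Policy~\ref{pol:FSE_seq} (the interplay of the ``while'' loop, the terminal ``if $t<n$'' allocation, and the rule that a treatment is dropped only once no surviving discretization point loads on it), that on $\mathcal{G}$ all treatments feeding into $\delta^*$ and into $\hat{\delta}$ are observed $\Omega(n/K)$ times, and that in a removal round the treatments entering the relevant comparison have a deterministic, known number of observations so that DKW is applicable despite the sequential sampling. Both issues are handled by the same tools already developed for $\tilde{\pi}$; the content of the theorem is precisely that discarding the incompatibility structure costs nothing beyond replacing $u_\eta$ by $2u_\eta$, which only helps in the $\mathcal{G}^c$ estimate and is neutral in the $\mathcal{G}$ estimate.
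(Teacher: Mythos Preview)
Your proposal is correct and follows essentially the same architecture as the paper's proof: decompose via $\varepsilon(n)$, the good event $\mathcal{G}$ where a discretized optimizer $\delta^*$ survives, and the complementary event; on $\mathcal{G}$ reduce to $2C\max_{i\in\mathcal{I}_{\mathrm{fin}}}\|\hat{F}_{i,n,n}-F^i\|_\infty$ and invoke the same optional-sampling/maximal-inequality machinery (Lemma~\ref{lem:OptionalSkipping} Parts~1--3, restated for $\check{\pi}$); on $\mathcal{G}^c$ use that removal of $\delta^*$ in round $r$ with threshold $2u_\eta(r,n)$ forces $\max_i\|\check{F}_{i,r}-F^i\|_\infty>u_\eta(r,n)/C$.

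The only noteworthy difference is in how you close the $\mathcal{G}^c$ estimate. You appeal to a union bound over $i$ plus DKW and then argue that, because of the doubled threshold, the resulting exponent is four times that arising in the $\tilde{\pi}$ analysis, so your probability bound is dominated by the third term of Theorem~\ref{thm:FSApartition}. This is valid. The paper instead avoids any comparison: it writes $\max_{i\in\{1,\dots,K\}}\|\check{F}_{i,r}-F^i\|_\infty>u_\eta(r,n)/C$ as the degenerate case $A=\{1,\dots,K\}$, $A^c=\emptyset$ of Corollary~\ref{cor:summax_DKW} (with the convention that a maximum over the empty set is $0$), obtaining \emph{literally} the expression in Equation~\eqref{eqn:argrefapp3} from the $\tilde{\pi}$ proof. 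Your route gives a sharper raw bound on $\P(\mathcal{G}^c)$ but requires the extra ``absorbed into'' step; the paper's route is a one-line reduction to the already-computed $C_{n,K}$. Both establish the theorem. One small expository point: you should separate out the case $|\mathcal{I}_{\mathrm{fin}}|=1$ on $\mathcal{G}$ (where the regret is trivially $\le\varepsilon(n)$), since only for $|\mathcal{I}_{\mathrm{fin}}|\ge 2$ do surviving treatments satisfy the two-sided bound $\lfloor n/K\rfloor\le S_i(n)\le \underline{r}+\lceil(n-\underline{r}K)/2\rceil$ needed for the maximal inequality.
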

One advantage of the policy~$\check{\pi}$ over~$\tilde{\pi}$ is that it can be run and has optimality guarantees regardless of whether~$\mathscr{M}_K$ is incompatibility constrained or not, because~$\check{\pi}$ does not exploit any structural properties of~$\mathscr{M}_K$. This is achieved through a more conservative constant~$2u_{\eta}(r, n)$.

\section{Numerical results}\label{sec:num}
We now illustrate the theoretical results established in this article by means of numerical examples. Simulations will be conducted for~$\mathsf{T}$ the Gini-welfare measure introduced in Example~\ref{ex:gini} with~$\mathscr{M}_K = \mathscr{E}_K$, for the mean functional with~$\mathscr{M}_K$ incorporating different types of capacity and similarity constraints, and for the headcount ratio discussed in Example~\ref{eqn:pov}, a widely used poverty measure.
In all cases we study the expected regret of the static assignment policy~$\hat{\pi}$ and, for those types of~$\mathscr{M}_K$ as in Example~\ref{ex:compa}, we also study the performance of the sequential elimination policy~$\tilde{\pi}$. For simplicity in the simulations, the SA policy is implemented assigning the~$K$ treatments available cyclically in the course of the experimentation phase (i.e., with the partition $\Pi_{n,i}=\cup_{s=0}^\infty \cbr[0]{i+Ks}\cap\cbr[0]{1,\hdots,n}$), which leads to balanced groups. Note that considering cyclical assignments is without loss of generality here, as the performance of the SA policy only depends on the partition via the number of subjects assigned to every treatment. The SE policy is implemented with
$\eta=0.0001$, and with elimination rounds $\mathcal{R}=\cbr[0]{1,\hdots,\lfloor n/2\rfloor}$, i.e., allowing for elimination after each round.\footnote{Experiments (not reported) yielded that lower values of~$\eta>0$, i.e.,~more aggressive elimination of treatments, resulted in lower regret for the SE policy which we therefore recommend for practice.}

We study the expected regret of the policies for all~$n$ that are integer multiples of~$100$ between~$100$ and~$5{,}000$ and approximate the expected regret by the average regret (i.e., the arithmetic mean) over~$5{,}000$ replications for each sample size.

A brief summary of the findings is as follows. The SA and SE policy both generally incur low average regret, confirming our theoretical optimality results for these policies. For sufficiently small sample sizes, the recommendations of the two policies are identical, as no treatments are eliminated by the SE policy in this case. However, for larger samples sizes, the SE policy eliminates clearly suboptimal treatments and thus incurs a lower average regret. In this sense, our numerical results indicate that in terms of regret there is nothing lost from using the SE policy for such~$\mathscr{M}_K$, as it always performs at least as well as the SA policy and sometimes strictly better. The SE policy is, however, computationally more burdensome, as it must compare the values of the functional after every elimination round. Note, however, that the number of elimination rounds can be chosen by the user to lighten the numerical cost of using the SE policy.

\subsection{Gini-welfare}\label{sec:Ginisim}
Recall the Gini-welfare measure from Example~\ref{ex:compa}. There, it is shown that this functional satisfies Assumption~\ref{as:MAIN} with $\mathscr{D}=D_{cdf}([0,1])$ and $C=2$, and is quasi-convex. We know from the discussion in Section~\ref{sec:qconv} that any~$\mathscr{M}_K$ can thus without loss of generality be replaced by the set of its extreme points. In situations where~$\mathscr{M}_K$ contains~$\mathscr{E}_K$, it therefore suffices to work with~$\mathscr{E}_K$, the case we focus on here (and which allows an application of the SE policy). We consider~$K=5$ treatments, and where~$Y_{i,t}$ for~$i = 1, \hdots, K$ are (independently) distributed with~$F^i$ the cdf of a Beta-distribution with parameters~$\alpha_i$ and~$1$. Three settings of parameters are considered.
\begin{enumerate}
	\item \emph{Approximately equal Gini-welfare}: $\alpha_1=5$ and $\alpha_i=4.5$ for $i=2,\hdots,5$. These distributions have a Gini welfare of $0.76$ and $0.74$, respectively.  
	\item \emph{Approximately equal Gini-welfare within three classes}: i) $\alpha_1=5$ and $\alpha_2=4.5$, ii) $\alpha_3=\alpha_4=2$ and iii) $\alpha_5=1$. These distributions have a Gini-welfare of  $0.76,\ 0.74,\ 0.53$ and $0.33$, respectively.
	\item \emph{Two strong treatments and three inferior ones}: $\alpha_1=5,\ \alpha_2=4.5$ and $\alpha_i=1$ for $i=3,\hdots,5$. These distributions have a Gini-welfare of $0.76,\ 0.74$ and $0.33$, respectively. 
\end{enumerate}
Before we proceed to the results, we note that since~$\mathscr{E}_K$ is finite, we used the discretization~$\mathscr{M}_K^n = \mathscr{E}_K$ in both policies.
\subsubsection{Results}
The numerical results for the Gini-welfare measure are summarized in Figure \ref{Fig:1}. The top two panels in Figure \ref{Fig:1} contain the results for Setting 1 where all treatments are approximately equally good. The Gini welfares of the five treatments are so close to each other that no treatment is ever eliminated by the SE policy for any of the sample sizes considered. Thus, the SA and SE policies are identical and incur the same average regret as witnessed by the left and right panel alike. From the left panel it is seen that even for the smallest sample sizes considered, the SA and SE policies both incur rather small average regret.

The middle two panels contain the results for Setting 2, where there are three classes of treatments. The most important difference to Setting 1 is that the regrets of the SA and SE policies now differ. In fact, by eliminating the very inferior Treatment 5 after (on average) around 250 rounds (and the moderately inferior Treatments 3 and 4 after around 850 rounds), the SE policy is able to allocate more assignments to Treatments 1 and 2, which are the ones that are hard to distinguish. As a result of this, the SE policy recommends the superior Treatment 1 more often than the SA policy and hence generally incurs a lower average regret. This gain is almost uniform in $n$ except for at $n=1{,}100$ where the relative regret is~1.007. This slightly higher average regret is not due to the SE policy eliminating the best treatment --- it never did in our simulations. Rather it is a consequence of approximating expected regret by average regret. This also explains why the average regret of the SE policy is not monotonically decreasing in~$n$ in any of the three settings. The right panel shows that the relative gains of the SE over the SA policy can be more than 50\% in Setting 2.

The bottom panel contains the results for Setting 3 where there are two strong treatments and three inferior ones. Both policies have a small average regret. However, the SE policy is able to eliminate the inferior treatments after (on average) about 250 rounds. This results in the SE policy incurring an average regret that is lower by several orders of magnitude than the one of the SE policy for the largest sample sizes considered.

Summarizing, the numerical results for Gini-welfare and~$\mathscr{M}_K = \mathscr{E}_K$ suggest that the SE policy is never inferior to the SA policy in terms of average regret, and can be superior whenever there are treatments that are not too similar to the best (in relation to sample size). The computational burden of the SE policy is furthermore negligible in case~$\mathscr{M}_K = \mathscr{E}_K$. Thus, if a sequential assignment scheme is practically feasible, the SE policy should be used for the Gini-welfare and~$\mathscr{M}_K = \mathscr{E}_K$.

\begin{figure}
	\begin{center}
		Setting 1
		
		\vspace{-0.5cm}
		
		\includegraphics[height=6.5cm, width=7cm]{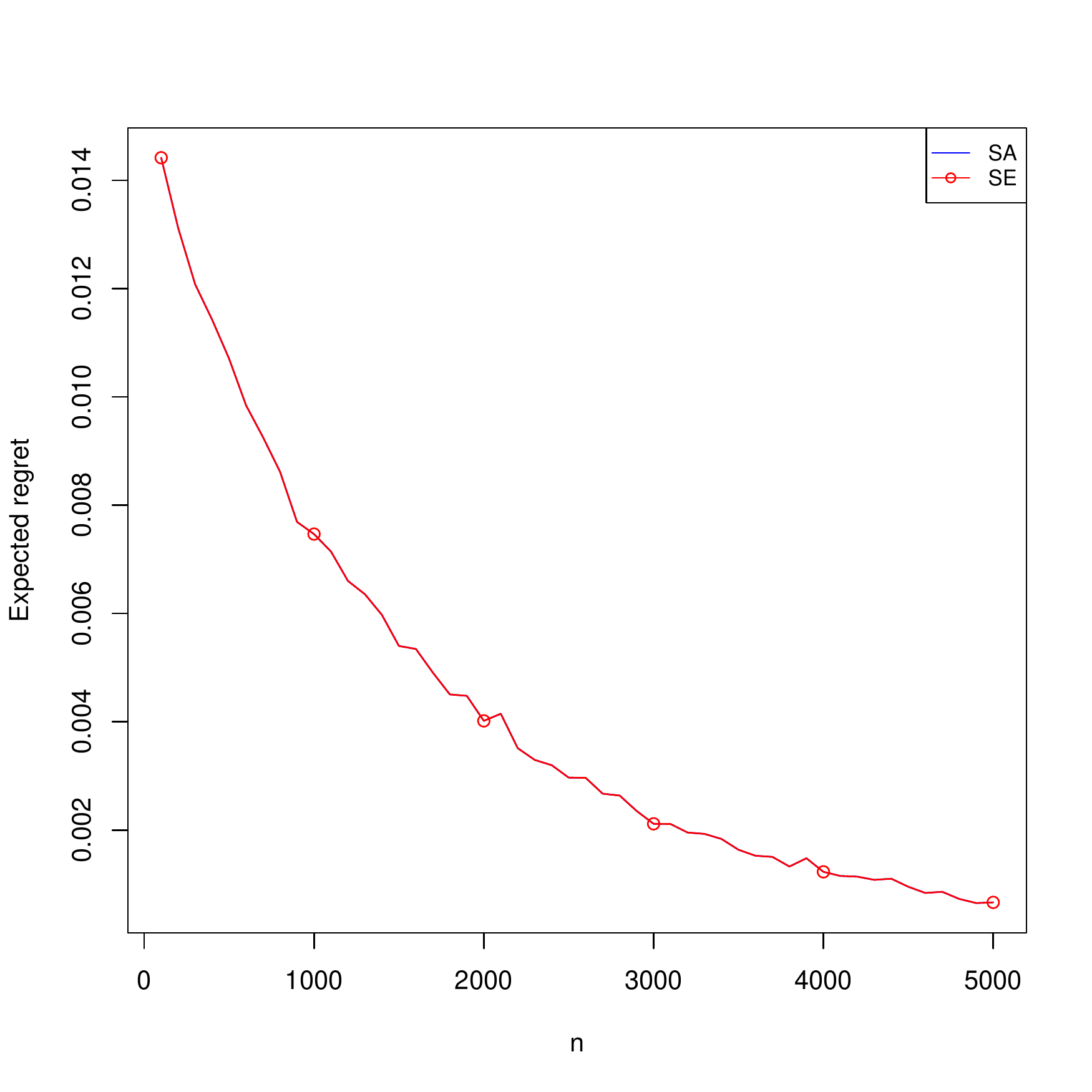}
		\includegraphics[height=6.5cm, width=7cm]{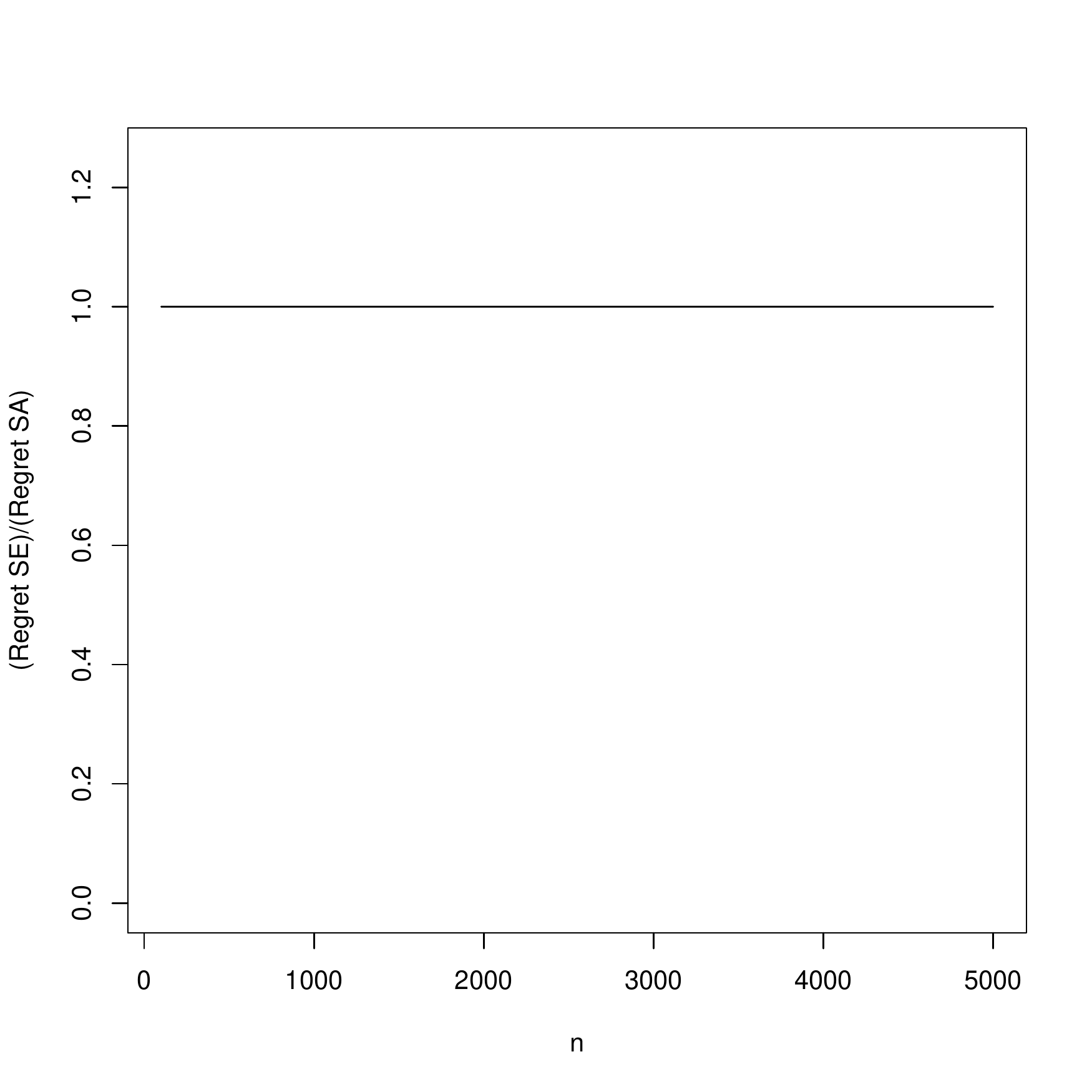}
		
		Setting 2
		
		\vspace{-0.5cm}
		
		\includegraphics[height=6.5cm, width=7cm]{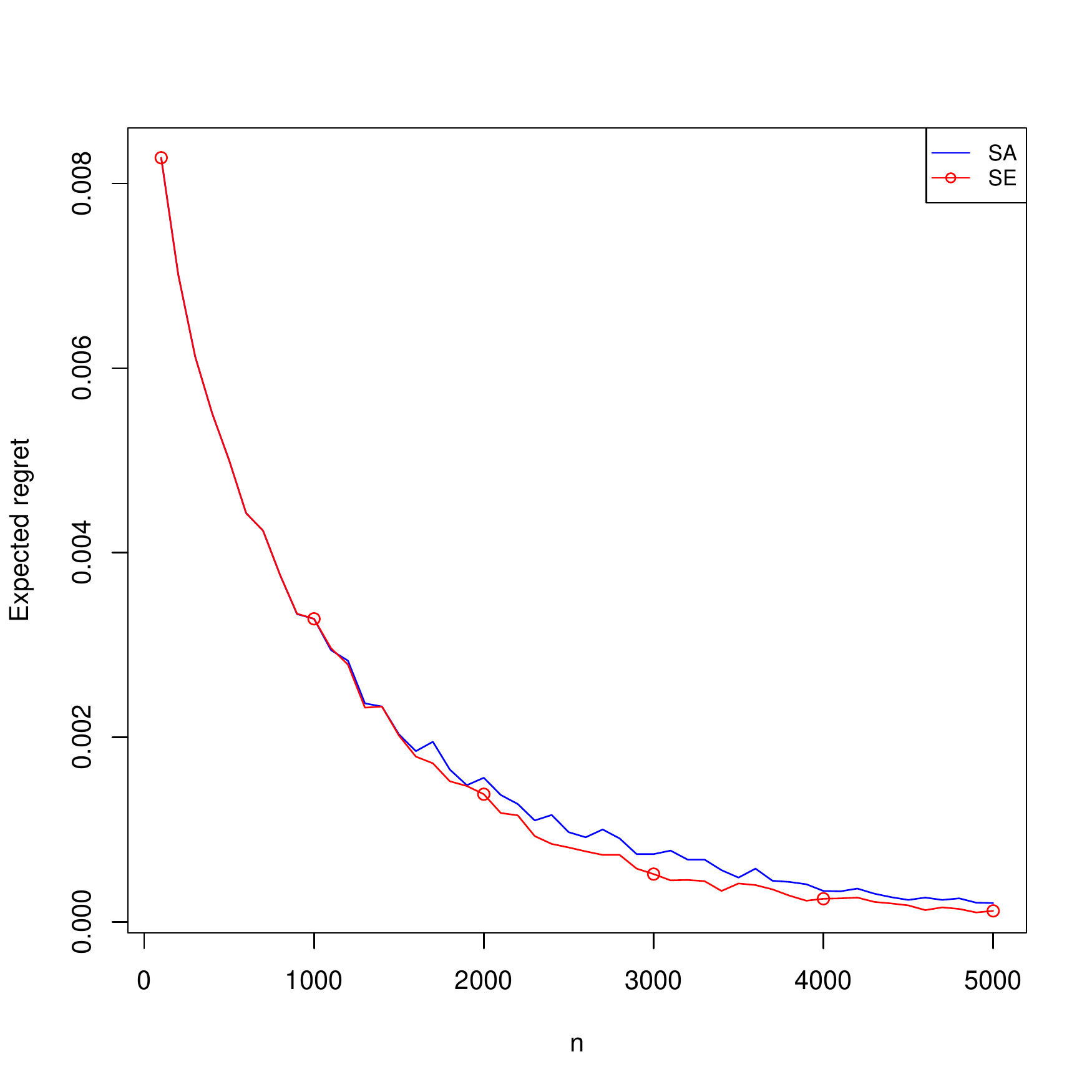}
		\includegraphics[height=6.5cm, width=7cm]{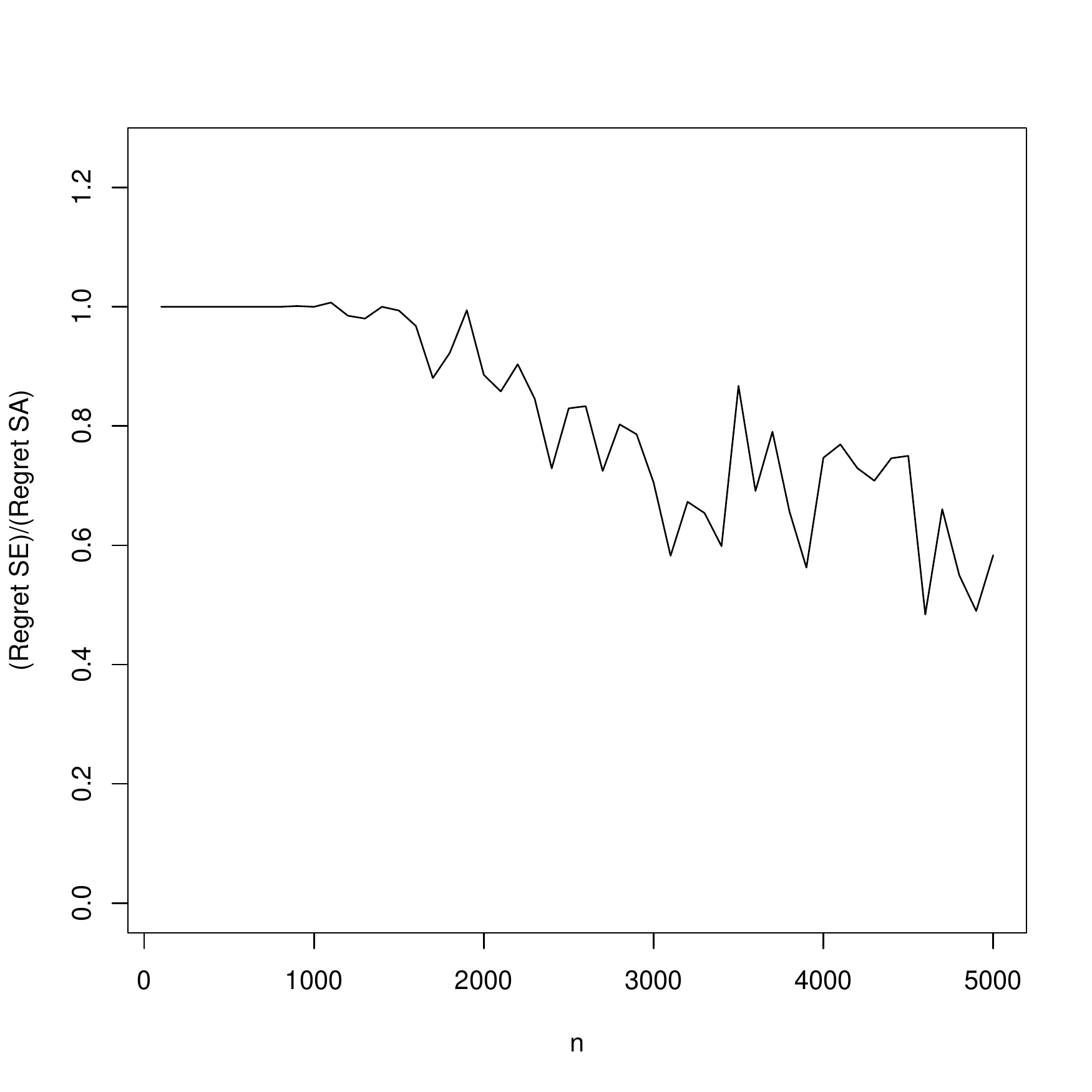}
		
		Setting 3
		
		\vspace{-0.5cm}
		
		\includegraphics[height=6.5cm, width=7cm]{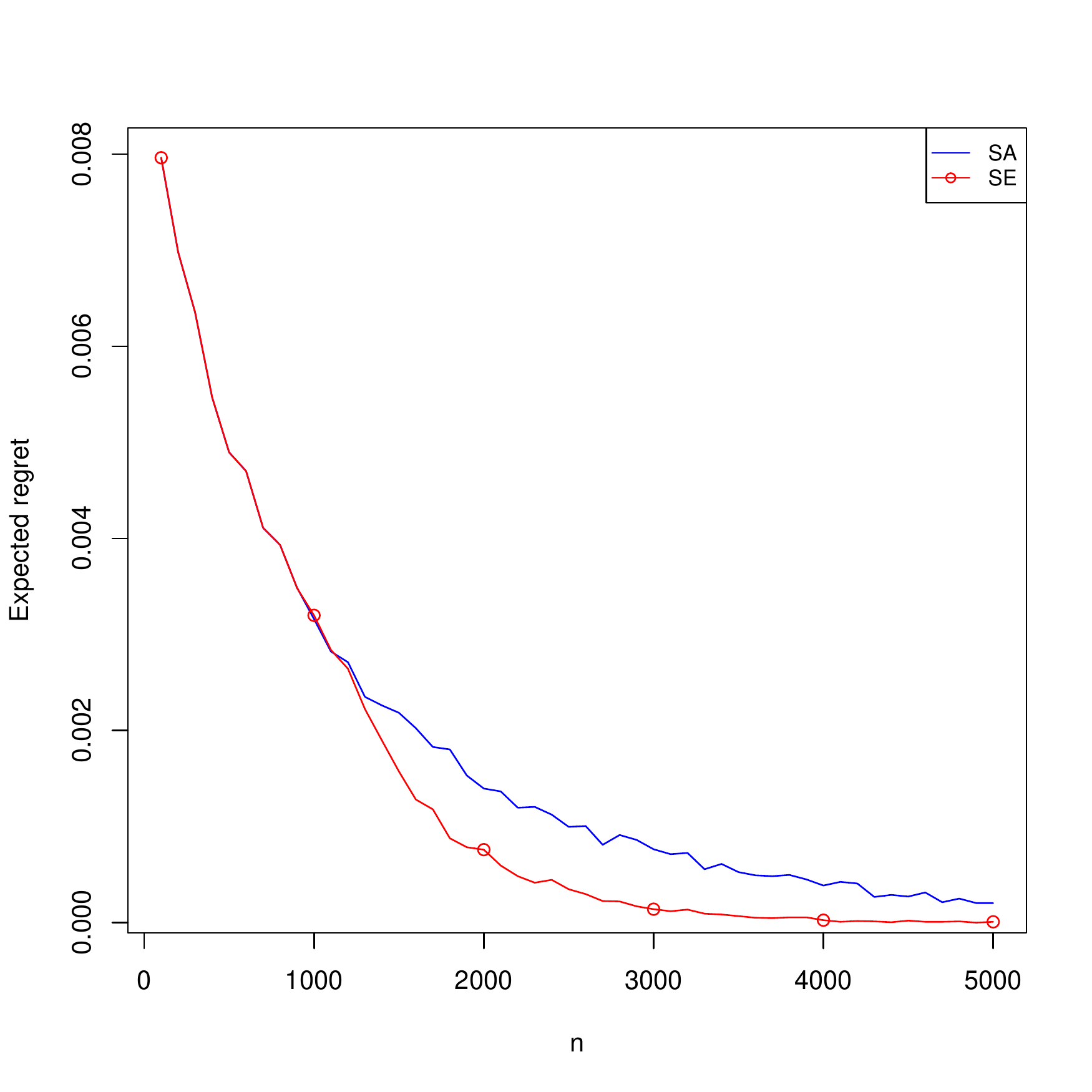}
		\includegraphics[height=6.5cm, width=7cm]{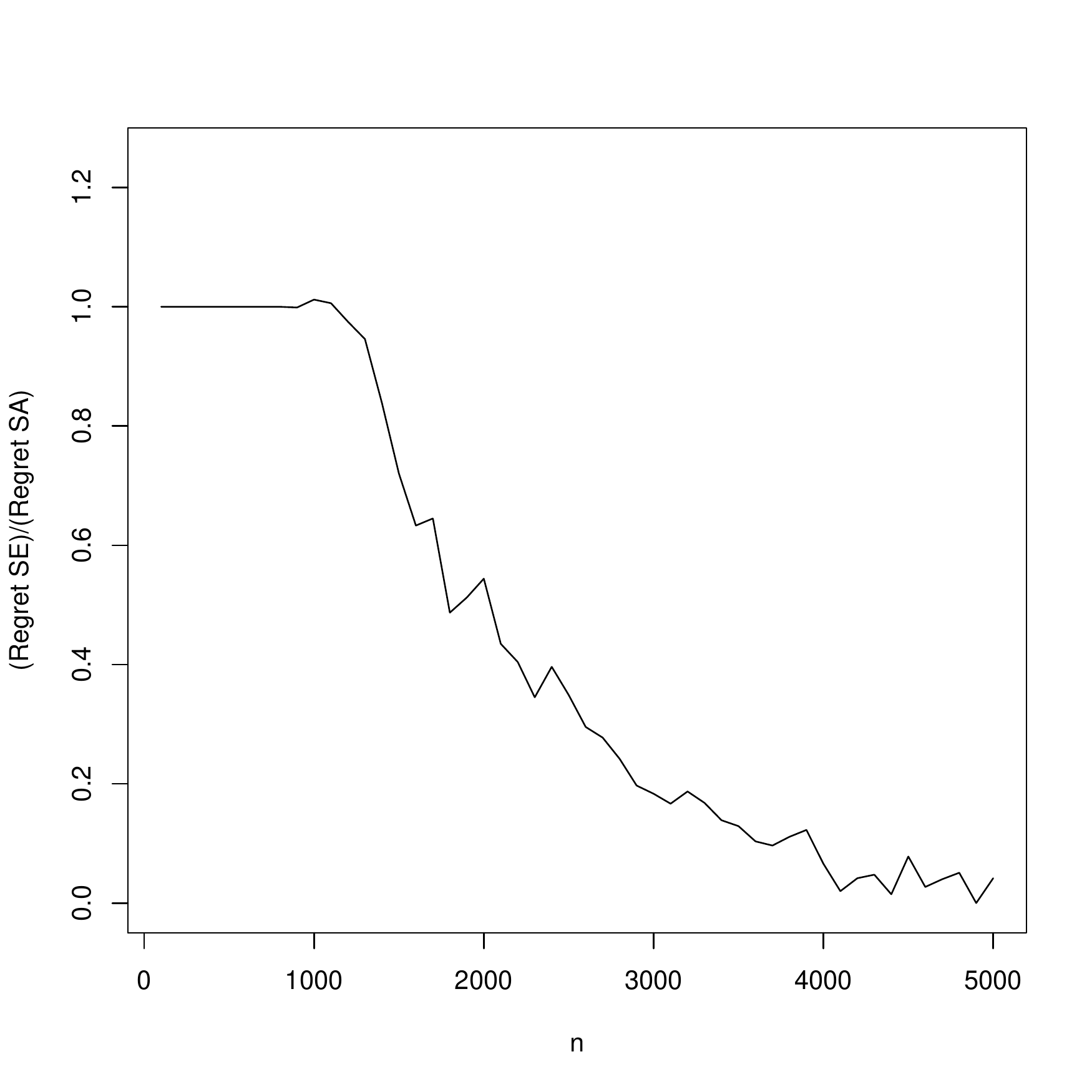}
		\vspace{-.8cm}
	\end{center}
	\caption{{\footnotesize Absolute and relative average regrets for Settings~1-3 for the Gini-welfare measure. Each row corresponds to a setting. The left column contains the absolute average regrets, while the right column contains the relative average regret of the SE to the SA policy. Thus, numbers less than one for the relative average regret mean that the SE has a lower average regret than the SA policy.}}
	\label{Fig:1}
\end{figure}

\subsection{Mean functional with capacity and similarity constraints}

When a decision maker targets the treatment with the highest expectation, and is free to roll out a single best treatment, the situation is very similar (apart from the functional used) to the situation in the previous section. However, if the decision maker needs to take capacity or similarity constraints into consideration, this is no longer true, as such constraints typically imply restrictions of the type~$\mathscr{E}_K \not \subseteq \mathscr{M}_K$. In this section, we study the performance of the SA policy under such restrictions.\footnote{In none of the situations we consider here, there are strongly inferior treatments. Hence, we do not implement any version of the SE policy.} Again, in all of the settings we consider, there are~$K=5$ treatments, and~$F^i$ is the cdf of a Beta-distribution with parameters~$\alpha_i$ and~$1$. We consider the parameters~$\alpha_1=1.2,\ \alpha_2=1.1,\ \alpha_3=1,\ \alpha_4=0.9$ and $\alpha_5=0.8$, the corresponding Beta-distributions having means $0.55,0.52,0.5,0.47$ and $0.44$, respectively. The restrictions~$\mathscr{M}^{(1)}_K \supseteq \mathscr{M}^{(2)}_K \supseteq \mathscr{M}^{(3)}_K$ we study are as follows.
\begin{enumerate}
	\item \emph{The capacities of all treatments are equally restricted}: $$
	\mathscr{M}_{K}^{(1)}=
	\{
	\delta \in \mathscr{S}_K : 0.1 \leq \delta_i \leq 0.5, \text{ for } i = 1, \hdots, K
	\}.$$ Here, the optimal weights vector is~$\delta = (0.5, 0.2, 0.1, 0.1, 0.1)'$. As expected, this puts as much weight as possible on the treatments with the highest means.
	\item \emph{The joint capacity of the first two treatments is further restricted}: $$\mathscr{M}_{K}^{(2)}= \cbr[0]{\delta\in\mathscr{M}_{K}^{(1)}:2\delta_1+\delta_2\leq 0.5}.$$ Now the total weight on the best two treatments can not be too high. In particular, Treatment 1 is ``expensive." The optimal weights vector is~$\delta = (0.1, 0.3, 0.4, 0.1, 0.1)'$. Observe that now Treatment 1 is assigned as little as possible, despite having the highest mean.
	\item \emph{A similarity constraint concerning Treatments 3 and 5 is added:}  $$\mathscr{M}_{K}^{(3)}=\cbr[0]{\delta\in \mathscr{M}_{K}^{(2)}:|\delta_3-\delta_5|\leq 0.1}.$$ The optimal weights vector here is~$\delta = (0.1, 0.3, 0.2, 0.3, 0.1)'$. Now the weights of Treatment 3, which were highest in the previous setting, must be close to those of the individual treatment with the lowest mean.
\end{enumerate}
Before we proceed to the numerical results, we note that since the constraints are all linear, the recommendation in the SA policy can be found by solving (for~$i = 1, 2, 3$) the linear program
\begin{equation*}
	\max_{\delta \in \mathscr{M}^{(i)}_K} \mathsf{T}(\langle \delta, \mathbf{\hat{F}}_{n, \Pi_n}\rangle)  = 
	\max_{\delta \in \mathscr{M}^{(i)}_K}
	\sum_{j = 1}^K \delta_j \hat{\mu}_j
\end{equation*}
via the simplex-algorithm; here~$\hat{\mu}_j$ denotes the mean of the empirical cdf based on the observations assigned to treatment~$j$.

\subsubsection{Results}
Figure \ref{Fig:2} contains the results for Settings 1-3 just described. The regret decreases with $n$ in all three settings in accordance with the theoretical guarantees described in Theorem~\ref{thm:NMAPup}. Furthermore, even for small $n$, the regret is ``low'' for all three sets $\mathscr{M}^{(i)}_K$ considered. The figure also illustrates that the smaller the feasible set is, the lower the regret is for small sample sizes. This is in accordance with the lower bounds in Theorem~\ref{Thm: UniformLowermixed}, which involve smaller constants for smaller feasible sets.
\begin{figure}[H]
	\begin{center}
		\includegraphics[height=6.5cm, width=7cm]{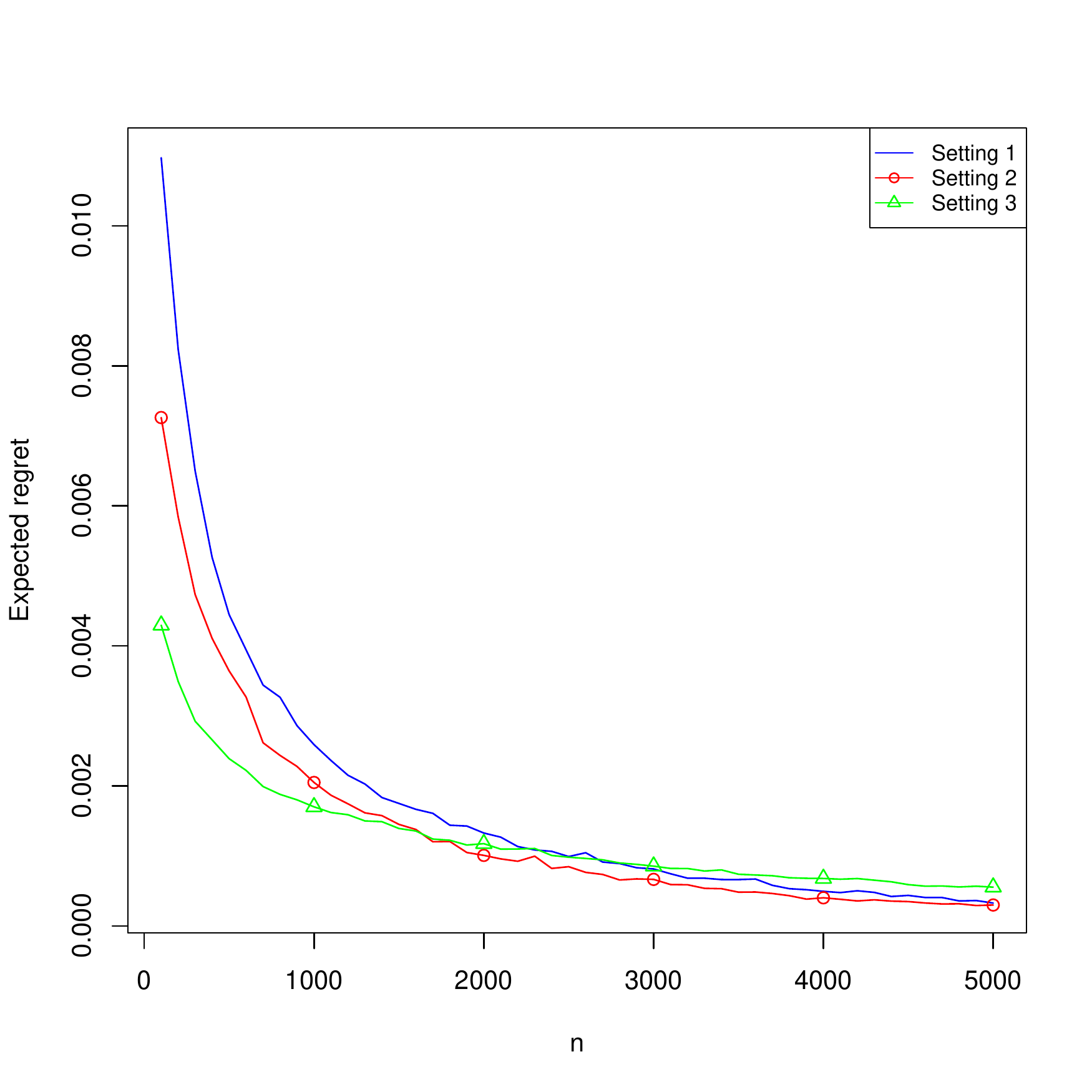}
	\end{center}
	\vspace{-.8cm}
	\caption{{\footnotesize Average regret of the SA policy for the mean functional under capacity and similarity constraints in Settings 1-3 described in the text.}}
	\label{Fig:2}
\end{figure}

\subsection{Headcount ratio}
We finally study a situation in which the decision maker targets a common poverty measure: namely the (negative) headcount ratio as defined in Equation~\eqref{eqn:headcount} of Example~\ref{eqn:pov}. In contrast to the Gini welfare measure and the mean, this functional is not quasi-convex. Thus, even when~$\mathscr{E}_K\subseteq\mathscr{M}_K$, the mixture that minimizes the headcount ratio need not be an element of~$\mathscr{E}_K$. One example of this type is the following simple setting:
\begin{align}
	F^1(x)&= 1.25x \mathds{1}_{\{ x<0.25 \} }+ 0.3125 \mathds{1}_{\{ 0.25 \leq x<0.3125 \} }+ x \mathds{1}_{\{ x \geq 0.3125 \} },\notag\\
	F^2(x)&= 3x \mathds{1}_{\{ x<0.1 \}}+0.3\mathds{1}_{\{ 0.1 \leq x<0.15 \}}+  2 x \mathds{1}_{\{ 0.15 \leq x < 1/2 \}}+\mathds{1}_{\{ x \geq  1/2 \}},\label{eq:hcsimdist}\\
	F^i(x)&=a_i \mathds{1}_{\{ x<1/2 \} }+ (p_i x+1-p_i) \mathds{1}_{\{ x \geq 1/2 \} },\qquad \text{for }i\in\cbr[0]{3,4,5},\notag
\end{align}
with~$p_i=2(1-a_i)$ and~$a_3=0.5,\ a_4=0.7,\ a_5=0.9$, respectively. It can be verified that the headcount ratios of these five cdfs are~$0.31,\ 0.3,\ 0.5,\ 0.7$ and~$0.9$, respectively (mentioned in the order of~$F^1$ to~$F^5$). Intuitively, the larger~$a_i$ is, the larger is the mass at zero of~$F^i$, i.e, the larger is the proportion of very poor people. We let
\begin{align*}
	\mathscr{M}_K=\cbr[0]{\delta\in\mathscr{S}_5: \delta_3=\delta_4=\delta_5=0}\cup \cbr[0]{e_3(5)}\cup \cbr[0]{e_4(5)}\cup \cbr[0]{e_5(5)},
\end{align*}
such that mixtures of~$F^1$ and~$F^2$ are allowed, while~$F^3$,~$F^4$, and~$F^5$ can only be assigned on their own. This is an instance of an incompatibility constraint as described in Example~\ref{ex:compa}. It can be shown that~$\delta=\del[1]{\frac{736}{3171},\frac{2435}{3171},0,0,0}'$ minimizes the headcount ratio over~$\mathscr{M}_K$ with a corresponding minimal value of~$0.2739$. The values of the headcount ratios corresponding to mixtures of~$F^1$ and~$F^2$ are shown in Figure~\ref{fig:hc} (which nicely illustrates that the lowest headcount ratio is obtained by combining two treatments). Importantly, we note that the minimizer~$\delta\not\in\mathscr{E}_K$ even though~$\mathscr{E}_K\subseteq \mathscr{M}_K$. The discretization used is 
\begin{align*}
	\mathscr{M}_K^n=\cbr[0]{\delta\in\mathscr{S}_5: \delta_3=\delta_4=\delta_5=0,\ 100\delta_1\in\Z}\cup \cbr[0]{e_3(5)}\cup \cbr[0]{e_4(5)}\cup \cbr[0]{e_5(5)},
\end{align*}
and we set~$C=2.5$ in accordance with Example~\ref{eqn:pov}. 

\begin{figure}
\centering
\includegraphics[height=6.5cm, width=7cm]{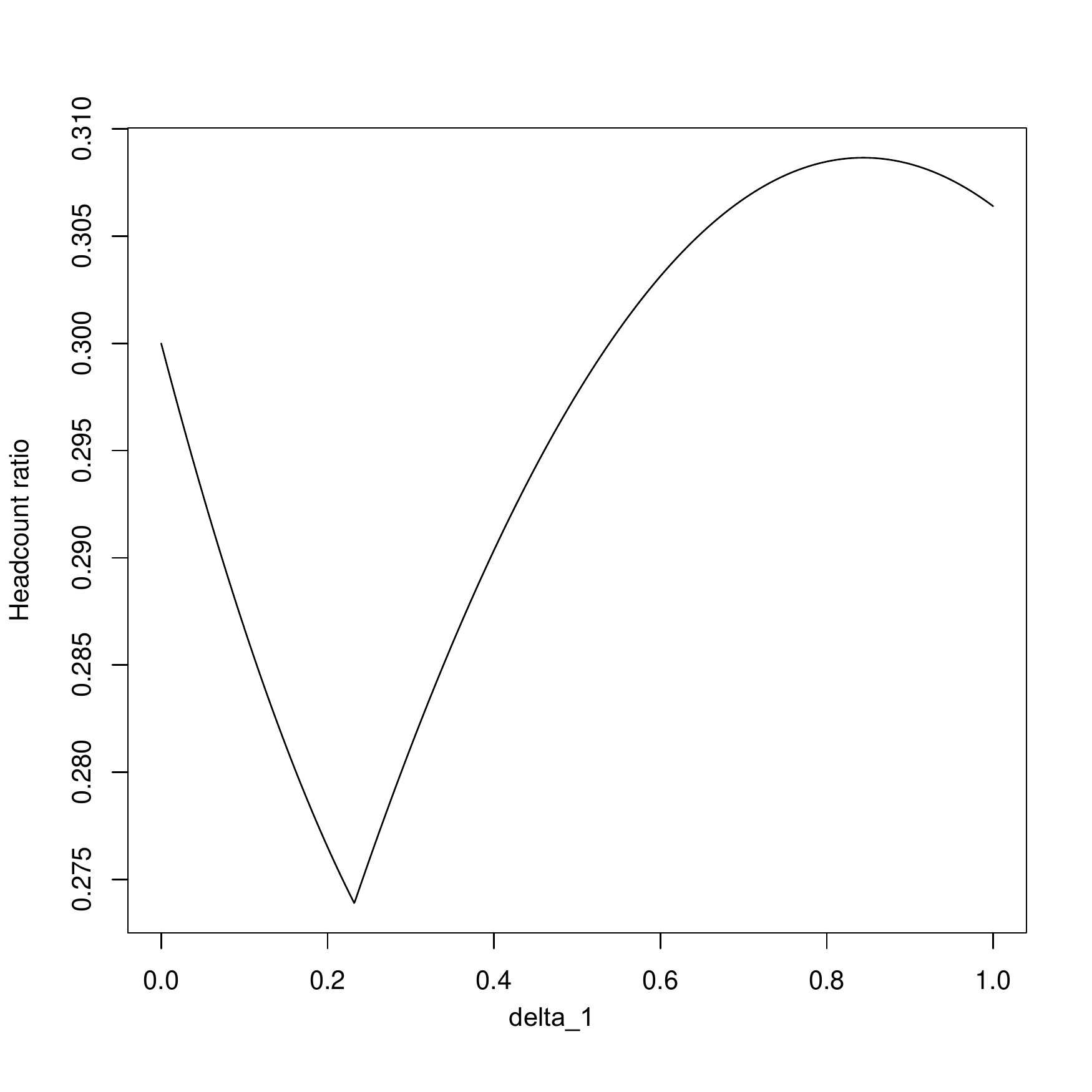}
\caption{\footnotesize{Headcount ratios for mixtures of the form~$\delta_1 F^1 + (1-\delta_1)F^2$ in dependence on~$\delta_1 \in [0, 1]$.}}
\label{fig:hc}
\end{figure}

\subsubsection{Results}
The numerical results for the headcount ratio with outcome distributions as in~\eqref{eq:hcsimdist} are contained in Figure~\ref{Fig:3}. The left panel, which contains the  absolute average regret for the SA and the SE polices, shows that both of these generally incur a low average regret and that this is decreasing in the sample size. Observe also that the SE policy is, up to simulation error, uniformly better (in~$n$) than the SA policy. The right panel contains the relative average regret of the SE policy to that of the SA policy. It can be seen that for~$n\geq 3{,}000$ the average regret of the SE policy is always at least~25\% lower (and up to~38\% lower) than that of the SA policy. As was the case for the Gini welfare measure, this is due to the SE policy correctly eliminating inferior treatments and dedicating more sampling effort to, in this case, Treatments~1 and~2 on which the best mixture puts positive weights.  

\begin{figure}[H]
	\begin{center}
		\includegraphics[height=6.5cm, width=7cm]{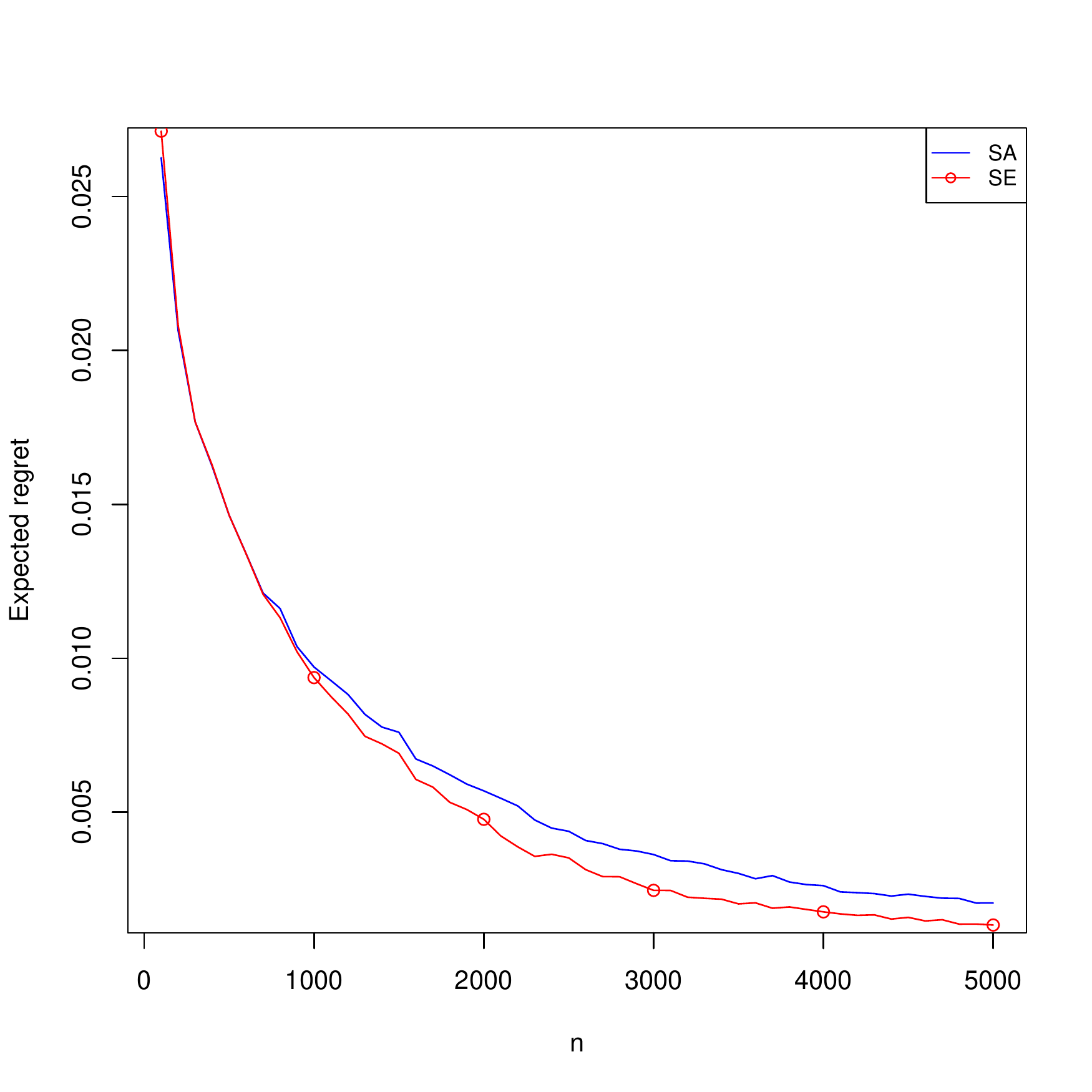}
		\includegraphics[height=6.5cm, width=7cm]{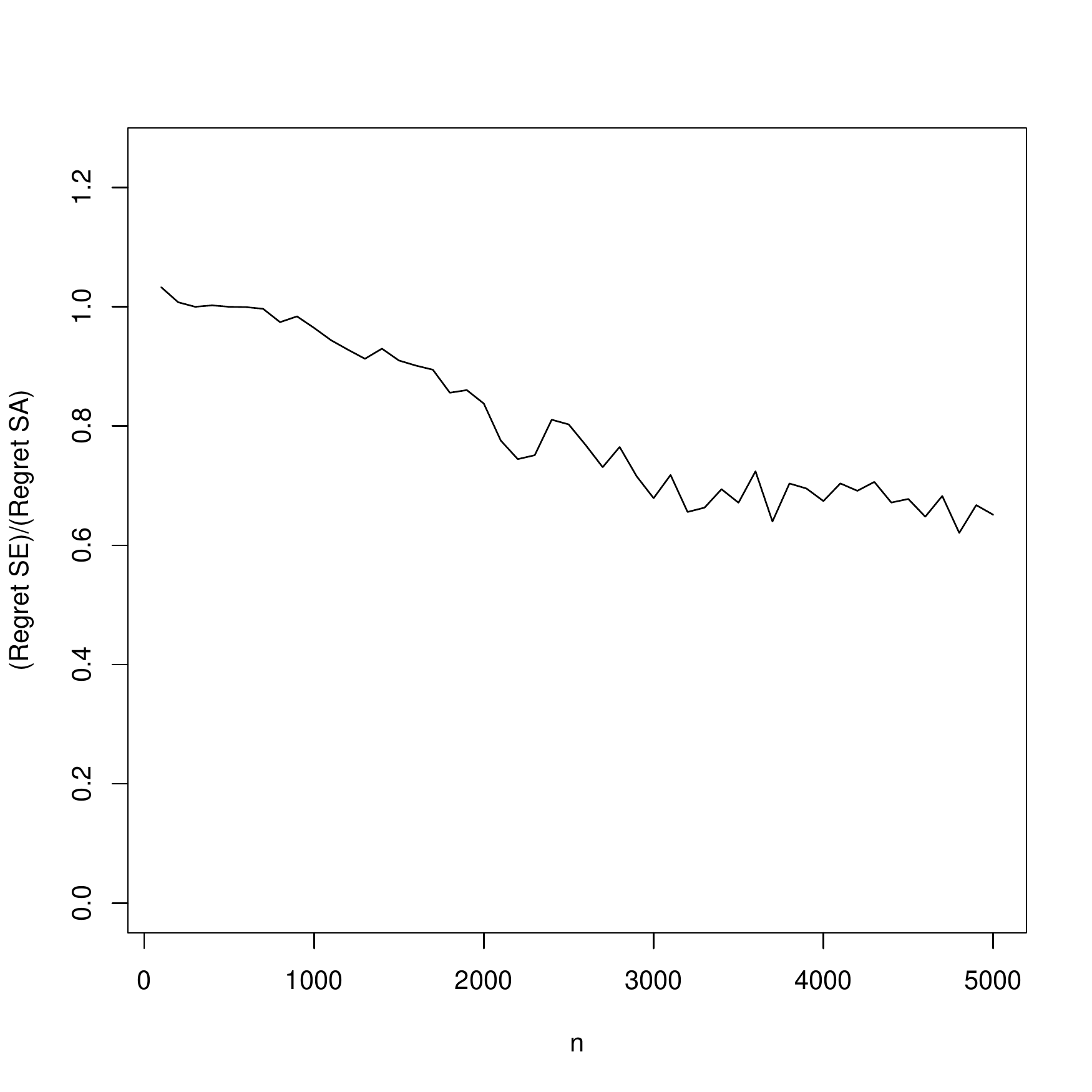}
	\end{center}
	\vspace{-.8cm}
	\caption{{\footnotesize Absolute and relative average regrets for the headcount ratio with outcome distributions as in~\eqref{eq:hcsimdist}. The left plot contains the absolute average regrets, while the right plot contains the relative average regret of the SE to the SA policy. Thus, numbers less than one for the relative average regret mean that the SE has a lower average regret than the SA policy.}}
	\label{Fig:3}
\end{figure}

\subsubsection{Mixing inferior and superior treatments}
In the setting of~\eqref{eq:hcsimdist} \emph{all} mixtures of Treatments 1 and 2 are superior to the remaining Treatments 3,~4 and~5. This may make the elimination of Treatments 3,~4 and~5 too easy.  We therefore consider now a setting where this is no longer the case. The potential outcome distributions are in the same family as those used in~\eqref{eq:hcsimdist}. To be precise, we let
 \begin{align}
 	G^1(x)&= 3x \mathds{1}_{\{ x<0.1 \}}+0.3\mathds{1}_{\{ 0.1 \leq x<0.15 \}}+  2 x \mathds{1}_{\{ 0.15 \leq x < 1/2 \}}+\mathds{1}_{\{ x \geq  1/2 \}},\notag\\
 	G^i(x)&=b_i \mathds{1}_{\{ x<1/2 \} }+ (q_i x+1-q_i) \mathds{1}_{\{ x \geq 1/2 \} },\qquad \text{for }i\in\cbr[0]{2,3,4,5};\label{eq:sim2}
 \end{align}
 with~$q_i=2(1-b_i)$ and~$b_2=0.7,\ b_3=0.32,\ b_4=0.7,\ b_5=0.9$, respectively. It can be verified that the headcount ratios of these five cdfs are~$0.3,\ 0.7,\ 0.32,\ 0.7$ and~$0.9$, respectively (mentioned in the order of~$G^1$ to~$G^5$).
We allow for the superior Treatment 1 to be mixed with the dominated Treatment 2 while the remaining treatments can only be assigned on their own, that is
\begin{align*}
	\mathscr{M}_K=\cbr[0]{\delta\in\mathscr{S}_5: \delta_3=\delta_4=\delta_5=0}\cup \cbr[0]{e_3(5)}\cup \cbr[0]{e_4(5)}\cup \cbr[0]{e_5(5)}.
\end{align*}
It can now be shown that~$\delta=(1,0,0,0,0)$ minimizes the headcount ratio over~$\mathscr{M}_K$ with a corresponding value of~$0.3$ (the headcount ratio corresponding to assigning Treatment 1 on its own). The discretization of~$\mathscr{M}_K$ that we use is
\begin{align*}
	\mathscr{M}_K^n=\cbr[0]{\delta\in\mathscr{S}_5: \delta_3=\delta_4=\delta_5=0,\ 100\delta_1\in\Z}\cup \cbr[0]{e_3(5)}\cup \cbr[0]{e_4(5)}\cup \cbr[0]{e_5(5)},
\end{align*}
and we again set~$C=2.5$ in accordance with Example~\ref{eqn:pov}.

\subsubsection{Additional results}
The results for the headcount ratio with outcome distributions as in~\eqref{eq:sim2} are contained in Figure~\ref{Fig:4}. The findings are qualitatively like those in Figure~\ref{Fig:3} --- as long as no treatments are eliminated the SA and SE policy make the same recommendation but once the SE policy begins eliminating inferior treatments it more often recommends the regret minimizing mixture~$(1,0,0,0,0)$. 

We also experimented with settings in which the headcount ratio of the second best treatment (Treatment 3) is much closer to (further from) that of the best treatment (Treatment 1). In the former case, the recommendation problem can become so difficult that neither the SA not SE policy can distinguish between the two top treatments at the studied sample sizes and the relative regret is essentially one. If, on the other hand, the headcount ratio of the second best treatment is much higher than that of the best treatment then the recommendation problem can become so easy that neither the SA nor the SE policy make any mistakes and again they are equally good. Let us stress that even in such settings nothing is lost from using the SE policy and the above simulations show that there are setting in which one obtains substantially lower regret by using the SE rather than the SA policy.

\begin{figure}[H]
	\begin{center}
		\includegraphics[height=6.5cm, width=7cm]{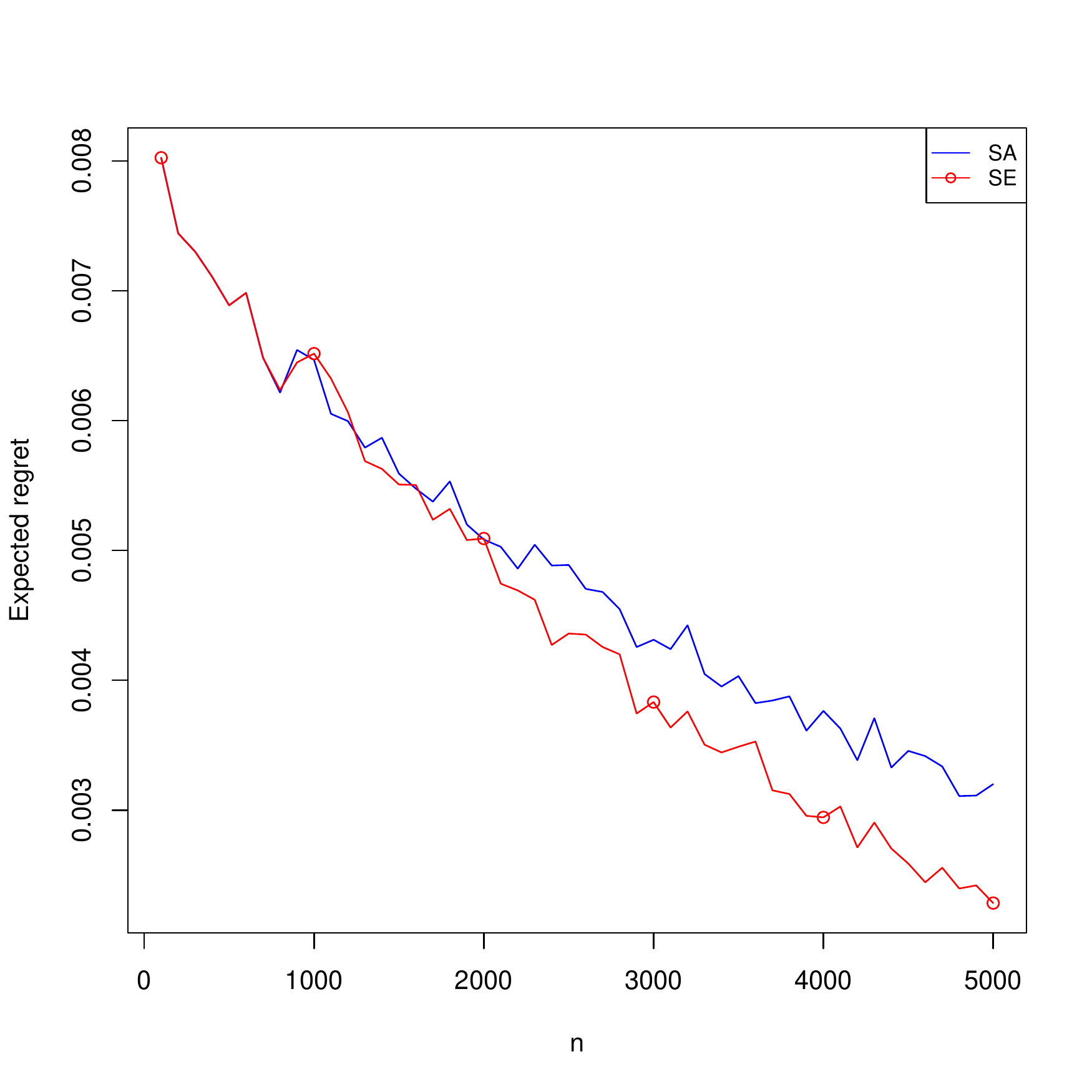}
		\includegraphics[height=6.5cm, width=7cm]{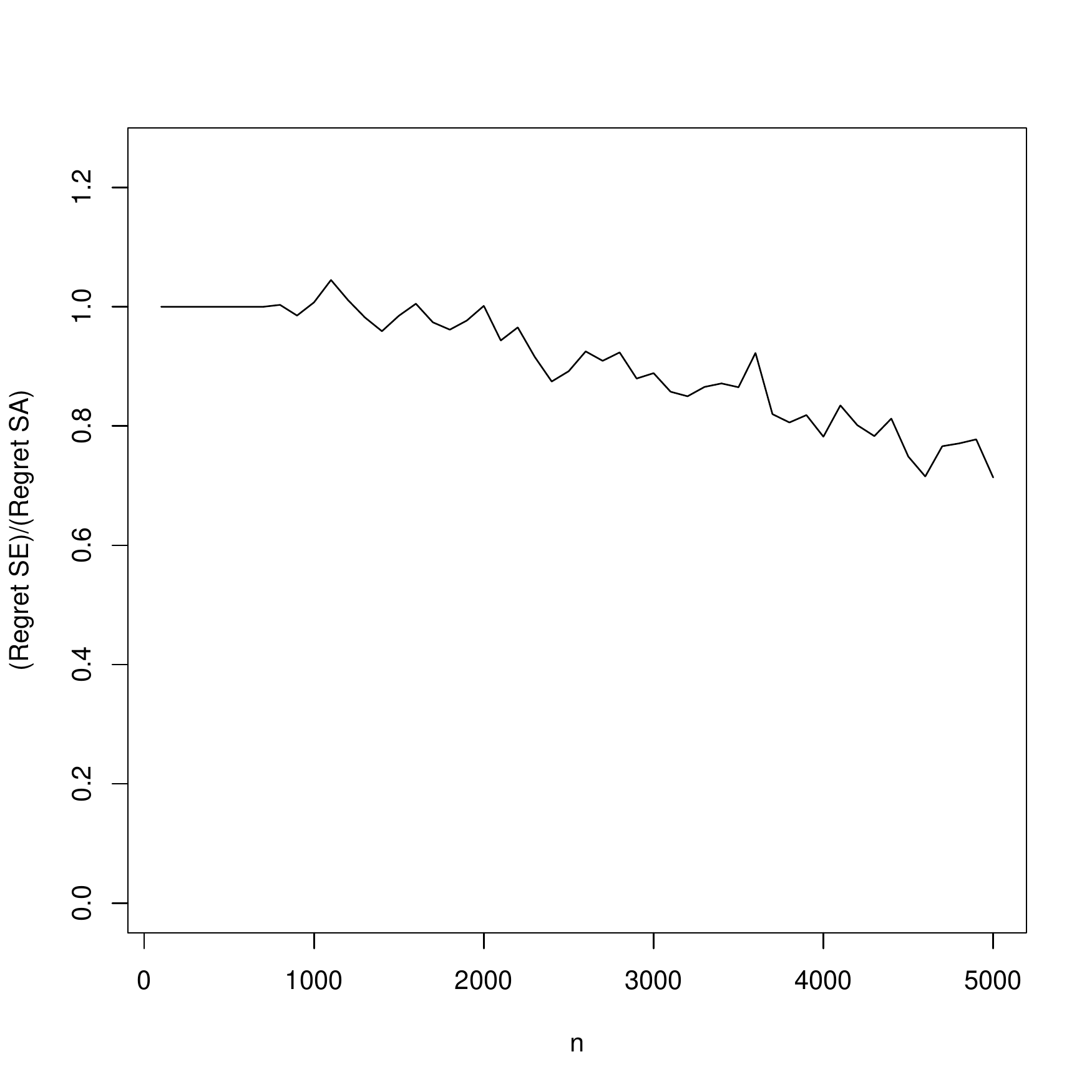}
	\end{center}
	\vspace{-.8cm}
	\caption{{\footnotesize Absolute and relative average regrets for the headcount ratio with outcome distributions as in~\eqref{eq:sim2}. The left plot contains the absolute average regrets, while the right plot contains the relative average regret of the SE to the SA policy. Thus, numbers less than one for the relative average regret mean that the SE has a lower average regret than the SA policy.}}
	\label{Fig:4}
\end{figure}

\section{Incorporating covariates}\label{sec:cov}

One relevant extension of the setup we considered is the situation in which for each subject one also observes a vector of covariates~$X_t \in \mathcal{X}$ with (unknown) distribution~$P_X$, say. While we do not spell out any theoretical results in this setup, we shall heuristically outline in the following how a decision maker can incorporate covariate information using a natural extension of our plug-in based approach. Denoting the conditional distribution of the potential outcome vector~$Y_t$ given~$X_t$ by~$\mathbf{F}(y|x)$ (a distribution on~$\R^K$), any mixture~$\delta: \mathcal{X} \to \mathscr{M}_K$ (in principle~$\mathscr{M}_K$ could then also depend on the covariates but we do not highlight this possibility here) now induces a population cdf via
\begin{equation*}
	\int \langle \delta(x), \mathbf{F}(\cdot | x)\rangle dP_X(x).
\end{equation*}
Note that in a setting with covariates the weights vector~$\delta$ now actually is a vector-valued \emph{function} taking as an argument the covariate vector.
The goal of the policy maker here is to determine an optimal function~$\delta$, i.e., a maximizer of
\begin{equation}\label{eqn:theoX}
	\delta \mapsto  \mathsf{T}\left(
	\int \langle \delta(x), \mathbf{F}(\cdot | x)\rangle dP_X(x)
	\right).
\end{equation}
Following our plug-in approach, given an estimate of the conditional distribution~$\hat{\mathbf{F}}(\cdot|x)$ and an estimate~$\hat{P}_X$ of~$P_X$, the mixture recommended at the end of the experimental phase would be a maximizer of 
\begin{equation*}
	\delta \mapsto  \mathsf{T}\left(
	\int \langle \delta(x), \hat{\mathbf{F}}(\cdot | x)\rangle d\hat{P}_X(x)
	\right).
\end{equation*}
To illustrate this in some detail, consider a specific approach based on partitioning the covariate space, an approach that has been used in related settings, e.g., \cite{rigollet2010nonparametric},~\cite{perchet2013multi} or~\cite{kpv2}. 
Essentially, one would partition the covariate space into~$B$ clusters, say, obtain the estimated conditional cdfs $\hat{\mathbf{F}}_b$ (a cdf on~$\mathbb{R}^K$) for~$j = 1, \hdots, B$ by aggregating outcomes within each cluster, and use these estimates as plug-ins to maximize
\begin{equation}\label{eqn:empX}
	(\delta_1, \hdots, \delta_B) \mapsto  \mathsf{T}\left(
	\sum_{j = 1}^B \hat{p}_j
	\langle \delta^{(j)}, \hat{\mathbf{F}}_j \rangle 
	\right).
\end{equation}
Here~$\hat{p}_j$ denotes the relative frequency of observations falling into cluster~$j$, and each~$\delta^{(j)}$ is an element of~$\mathscr{M}_K$. Optimizing~\eqref{eqn:empX} in~$(\delta^{(1)}, \hdots, \delta^{(B)})$ one obtains the cluster-specific allocation vectors~$(\hat{\delta}^{(1)}, \hdots, \hat{\delta}^{(B)})$. The optimal size of the clusters (ensuring that~\eqref{eqn:empX} well approximates \eqref{eqn:theoX}) depends on the sample size~$n$ and on smoothness assumptions imposed on~$F(y|x)$. In the important case where~$\mathcal{X}$ is finite and not too large no clustering is necessary at all, i.e.,~$B$ would coincide with the number of elements of~$\mathcal{X}$. Static and sequential assignment schemes are possible and would be implemented on the cluster level (but would need to be adjusted so as to take into account the global objective in~\eqref{eqn:empX} in the recommendation and in eliminating treatments on the cluster level). 

\section{Conclusions}
This paper has studied the problem of a policy maker who conducts a sequential experiment to recommend the treatment that is best according to a desirable functional characteristic of the outcome distribution of the treatments. Although our theory allows the policy maker to target a wide class of functionals, the problem is particularly intricate when targeting a functional that is not quasi-convex or when the set of treatments that can be rolled out is restricted as one must then learn the best mixture of treatments. We have characterized the difficulty of this decision problem and shown how it depends on the type of constraints faced. When it is feasible (and sensible, i.e., the constraints do not rule out strongly inferior treatments), we recommend to assign subjects using one of our sequential policies due to their superior regret behavior. 

\begin{appendix}
	\section{Auxiliary results}\label{app:aux}
	\emph{In this section}, let~$\cbr[0]{X_t}_{t\in\N}$ be an i.i.d.~sequence of random variables with cdf $F$; furthermore, for every $t\in\N$, we denote by~$\hat{F}_t$ the empirical cdf based on~$X_1, \hdots, X_t$, i.e.,~~$\hat{F}_t(x) := \frac{1}{t}\sum_{s=1}^t\mathds{1}_{\cbr[0]{X_s\leq x}}$ for every~$x \in \R$. 
	
	We shall call (the distribution of) a random variable~$X$ \emph{sub-Gaussian}, if there exist positive real numbers~$D$ and~$\sigma^2$ such that~$\P\del[0]{|X|\geq x}\leq De^{-\frac{x^2}{2\sigma^2}}$ for every~$x>0$. Note that we do not require~$X$ to have expectation~$0$; we only require a two-sided exponential tail bound.
	\begin{lemma}\label{lem:mgf}
		Let~$X$ be sub-Gaussian. Then, for every~$s>0$ and~$k>0$,
		\begin{equation}\label{eqn:mgfexp}
			\E [e^{sX}] \leq 1+D\sqrt{2\pi}s\sigma e^{\frac{s^2\sigma^2}{2}}
			\leq
			\del[2]{1+D \sqrt{k/e}\sqrt{2\pi}} e^{(\frac{1}{2}+\frac{1}{2 k})s^2\sigma^2}.
		\end{equation}
	\end{lemma}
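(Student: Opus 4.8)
The plan is to establish the first inequality through a layer-cake representation of the exponential moment, and then to deduce the second inequality from the first by an elementary one-variable optimization.

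For the first inequality, I would start from the pointwise identity, valid for every $x\in\R$ and every $s>0$,
\[
e^{sx} = 1 + \int_0^\infty s e^{sy}\mathds{1}\{y<x\}\,dy - \int_{-\infty}^0 s e^{sy}\mathds{1}\{x\le y\}\,dy,
\]
which one verifies by evaluating the two integrals separately in the cases $x\ge 0$ and $x<0$. Substituting $X$ for $x$, taking expectations, and interchanging expectation and integration (legitimate by Tonelli, since after substitution both integrands are non-negative) gives
\[
\E[e^{sX}] = 1 + \int_0^\infty s e^{sy}\,\P(X>y)\,dy - \int_{-\infty}^0 s e^{sy}\,\P(X\le y)\,dy.
\]
Discarding the last (non-negative) term and bounding $\P(X>y)\le\P(|X|\ge y)\le D e^{-y^2/(2\sigma^2)}$ for $y>0$, the remaining integral is, after completing the square $sy-\tfrac{y^2}{2\sigma^2}=-\tfrac{(y-\sigma^2 s)^2}{2\sigma^2}+\tfrac{\sigma^2 s^2}{2}$ and extending the domain of integration from $(0,\infty)$ to $\R$, at most $D s\sigma\sqrt{2\pi}\,e^{s^2\sigma^2/2}$. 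This is precisely the first asserted bound.

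For the second inequality I would set $u:=s\sigma>0$, so that the claim reads $1+D\sqrt{2\pi}\,u\,e^{u^2/2}\le\bigl(1+D\sqrt{k/e}\sqrt{2\pi}\bigr)e^{u^2/2}e^{u^2/(2k)}$, and prove it term by term. The constant $1$ is trivially at most $e^{u^2/2}e^{u^2/(2k)}$, while $D\sqrt{2\pi}\,u\,e^{u^2/2}\le D\sqrt{k/e}\sqrt{2\pi}\,e^{u^2/2}e^{u^2/(2k)}$ reduces, after cancelling common factors, to $u e^{-u^2/(2k)}\le\sqrt{k/e}$ for all $u>0$. The latter is a one-line calculus fact: the map $u\mapsto u e^{-u^2/(2k)}$ has derivative $e^{-u^2/(2k)}(1-u^2/k)$, hence attains its maximum at $u=\sqrt{k}$ with value $\sqrt{k}\,e^{-1/2}=\sqrt{k/e}$. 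Adding the two bounds yields the claim.

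I do not expect a serious obstacle here; the only points requiring (routine) care are the justification of the expectation–integral interchange, which is handled by Tonelli once the non-positive contribution is dropped, and the fact that $X$ is not assumed to be centered — this is exactly why one bounds the upper tail $\P(X>y)$ directly by the two-sided sub-Gaussian tail, rather than attempting a sharper estimate around the mean.
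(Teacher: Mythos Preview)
Your proof is correct and follows essentially the same route as the paper. The paper reaches the integral $1+sD\int_0^\infty e^{-y^2/(2\sigma^2)+sy}\,dy$ via the slightly cruder step $\E[e^{sX}]\le\E[e^{s|X|}]$ followed by the tail formula for non-negative variables, whereas you arrive at the identical integral by your signed layer-cake decomposition and dropping the negative part; the completion of the square and the second inequality (your $u e^{-u^2/(2k)}\le\sqrt{k/e}$ is exactly the paper's $\sqrt{x}\le\sqrt{k/e}\,e^{x/(2k)}$ with $x=u^2$) are then handled identically.
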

	
	\begin{proof}
		Note that~$\mathbb{E}(e^{sX}) \leq \mathbb{E}(e^{s|X|}) = \int_{(0, \infty)} \P( e^{s|X|} > x) dx$, which is bounded from above by
		$$ \int_{(0, \infty)}\P( |X| > \log(x)/s) dx  \leq 1 + D \int_{(1, \infty)} e^{-(\log(x)/s)^2/(2\sigma^2)}dx,$$
		which (after the change of variables~$x \mapsto e^{sx}$) is seen to coincide with~$$1 + s D \int_{(0, \infty)} e^{-x^2/(2\sigma^2) + sx} dx = 
		1 + s D e^{\frac{s^2\sigma^2}{2}} \int_{(0, \infty)} e^{-\frac{1}{2\sigma^2}(x - s\sigma^2)^2} dx \leq
		1+D\sqrt{2\pi}s\sigma e^{\frac{s^2\sigma^2}{2}}.$$ 
		
		To prove the second inequality in the lemma, use that for every~$x\geq 0$ it holds that~$\sqrt{x}\leq  \sqrt{k/e} \exp(x/2 k)$ [since~$e x/k\leq \exp(x/k)$] with~$x = s^2 \sigma^2$.
	\end{proof}
	
	We now summarize some consequences of Lemma~\ref{lem:mgf} (which hold for every~$t \in \N$): Together with the Dvoretzky-Kiefer-Wolfowitz-Massart (DKWM)-inequality, the first inequality in~\eqref{eqn:mgfexp} shows that (cf. also~\cite{shorack2009empirical}, p.~357) 
	\begin{equation}\label{eqn:mgf1}
		\mathbb{E}\left(\exp(s t \|\hat{F}_t - F\|_{\infty})\right) \leq 1 + \sqrt{2\pi} s \sqrt{t} e^{s^2t/8} \quad \text{ for every } s > 0;
	\end{equation}
	and the second inequality in~\eqref{eqn:mgfexp} shows that for every~$k > 0$
	\begin{equation}\label{eqn:mgf2}
		\mathbb{E}\left(\exp(s t \|\hat{F}_t - F\|_{\infty})\right) \leq 	
		\del[2]{1+ 2\sqrt{\frac{2\pi k}{e}}} e^{(1+k^{-1})s^2t/8} \quad \text{ for every } s > 0,
	\end{equation}
	and thus in particular
	\begin{equation}\label{eqn:mgf3}
		\mathbb{E}\left(\exp(s t \|\hat{F}_t - F\|_{\infty})\right) \leq 	
		\del[2]{1+ 2 \sqrt{2\pi }} e^{(1 + e^{-1})s^2t/8} \quad \text{ for every } s > 0.
	\end{equation}
	Doob's submartingale inequality, together with the inequality in Equation~\eqref{eqn:mgf2}, delivers the upper bound in the following lemma.
	\begin{lemma}\label{lem:maxDKW}
		For every~$n\in\N$, every~$x>0$, and every~$k>0$ it holds that
		\begin{align*}
			\P\del[2]{\max_{1\leq t\leq n}t\enVert[0]{\hat{F}_t-F}_\infty\geq x}
			\leq
			\del[2]{1+2\sqrt{k/e} \sqrt{2\pi}} e^{-\frac{2 x^2}{(1+1/k)n}}. 
		\end{align*}
	\end{lemma}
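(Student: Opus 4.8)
The plan is to realize, for each fixed~$s>0$, the family $\{\exp(s\,t\|\hat F_t - F\|_\infty)\}_{t=1}^{n}$ as a nonnegative submartingale in~$t$, apply Doob's submartingale (maximal) inequality to it, bound the terminal expectation by the already-established inequality~\eqref{eqn:mgf2}, and then optimize the resulting bound over~$s>0$.

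First I would set $\F_t := \sigma(X_1,\dots,X_t)$ and, for each fixed~$x\in\R$, introduce $D_t(x) := \sum_{u=1}^{t}(\mathds{1}\{X_u\leq x\} - F(x))$, which is an~$\F_t$-martingale with $D_0(x)=0$; note that $t\|\hat F_t - F\|_\infty = \sup_{x\in\R}|D_t(x)| =: Y_t$. The crucial step is to show that~$Y_t$ is an~$\F_t$-submartingale — this does not follow merely from each~$|D_t(x)|$ being a submartingale, since a supremum of submartingales need not be one. Instead I would argue pointwise: for every fixed~$x$, monotonicity of conditional expectation together with conditional Jensen gives $\E[Y_{t+1}\mid\F_t]\geq \E[\,|D_{t+1}(x)|\mid\F_t\,]\geq \bigl|\E[D_{t+1}(x)\mid\F_t]\bigr| = |D_t(x)|$, and taking the supremum over~$x$ on the right-hand side yields $\E[Y_{t+1}\mid\F_t]\geq Y_t$. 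Since $y\mapsto e^{sy}$ is convex and nondecreasing for~$s>0$, $Z_t := e^{sY_t}$ is then a nonnegative~$\F_t$-submartingale, and it is integrable because $0\leq Y_t\leq t$.

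Next I would apply Doob's submartingale inequality to~$(Z_t)_{t=1}^{n}$ at level~$e^{sx}$, using that $\{\max_{1\leq t\leq n}Y_t\geq x\} = \{\max_{1\leq t\leq n}Z_t\geq e^{sx}\}$, to obtain $\P(\max_{1\leq t\leq n}Y_t\geq x)\leq e^{-sx}\E[Z_n] = e^{-sx}\,\E[\exp(s\,n\|\hat F_n - F\|_\infty)]$. Bounding the last expectation by~\eqref{eqn:mgf2} (with the given~$k$) shows the right-hand side is at most $(1+2\sqrt{k/e}\sqrt{2\pi})\exp\bigl(\tfrac{(1+1/k)n}{8}s^2 - sx\bigr)$. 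Finally I would minimize the exponent $\tfrac{(1+1/k)n}{8}s^2 - sx$ over~$s>0$: the minimizer is $s = \tfrac{4x}{(1+1/k)n}$, at which the exponent equals $-\tfrac{2x^2}{(1+1/k)n}$, which delivers exactly the claimed bound.

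I expect the only genuinely delicate point to be the submartingale property of $Y_t = \sup_x|D_t(x)|$, handled by the pointwise comparison above rather than by any abstract closure property; the measurability of $\sup_x|D_t(x)|$ is standard for the empirical process and I would simply note it. Everything else — the application of Doob's inequality, the substitution of~\eqref{eqn:mgf2}, and the one-variable optimization — is routine.
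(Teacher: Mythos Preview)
Your proposal is correct and follows essentially the same route as the paper: establish the submartingale property of $t\|\hat F_t - F\|_\infty$ (via the pointwise argument $\E[Y_{t+1}\mid\F_t]\geq |D_t(x)|$ for each~$x$, then take the supremum), pass to $e^{sY_t}$, apply Doob's maximal inequality, bound the terminal moment by~\eqref{eqn:mgf2}, and optimize in~$s$. The only cosmetic difference is that the paper applies conditional Jensen directly to $e^{sY_t}$ rather than first verifying $Y_t$ is a submartingale and then invoking the convex-nondecreasing closure property.
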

	\begin{proof}
		We fix~$n\in\N$ and~$s,x>0$. It holds that
		\begin{align*}
			\P\del[2]{\max_{1\leq t\leq n}t\enVert[0]{\hat{F}_t-F}_\infty\geq x}
			=
			\P\del[2]{\max_{1\leq t\leq n}e^{st\enVert[0]{\hat{F}_t-F}_\infty}\geq e^{sx}}.
		\end{align*}
		We now verify that~$\cbr[0]{e^{st\enVert[0]{\hat{F}_t-F}_\infty}}_{t\in\N}$ is a submartingale w.r.t.~$\cbr[0]{\mathcal{F}_t}_{t\in\N}$, the natural filtration generated by~$\cbr[0]{X_t}_{t\in\N}$. By Jensen's inequality for conditional expectations, it follows that for any pair of natural numbers~$t_1<t_2$,
		\begin{align}\label{eq:subMG}
			\E \del[1]{e^{st_2\enVert[0]{\hat{F}_{t_2}-F}_\infty}|\mathcal{F}_{t_1}}
			\geq
			e^{s\E\del[0]{t_2\enVert[0]{\hat{F}_{t_2}-F}_\infty|\mathcal{F}_{t_1}}}.
		\end{align}
		Next, since
		\begin{align*}
			t_2\enVert[0]{\hat{F}_{t_2}-F}_\infty
			=
			\sup_{z\in\R}\envert[2]{\sum_{r=1}^{t_2}(\mathds{1}\cbr[0]{X_{r}\leq z}-F(z))}
			\geq
			\envert[2]{\sum_{r=1}^{t_2}(\mathds{1}\cbr[0]{X_{r}\leq z_0}-F(z_0))},
		\end{align*}
		for all~$z_0\in\R$, another application of Jensen's inequality for conditional expectations yields
		\begin{align*}
			\E\del[1]{t_2\enVert[0]{\hat{F}_{t_2}-F}_\infty|\mathcal{F}_{t_1}}
			\geq
			\envert[2]{\sum_{r=1}^{t_2}\E\del[1]{\mathds{1}\cbr[0]{X_{r}\leq z_0}-F(z_0)|\mathcal{F}_{t_1}}}
			=
			\envert[2]{\sum_{r=1}^{t_1}(\mathds{1}\cbr[0]{X_{r}\leq z_0}-F(z_0))},
		\end{align*}
		the equality following from the independence of the random variables~$X_t$. Thus, we conclude that
		\begin{align*}
			\E\del[1]{t_2\enVert[0]{\hat{F}_{t_2}-F}_\infty|\mathcal{F}_{t_1}}
			\geq
			\sup_{z\in\R}\envert[2]{\sum_{r=1}^{t_1}(\mathds{1}\cbr[0]{X_{r}\leq z}-F(z))}
			=
			t_1\enVert[0]{\hat{F}_{t_1}-F}_\infty,
		\end{align*}
		which together with \eqref{eq:subMG} establishes that~$\cbr[0]{e^{st\enVert[0]{\hat{F}_t-F}_\infty}}_{t\in\N}$ is a (positive) submartingale. Hence, Doob's submartingale inequality together with Equation~\eqref{eqn:mgf2} shows for all~$s>0$ that
		\begin{align*}
			\P\del[2]{\max_{1\leq t\leq n}t\enVert[0]{\hat{F}_t-F}_\infty\geq x}
			\leq
			\frac{\E e^{sn\enVert[0]{\hat{F}_n-F}_\infty}}{e^{sx}} 
			\leq
			\del[2]{1+2 \sqrt{k/e} \sqrt{2\pi}} e^{(\frac{1}{2}+\frac{1}{2 k})\frac{n}{4}s^2-sx}.
		\end{align*}
		Minimizing the right hand side of the above display over~$s>0$ yields that
		\begin{align*}
			\P\del[2]{\max_{1\leq t\leq n}t\enVert[0]{\hat{F}_t-F}_\infty\geq x}
			\leq
			\del[2]{1+2\sqrt{k/e} \sqrt{2\pi}} e^{-\frac{2 x^2}{(1+1/k)n}}.
		\end{align*}
	\end{proof}
	
	\begin{lemma}\label{lem:sum_indsubgauss}
		Let~$X$ and~$Y$ be independent sub-Gaussian random variables with the same parameters~$D$ and~$\sigma^2$. Then, for every~$z>0$ and every~$k>0$,		
		\begin{align*}
			\P\del[0]{X+Y\geq z}
			\leq
			\del[2]{1+D\sqrt{k/e}\sqrt{2\pi}}^2e^{-\frac{z^2}{4(1+1/k)\sigma^2}}.
		\end{align*}
	\end{lemma}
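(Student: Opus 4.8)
The plan is to run a Chernoff-type argument, exploiting independence to factor the moment generating function and then invoking the exponential bound from Lemma~\ref{lem:mgf}. First I would fix $z > 0$, $k > 0$, and an arbitrary $s > 0$, and write $\P(X + Y \geq z) = \P(e^{s(X+Y)} \geq e^{sz}) \leq e^{-sz}\,\E[e^{sX}e^{sY}]$ by Markov's inequality. Since $X$ and $Y$ are independent, $\E[e^{sX}e^{sY}] = \E[e^{sX}]\,\E[e^{sY}]$, so the problem reduces to bounding each one-dimensional moment generating function.

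Next I would apply the \emph{second} inequality in~\eqref{eqn:mgfexp} of Lemma~\ref{lem:mgf} to each factor (both $X$ and $Y$ being sub-Gaussian with the same parameters $D$ and $\sigma^2$), obtaining $\E[e^{sX}] \leq (1 + D\sqrt{k/e}\sqrt{2\pi})\, e^{(\frac12 + \frac1{2k})s^2\sigma^2}$ and the identical bound for $\E[e^{sY}]$. Multiplying the two bounds and combining with the Markov step gives $\P(X+Y\geq z) \leq (1 + D\sqrt{k/e}\sqrt{2\pi})^2 \exp\bigl((1 + 1/k)s^2\sigma^2 - sz\bigr)$, valid for every $s>0$.

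Finally I would optimize the exponent over $s > 0$: the quadratic $s \mapsto (1+1/k)\sigma^2 s^2 - sz$ is minimized at $s^\star = z/\bigl(2(1+1/k)\sigma^2\bigr)$, which is strictly positive, and there it takes the value $-z^2/\bigl(4(1+1/k)\sigma^2\bigr)$. Substituting $s = s^\star$ yields exactly the asserted inequality.

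There is essentially no serious obstacle here; this is the textbook sub-Gaussian sum bound, and it parallels the proof of Lemma~\ref{lem:maxDKW} (Doob's inequality plus~\eqref{eqn:mgf2}) but with the submartingale maximal step replaced by the trivial factorization of the joint moment generating function. The only points requiring a modicum of care are (i) using the second rather than the first inequality in~\eqref{eqn:mgfexp}, since it is the one that isolates a clean exponent $e^{(\frac12 + \frac1{2k})s^2\sigma^2}$ whose square produces the factor $(1+1/k)$ appearing in the denominator of the final exponent, and (ii) checking that the minimizer $s^\star$ is positive, so that the Chernoff bound is legitimately applied at that value. No convexity, measurability, or centering subtleties arise, since sub-Gaussianity here is defined via a two-sided tail bound and does not presuppose mean zero.
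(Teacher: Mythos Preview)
Your proposal is correct and follows essentially the same argument as the paper: Markov's inequality, factorization of the moment generating function by independence, application of the second inequality in Lemma~\ref{lem:mgf} to each factor, and minimization of the resulting exponent over~$s>0$. The paper's proof is identical in structure and detail.
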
	
	
	\begin{proof}
		Note that by the independence of~$X$ and~$Y$, for all~$s>0$,~$k>0$, and~$z > 0$,
		\begin{align*}
			\P\del[0]{X+Y\geq z}
			=
			\P\del[0]{e^{s(X+Y)}\geq e^{sz}}
			\leq
			\frac{\E e^{sX}\E e^{sY}}{e^{sz}}
			\leq
			\del[2]{1+D\sqrt{k/e} \sqrt{2\pi}}^2 e^{(1+\frac{1}{k})s^2\sigma^2-s z},
		\end{align*}
		where the second estimate follows from Lemma \ref{lem:mgf}. Minimizing the right-hand side of the above display in~$s>0$ yields that
		\begin{align*}
			\P\del[0]{X+Y\geq z}
			\leq
			\del[2]{1+D \sqrt{k/e} \sqrt{2\pi}}^2e^{-\frac{z^2}{4(1+1/k)\sigma^2}}.
		\end{align*}
	\end{proof}
	
	In the following result, we interpret the maximum over an empty set as $0$.
	
	\begin{corollary}\label{cor:summax_DKW}
		Let~$K \in \N$, and let~$\cbr[0]{X_{1,t}}_{t\in\N},\hdots,\cbr[0]{X_{K,t}}_{t\in\N}$ be independent sequences of random variables. Suppose further that for every~$i =  1, \hdots, K$ the random variables~$\cbr[0]{X_{i,t}}_{t\in\N}$ are i.i.d.~with cdf~$F_{i}$. Denote by~$\hat{F}_{i,n}$ the empirical cdf based on~$X_{i, 1}, \hdots, X_{i, n}$. Then, for any~$A\subseteq \cbr[0]{1,\hdots,K}$, any~$x>0$,~$k > 0$ and~$n \in \N$
		\begin{align*}
			\P\del[2]{\max_{i\in A}\enVert[0]{\hat{F}_{i,n}-F_i}_\infty + \max_{i \notin A}\enVert[0]{\hat{F}_{i,n}-F_i}_\infty\geq x}
			\leq
			\del[2]{1+K\sqrt{k/e}\sqrt{8\pi}}^2e^{-\frac{x^2n}{(1+1/k)}}.
		\end{align*}
	\end{corollary}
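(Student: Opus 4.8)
The plan is to reduce the statement to Lemma~\ref{lem:sum_indsubgauss} by exhibiting the two maxima on the left-hand side as \emph{independent} sub-Gaussian random variables with explicitly computable parameters. Write $U := \max_{i \in A} \enVert[0]{\hat{F}_{i,n} - F_i}_\infty$ and $V := \max_{i \notin A} \enVert[0]{\hat{F}_{i,n} - F_i}_\infty$, using the convention that a maximum over the empty set equals~$0$.

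First I would record that, by the DKWM inequality, for each $i = 1, \hdots, K$ and every $x > 0$ one has $\P\del[0]{\enVert[0]{\hat{F}_{i,n} - F_i}_\infty \geq x} \leq 2 e^{-2 n x^2}$, and since $\enVert[0]{\hat{F}_{i,n} - F_i}_\infty \geq 0$ this is precisely the statement that $\enVert[0]{\hat{F}_{i,n} - F_i}_\infty$ is sub-Gaussian with parameters $D = 2$ and $\sigma^2 = 1/(4n)$ in the sense defined before Lemma~\ref{lem:mgf}. A union bound then yields $\P\del[0]{U \geq x} \leq |A| \cdot 2 e^{-2 n x^2} \leq 2K e^{-2 n x^2}$ for every $x > 0$, and likewise $\P\del[0]{V \geq x} \leq 2K e^{-2 n x^2}$; these inequalities hold trivially also when $A = \emptyset$ or $A = \cbr[0]{1, \hdots, K}$, in which case the corresponding maximum is the constant~$0$. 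Hence both $U$ and $V$ are sub-Gaussian with the \emph{common} parameters $D = 2K$ and $\sigma^2 = 1/(4n)$.

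Next, $U$ is a measurable function of $\cbr[0]{X_{i,t} : i \in A,\ t \leq n}$ while $V$ is a measurable function of $\cbr[0]{X_{i,t} : i \notin A,\ t \leq n}$; as the sequences $\cbr[0]{X_{i,t}}_{t\in\N}$ are independent across $i$, this makes $U$ and $V$ independent. Applying Lemma~\ref{lem:sum_indsubgauss} with $X = U$, $Y = V$, $z = x$, and the parameters $D = 2K$, $\sigma^2 = 1/(4n)$ gives
\begin{align*}
	\P\del[0]{U + V \geq x}
	\leq
	\del[2]{1 + 2K \sqrt{k/e}\sqrt{2\pi}}^2 e^{-\frac{x^2}{4(1 + 1/k)\cdot 1/(4n)}}
	=
	\del[2]{1 + K \sqrt{k/e}\sqrt{8\pi}}^2 e^{-\frac{x^2 n}{1 + 1/k}},
\end{align*}
where in the last step I used $2 \sqrt{2\pi} = \sqrt{8\pi}$ and simplified the exponent; this is the asserted bound. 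There is no genuine difficulty here: the only point requiring care is the bookkeeping of the sub-Gaussian parameters through the union bound — so that the final multiplicative constant comes out exactly as $\del[2]{1 + K\sqrt{k/e}\sqrt{8\pi}}^2$ rather than something larger (i.e.\ keeping the $D$-parameter at $2K$ rather than $2|A|$) — together with invoking the empty-maximum convention in the degenerate cases.
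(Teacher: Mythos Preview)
Your proof is correct and follows essentially the same approach as the paper: the paper's proof is a terse two-sentence version of exactly your argument---use DKWM plus a union bound to see that $\max_{i\in B}\enVert[0]{\hat{F}_{i,n}-F_i}_\infty$ for $B\in\{A,A^c\}$ are sub-Gaussian with $D=2K$ and $\sigma^2=1/(4n)$, note independence, and invoke Lemma~\ref{lem:sum_indsubgauss}.
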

	
	\begin{proof}
		By a union bound and the DKWM inequality~$\max_{i\in B}\enVert[0]{\hat{F}_{i,n}-F_i}_\infty$ for~$B \in \{A, A^c\}$ are sub-Gaussian with~$D=2K$ and~$\sigma^2=1/(4n)$. As the two random variables are also independent, Lemma \ref{lem:sum_indsubgauss} yields the desired estimate.
	\end{proof}
	\section{Proofs and additional results for Section~\ref{sec:lowbound}}\label{app:prfs1}
	
	\subsection{A general lower bound}\label{sec:LB}
	
	In this section, we start with a lower bound result that will be instrumental for proving Theorem~\ref{thm:lbn} and the first part of Theorem~\ref{Thm: UniformLowermixed}. We need to introduce some further notation. Assume that~$\mathscr{M}_K$ satisfies Assumption~\ref{as:M}, i.e.,~$\mathscr{M}_K$ is closed and contains at least two distinct elements. Given~$\mathcal{H} = \{\mathcal{H}_1, \hdots, \mathcal{H}_l\}$, a partition of~$\{1, \hdots K\}$, we define for every~$i = 1, \hdots, l$ the~$(K\times|\mathcal{H}_i|)$-dimensional matrix~$B_i = (e_j(K))_{j \in \mathcal{H}_i}$, and set~$\kappa(\mathcal{H}, \mathscr{M}_K)$ equal to
	\begin{equation}\label{eqn:kappaH}
		\sup \left\{
		\left[
		\max_{\delta \in \mathscr{M}_K} v'\delta - \sup_{\delta \in \mathscr{T}} v'\delta\right] \wedge \min_{i = 1}^l
		\left[
		\max_{\delta \in \mathscr{M}_K} (v + B_iw_i)'\delta 
		-
		\sup_{\delta \in  \mathscr{M}_K \setminus \mathscr{T}} (v + B_iw_i)'\delta
		\right]\right\},
	\end{equation}
	where the supremum is taken over all nonempty Borel sets~$\mathscr{T} \subsetneqq \mathscr{M}_K$, all~$v \in [-1,1]^K$, and all~$w_i \in [-1,1]^{|\mathcal{H}_i|}$ ($i = 1, \hdots, l$). Note that for the partition~$\mathcal{H} = \{\{1\}, \hdots, \{K\}\}$ the expression~$\kappa(\mathcal{H}, \mathscr{M}_K)$ coincides with~$\kappa(\mathscr{M}_K)$. Note further that~$\kappa(\mathcal{H}, \mathscr{M}_K) \geq 0$ (under Assumption~\ref{as:M}). The following lemma is crucial for establishing Theorem~\ref{thm:lbn}; the main result in this section is given subsequently. 
	\begin{lemma}\label{lem:Hsimp}
		Let~$\mathscr{M}_K$ satisfy Assumption~\ref{as:M}. Then
		\begin{equation*}
			4 \times \kappa\left(\{\{1, \hdots, K\}\}, \mathscr{M}_K\right) \geq \mathrm{diam}^2(\mathscr{M}_K) := \max \left\{ \|\nu - \gamma\|^2 : (\nu, \gamma) \in \mathscr{M}_K \times \mathscr{M}_K \right \}.
		\end{equation*}
	\end{lemma}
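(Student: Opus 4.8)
The plan is to exhibit an explicit choice of the free parameters in the definition \eqref{eqn:kappaH} of $\kappa(\{\{1,\dots,K\}\},\mathscr{M}_K)$ that already certifies the claimed lower bound. For the trivial partition the only matrix appearing is $B_1 = I_K$, so the two free vectors are $v\in[-1,1]^K$ and $w_1\in[-1,1]^K$, and one free nonempty proper Borel subset $\mathscr{T}\subsetneq\mathscr{M}_K$; the term we must lower bound is $\bigl[\max_{\delta\in\mathscr{M}_K} v'\delta - \sup_{\delta\in\mathscr{T}} v'\delta\bigr]\wedge\bigl[\max_{\delta\in\mathscr{M}_K}(v+w_1)'\delta - \sup_{\delta\in\mathscr{M}_K\setminus\mathscr{T}}(v+w_1)'\delta\bigr]$. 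Note that under Assumption~\ref{as:M} the set $\mathscr{M}_K$ is compact (closed and bounded), so $d := \mathrm{diam}(\mathscr{M}_K)$ is attained, say by $\nu,\gamma\in\mathscr{M}_K$ with $\|\nu-\gamma\| = d$, and $d>0$ since $\mathscr{M}_K$ has at least two elements.

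The key device is the linear functional $\ell(\delta) := (\nu-\gamma)'\delta$, for which $\ell(\nu)-\ell(\gamma) = \|\nu-\gamma\|^2 = d^2$. I would set $c := \tfrac12(\ell(\nu)+\ell(\gamma))$, so that $\ell(\gamma) = c - d^2/2$ and $\ell(\nu) = c + d^2/2$, and take $\mathscr{T} := \{\delta\in\mathscr{M}_K : \ell(\delta)\le c\}$, which is a closed (hence Borel) set containing $\gamma$ but not $\nu$, so it is nonempty and a proper subset of $\mathscr{M}_K$. For the vectors I would take $v := \tfrac12(\nu-\gamma)$ and $w_1 := \gamma-\nu$; since the coordinates of $\nu$ and $\gamma$ lie in $[0,1]$, both $v$ and $w_1$ lie in $[-1,1]^K$, and $v+w_1 = -\tfrac12(\nu-\gamma)$.

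With these choices both brackets should come out at least $d^2/4$. For the first: $\max_{\delta\in\mathscr{M}_K} v'\delta \ge v'\nu = \tfrac12\ell(\nu) = \tfrac12(c+d^2/2)$, while $\sup_{\delta\in\mathscr{T}} v'\delta = \tfrac12\sup_{\delta\in\mathscr{T}}\ell(\delta)\le \tfrac12 c$ by definition of $\mathscr{T}$, so the difference is $\ge d^2/4$. For the second: $\max_{\delta\in\mathscr{M}_K}(v+w_1)'\delta = -\tfrac12\min_{\delta\in\mathscr{M}_K}\ell(\delta)\ge -\tfrac12\ell(\gamma) = -\tfrac12(c-d^2/2)$, while $\mathscr{M}_K\setminus\mathscr{T} = \{\delta\in\mathscr{M}_K:\ell(\delta)>c\}$ gives $\sup_{\delta\in\mathscr{M}_K\setminus\mathscr{T}}(v+w_1)'\delta = -\tfrac12\inf_{\delta\in\mathscr{M}_K\setminus\mathscr{T}}\ell(\delta)\le -\tfrac12 c$, so again the difference is $\ge d^2/4$. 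Taking the minimum of the two and then the supremum over admissible $(\mathscr{T},v,w_1)$ yields $\kappa(\{\{1,\dots,K\}\},\mathscr{M}_K)\ge d^2/4$, i.e.\ the claim after multiplying by $4$.

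The one place where a naive attempt stumbles, and hence the point to get right, is the choice of $\mathscr{T}$: taking $\mathscr{T}=\{\gamma\}$ would work only if $\mathscr{M}_K$ were discrete, because when $\mathscr{M}_K$ is a continuum $\sup_{\delta\in\mathscr{M}_K\setminus\{\gamma\}}$ of a functional maximized at $\gamma$ can equal its value at $\gamma$, collapsing the second bracket to $0$. Using instead the halfspace cut $\{\ell\le c\}$ through the midpoint, together with splitting the ``unit budget'' so that $v$ is only half of $\nu-\gamma$ — leaving room for $w_1$ to flip the direction into $v+w_1=-\tfrac12(\nu-\gamma)$ — is exactly what makes both brackets simultaneously of size $d^2/4$ while keeping $v,w_1\in[-1,1]^K$; everything else is the elementary arithmetic indicated above.
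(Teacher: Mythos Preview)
Your proof is correct and follows essentially the same approach as the paper: the paper likewise chooses $v=(\nu-\gamma)/2$, $w_1=-2v=\gamma-\nu$, and the halfspace $\mathscr{T}=\{\delta\in\mathscr{M}_K: v'(\delta-\gamma)\le \|\nu-\gamma\|^2/4\}$, which is exactly your set $\{\ell(\delta)\le c\}$. The only cosmetic difference is that the paper invokes Cauchy--Schwarz (together with $\|\delta-\gamma\|\le d$) to identify the exact maxima of the two linear functionals, whereas you simply lower bound each maximum by evaluating at $\nu$ respectively $\gamma$, which is equally valid and slightly more direct.
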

	\begin{proof}
		Let~$(\nu, \gamma)$ solve~$\max_{(\nu, \gamma) \in \mathscr{M}_K \times \mathscr{M}_K} \|\nu - \gamma\|^2$, and set~$v = (\nu - \gamma)/2$ and~$w = -2v$. The Cauchy-Schwarz inequality then shows that~$\nu$ solves~$\max_{\delta \in \mathscr{M}_K} v'(\delta - \gamma) = \|\nu - \gamma\|^2/2$, and~$\gamma$ solves~$\max_{\delta \in \mathscr{M}_K} (-v)'(\delta -  \nu) = \|\nu - \gamma\|^2/2$. For the non-empty Borel set $$\mathscr{T} :=\{
		\delta \in \mathscr{M}_K : v'(\delta - \gamma) \leq \|\nu - \gamma\|^2/4
		\} \subsetneqq \mathscr{M}_K,$$ we obtain
		\begin{equation*}
			\begin{aligned}
				&\max_{\delta \in \mathscr{M}_K} v'\delta - \sup_{\delta \in \mathscr{T}} v'\delta = \max_{\delta \in \mathscr{M}_K} v'(\delta - \gamma) - \sup_{\delta \in \mathscr{T}} v'(\delta - \gamma) \geq \frac{\|\nu - \gamma\|^2}{4}, \\
				&\max_{\delta \in \mathscr{M}_K} (v + w)'\delta - \sup_{\delta \in \mathscr{M}_K \setminus \mathscr{T}} (v + w)'\delta = 
				\max_{\delta \in \mathscr{M}_K} (-v)'(\delta-\nu) - \sup_{\delta \in \mathscr{M}_K \setminus \mathscr{T}} (-v)'(\delta-\nu) \geq \frac{\|\nu - \gamma\|^2}{4},
			\end{aligned}
		\end{equation*}
		which shows that for this choice of~$\mathscr{T}$,~$v$, and~$w = B_1 w_1 = w_1$ (and for the partition~$H = \{\{1, \hdots, K\}\}$) the term in braces in~\eqref{eqn:kappaH} is not smaller than~$\|\nu - \gamma\|^2/2$.
	\end{proof}
	\begin{theorem}\label{prop: UniformLowermixed}
		Suppose Assumptions~\ref{as:dgp},~\ref{as:MAIN},~\ref{as:M} and~\ref{as:lbcov} hold. 
		Then there exists a constant~$c > 0$, independent of~$K$,~$n$ and~$\mathscr{M}_K$, such that for every policy~$\pi$ with recommendations in~$\mathscr{M}_K$, and any randomization measure~$\P_G$, it holds that
		\begin{equation*}
			\sup_{F^1, \hdots, F^K \in \{J_{\tau}:\tau \in [0,1]\}} \mathbb{E}[r_n(\pi, {\mathscr{M}_K})] \geq c \max_{\mathcal{H} \in \{ \text{Partitions of } \{1, \hdots, K\} \}}\kappa(\mathcal{H}, \mathscr{M}_K) \sqrt{|\mathcal{H}|/n}, ~~\forall n \geq K,
		\end{equation*}
		where the supremum is taken over all potential outcome vectors with independent marginals and cdfs in~$\{J_{\tau}:\tau \in [0,1]\}$.
	\end{theorem}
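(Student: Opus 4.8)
\emph{The plan} is a blockwise Le Cam (two-point) argument. The starting point is that along the segment $\{J_\tau:\tau\in[0,1]\}$ from Assumption~\ref{as:lbcov} the problem \emph{linearizes}: if $F^i=J_{\tau_i}$ for some $\tau=(\tau_1,\dots,\tau_K)\in[0,1]^K$, then since $\sum_i\delta_i=1$ we have $\langle\delta,\mathbf{F}\rangle=J_{\delta'\tau}$, so $\mathsf{T}(\langle\delta,\mathbf{F}\rangle)=\mathsf{T}(J_{\delta'\tau})$. Because $\tau\mapsto\mathsf{T}(J_\tau)$ is nondecreasing and obeys $\mathsf{T}(J_{\tau_2})-\mathsf{T}(J_{\tau_1})\geq c_-(\tau_2-\tau_1)$ for $\tau_1\leq\tau_2$, the maximum of $\delta\mapsto\mathsf{T}(J_{\delta'\tau})$ over $\mathscr{M}_K$ is attained where the linear map $\delta\mapsto\delta'\tau$ is maximal, and for \emph{any} recommendation $\hat\delta$,
\[
 r_n(\pi,\mathscr{M}_K)\;\geq\;c_-\Bigl(\max_{\delta\in\mathscr{M}_K}\delta'\tau-\hat\delta'\tau\Bigr).
\]
Hence it suffices to lower bound the expected linear regret of $\hat\delta$ in a $K$-armed problem in which arm $i$ returns i.i.d.\ draws from $J_{\tau_i}$.

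\emph{The construction}. Fix a partition $\mathcal{H}=\{\mathcal{H}_1,\dots,\mathcal{H}_l\}$ and any triple $(\mathscr{T},v,(w_i)_{i=1}^l)$ admissible in \eqref{eqn:kappaH}; let $\beta\geq0$ be the corresponding value in braces. Set $\lambda:=\min\{\tfrac18,\tfrac14\sqrt{l/n}\}$ and consider $l+1$ instances, each with independent marginals and all entries in $[\tfrac14,\tfrac34]$: a baseline $\tau^{(0)}:=\tfrac12\mathbf 1+\lambda v$, and for $j=1,\dots,l$ a perturbation $\tau^{(j)}:=\tau^{(0)}+\lambda B_jw_j$, which differs from $\tau^{(0)}$ only in the coordinates belonging to $\mathcal{H}_j$. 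Combining the linearization above with the first bracket of \eqref{eqn:kappaH}, under $\tau^{(0)}$ every recommendation in $\mathscr{T}$ has regret at least $c_-\lambda\beta$; combining it with the $\min_{i\leq l}$ bracket, under $\tau^{(j)}$ every recommendation in $\mathscr{M}_K\setminus\mathscr{T}$ has regret at least $c_-\lambda\beta$.

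\emph{The information step} is standard. Since the policy makes exactly $n$ assignments, $\sum_{j=1}^l\mathbb{E}_{\tau^{(0)}}\bigl[\sum_{i\in\mathcal{H}_j}S_{i,n}(n)\bigr]=n$, so some block $j^*$ is pulled at most $n/l$ times in expectation under $\tau^{(0)}$. By the standard divergence decomposition for bandit policies (the chain rule for relative entropy) together with the data-processing bound $\mathrm{KL}(J_p\|J_q)\leq\mathrm{KL}(\mathrm{Ber}(p)\|\mathrm{Ber}(q))\leq(p-q)^2/(q(1-q))$ (valid because $J_p$ is a fixed $\{H_1,H_2\}$-channel applied to $\mathrm{Ber}(p)$), the laws of the observed history satisfy $\mathrm{KL}(\mathbb{P}_{\tau^{(0)}}\|\mathbb{P}_{\tau^{(j^*)}})\leq\tfrac{16}{3}\lambda^2(n/l)\leq\tfrac13$ by the choice of $\lambda$; Pinsker's inequality then yields $\mathbb{P}_{\tau^{(0)}}(\hat\delta\in\mathscr{T})+\mathbb{P}_{\tau^{(j^*)}}(\hat\delta\notin\mathscr{T})\geq1-\sqrt{1/6}>\tfrac12$. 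Therefore $\max\{\mathbb{E}_{\tau^{(0)}}[r_n],\mathbb{E}_{\tau^{(j^*)}}[r_n]\}\geq\tfrac14c_-\lambda\beta\geq\tfrac{c_-}{32}\beta\sqrt{l/n}$, using $\lambda\geq\tfrac18\sqrt{l/n}$. Taking the supremum over admissible $(\mathscr{T},v,(w_i))$ replaces $\beta$ by $\kappa(\mathcal{H},\mathscr{M}_K)$, and taking the maximum over partitions $\mathcal{H}$ gives the claim with $c=c_-/32$.

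\emph{Main obstacle}. The real content is conceptual: recognizing that, once the problem is linearized along $\{J_\tau\}$, the opaque quantity $\kappa(\mathcal{H},\mathscr{M}_K)$ is precisely calibrated so that its first bracket certifies ``recommending inside $\mathscr{T}$ is costly under the baseline'' while its $\min$-bracket certifies ``recommending outside $\mathscr{T}$ is costly under the instance that perturbs only block $i$'', which is exactly what a clean two-point comparison requires; the $B_iw_i$ coupling (perturbing a whole block along a single common direction) is what turns $K$ arms into $l=|\mathcal{H}|$ effective arms and produces the $\sqrt{|\mathcal{H}|/n}$ rate. The only genuine bookkeeping is keeping the perturbed parameters in $[\tfrac14,\tfrac34]$ (so Bernoulli KL is controlled by a constant times the squared gap) while keeping the perturbation scale of order $\sqrt{|\mathcal{H}|/n}$; capping $\lambda$ at $\tfrac18$ resolves this and leaves the final constant depending only on $c_-$.
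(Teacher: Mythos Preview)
Your proof is correct and follows essentially the same two-point Le Cam route as the paper: linearize along $\{J_\tau\}$ so that regret is bounded below by $c_-$ times a linear gap, fix a near-optimal triple $(\mathscr{T},v,(w_i))$ from the definition of $\kappa(\mathcal{H},\mathscr{M}_K)$, use pigeonhole to find a block $j^*$ pulled at most $n/l$ times in expectation under the baseline, and compare the baseline to the $j^*$-perturbed instance via the chain rule for KL. The only differences are cosmetic: the paper obtains its per-arm KL control by citing an external lemma (yielding an abstract constant $\zeta$) rather than your explicit data-processing bound through the $\mathrm{Ber}(\tau)\mapsto J_\tau$ channel, and it closes the testing step with the Bretagnolle--Huber bound $\tfrac14\exp(-\mathrm{KL})$ instead of your Pinsker inequality.
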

	
	\begin{proof}
		We show that there exists a constant~$c$ as in the statement of the theorem, such that for every policy~$\pi$ with recommendations in~$\mathscr{M}_K$, any randomization measure~$\P_G$, and any partition~$\mathcal{H} = \{\mathcal{H}_1, \hdots, \mathcal{H}_l\}$ of~$\{1, \hdots, K\}$ it holds that
		\begin{equation}
			\sup_{F^1, \hdots, F^K \in \{J_{\tau}:\tau \in [0,1]\}} \mathbb{E}[r_n(\pi, {\mathscr{M}_K})] \geq c \kappa(\mathcal{H}, \mathscr{M}_K) \sqrt{l/n} , \quad \text{ for every } n \geq K.
		\end{equation}
		Fix a partition~$\mathcal{H}$, and abbreviate~$\kappa := \kappa(\mathcal{H}, \mathscr{M}_K)$ in what follows. If~$\kappa = 0$ there is nothing to show (recall that~$\kappa \geq 0$). We therefore assume throughout that~$\kappa > 0$. Let~$\varepsilon := 2/\sqrt{17} < 1/2$. Arguing as in Step~0 of the proof of Theorem 3.9 in \cite{kpv2}, we note that Assumption~\ref{as:lbcov} implies the existence of a~$\zeta > 0$, such that~$H_v := J_{1/2 + v}$ satisfies
		\begin{equation}\label{eqn:bilip}
			\mathsf{KL}^{1/2}(\mu_{H_{v}},\mu_{H_{w}})
			\leq \zeta |v-w|  \quad \text{ for every } v,  w \text{ in } [-\varepsilon, \varepsilon],
		\end{equation}
		where~$\mu_{H_v}$ denotes the (unique) probability measure corresponding to the cdf~$H_v$,~$\mathsf{KL}(\cdot,\cdot)$ denotes Kullback-Leibler divergence, and
		\begin{equation}\label{eqn:Tlowup}
			c_-(w - v) \leq \mathsf{T}(H_w) - \mathsf{T}(H_v) \leq C (w - v) \quad \text{ for every } v \leq w \text{ in } [-\varepsilon, \varepsilon],
		\end{equation}
		implying in particular that~$w \mapsto \mathsf{T}(H_w)$ is continuous and strictly increasing on~$[-\varepsilon, \varepsilon]$.
		
		Now, let~$\pi$ be a policy with recommendations in~$\mathscr{M}_K$. Fix~$n \geq K$, and denote~$\pi_{n, t} = \pi_t$ for~$t = 1, \hdots, n+1$. For~$t=1, \ldots, n$ and every~$ u  \in [-\varepsilon, \varepsilon]^K$ we denote by~$\mathbb{P}^t_{\pi, u}$ the distribution induced by the random vector~$Z_t$ (see~Equation~\eqref{eqn:zdef}) on the Borel sets of~$\mathbb{R}^{2t}$ when~$G_t$ are~i.i.d.~with distribution~$\mathbb{P}_G$, and~$Y_t = (Y_{1,t}, \hdots, Y_{K, t})$~with independent coordinates and marginal cdfs
		\begin{equation}\label{defdist}
			F^i = H_{u_i} \text{ for } i = 1, \hdots, K;
		\end{equation}
		correspondingly, we denote~$\mathbf{F}_{u}:= (H_{u_1}, \hdots, H_{u_K})$. We also denote the product measure~$\mathbb{P}^t_{\pi, u} \otimes \mathbb{P}_G$ by~$\tilde{\mathbb{P}}^t_{\pi, u}$, and write~$\mathbb{E}^t_{\pi, u}$ and~$\tilde{\mathbb{E}}^t_{\pi, u}$ for the expectation operators corresponding to~$\mathbb{P}^t_{\pi, u}$ and~$\tilde{\mathbb{P}}^t_{\pi, u}$, respectively. 
		
		For every~$\delta \in \mathscr{S}_K$ and every~$u \in [-\varepsilon, \varepsilon]^K$, we observe that
		\begin{equation}\label{eqn:mixwrite}
			\langle \delta, \mathbf{F}_{u} \rangle 
			= \sum_{j = 1}^K \delta_j J_{1/2 + u_j} 
			= J_{1/2 + \delta'u} = H_{\delta'u}.
		\end{equation}
		
		Recalling the definition of~$\kappa$ and the notation introduced before the statement of Theorem~\ref{prop: UniformLowermixed}, we choose~$v \in [-1,1]^K$ and~$w_i \in [-1,1]^{|\mathcal{H}_i|}$ for~$i = 1, \hdots, l$ and a Borel set~$\mathscr{T}$ satisfying~$\emptyset \neq \mathscr{T} \subsetneqq \mathscr{M}_K$, such that
		\begin{equation}\label{eqn:choosest}
			\min\left(
			\max_{\delta \in \mathscr{M}_K} v'\delta - \sup_{\delta \in \mathscr{T}} v'\delta,~~ \min_{i = 1}^l
			\left[
			\max_{\delta \in \mathscr{M}_K} (v + B_iw_i) '\delta
			-
			\sup_{\delta \in \mathscr{M}_K \setminus \mathscr{T}} (v + B_iw_i) '\delta
			\right]
			\right) > \kappa/2 > 0.
		\end{equation}
		Abbreviate~$\epsilon := \sqrt{l/n} \varepsilon$. Define~$v^* = (\epsilon/2) v \in [-\varepsilon, \varepsilon]^K$. Because the functions~$\pi_1, \hdots, \pi_n$ map into~$\mathcal{I}$ and since~$\mathcal{H}_1, \hdots, \mathcal{H}_l$ is a partition of~$\mathcal{I}$, it must hold that~$\sum_{j = 1}^l \sum_{t = 1}^n \mathds{1} \{ \pi_t \in \mathcal{H}_j \} = n$. Hence, there exists an index~$j^* \in \{1, \hdots, l\}$, such that
		\begin{equation}\label{eqn:obsin}
			\tilde{\mathbb{E}}^{n-1}_{\pi, v^*} \left[\sum_{t = 1}^n \mathds{1} \{ \pi_t \in \mathcal{H}_{j^*}  \} \right] \leq \frac{n}{l}.
		\end{equation}
		Fix such an index~$j^*$, define~$\overline{w} := v + B_{j^*} w_{j^*}$, and rescale this vector to obtain~$w^* := (\epsilon/2) \overline{w} \in [-\varepsilon, \varepsilon]^K$ (recalling that~$v$ and~$w_{j^*}$ have coordinates in~$[-1,1]$). By construction~$v^*_j = w_j^*$ for every~$j \notin \mathcal{H}_{j^*}$. 
		
		Equation~\eqref{eqn:mixwrite} implies
		\begin{equation}
			\max_{\delta \in \mathscr{M}_K} \mathsf{T}(\langle \delta, \mathbf{F}_{v^*} \rangle) - \sup_{\delta \in \mathscr{T}} \mathsf{T}(\langle \delta, \mathbf{F}_{v^*} \rangle) = \max_{\delta \in \mathscr{M}_K} \mathsf{T}(H_{\delta'v^*} ) - \sup_{\delta \in \mathscr{T}} \mathsf{T}(H_{\delta'v^*}).
		\end{equation}
		As observed after Equation~\eqref{eqn:Tlowup}, the function~$a \mapsto \mathsf{T}(H_a)$ is continuous and strictly increasing on~$[-\varepsilon, \varepsilon]$. Furthermore,~$\max_{\delta \in \mathscr{M}_K} [v^*]'\delta - \sup_{\delta \in \mathscr{T}} [v^*]'\delta >\epsilon \kappa/4$ as a consequence of~\eqref{eqn:choosest}. It thus follows from the lower bound in Equation~\eqref{eqn:Tlowup} that
		\begin{equation*}
			\max_{\delta \in \mathscr{M}_K} \mathsf{T}(\langle \delta, \mathbf{F}_{v^*} \rangle) - \sup_{\delta \in \mathscr{T}} \mathsf{T}(\langle \delta, \mathbf{F}_{v^*} \rangle) \geq 
			c_- \epsilon \kappa/4.
		\end{equation*}
		Likewise, it follows that
		\begin{equation*}
			\max_{\delta \in \mathscr{M}_K} \mathsf{T}(\langle \delta, \mathbf{F}_{w^*} \rangle) - \sup_{\delta \notin \mathscr{T}} \mathsf{T}(\langle \delta, \mathbf{F}_{w^*} \rangle) = \max_{\delta \in \mathscr{M}_K} \mathsf{T}(H_{\delta'w^*}) - \sup_{\delta \notin \mathscr{T}} \mathsf{T}(H_{\delta'w^*}) \geq c_- \epsilon \kappa/4.
		\end{equation*}
		Therefore, on the event~$\{\pi_{n+1} \in \mathscr{T} \}$ we have~$r_n(\pi, \mathscr{M}_K; \mathbf{F}_{v^*}; Z_n, G_{n+1}) \geq  c_- \epsilon \kappa/4$; whereas on the event~$\{\pi_{n+1} \notin \mathscr{T} \}$ we have~$r_n(\pi, \mathscr{M}_K; \mathbf{F}_{w^*}; Z_n, G_{n+1}) \geq  c_- \epsilon \kappa/4$. We conclude that 
		\begin{align*}
			\sup_{F^1, \hdots, F^K \in \{J_{\tau}:\tau \in [0,1]\}} \mathbb{E}[&r_n(\pi, {\mathscr{M}_K})] \geq  \frac{1}{2} \left[\tilde{\mathbb{E}}^n_{\pi, v^*}(r_n(\pi, {\mathscr{M}_K}))
			+ \tilde{\mathbb{E}}^n_{\pi, w^*}(r_n(\pi, {\mathscr{M}_K}))
			\right] \\ &\geq 
			\frac{c_- \epsilon \kappa}{8} \left[\tilde{\mathbb{E}}^n_{\pi, v^*}(\mathds{1}\{\pi_{n+1} \in \mathscr{T} \})
			+ \tilde{\mathbb{E}}^n_{\pi, w^*}(\mathds{1}\{\pi_{n+1} \notin \mathscr{T} \})
			\right] \\
			&\geq 	\frac{c_- \epsilon \kappa}{32} \exp \left( 
			- \mathsf{KL}(\mathbb{P}^n_{\pi, v^*},
			\mathbb{P}^n_{\pi, w^*})
			\right),
		\end{align*}
		where for the third inequality we used that the term in brackets is the sum of the Type~I and Type~II errors of the test~$\mathds{1}\{\pi_{n+1} \in \mathscr{T} \}$ for testing~$\tilde{\P}^n_{\pi, v^*}$ against~$\tilde{\P}^n_{\pi, w^*}$, together with, e.g., Theorem 2.2(iii) in~\cite{tsybakov2009introduction} and~$\mathsf{KL}(\tilde{\mathbb{P}}^n_{\pi, v^*},
		\tilde{\mathbb{P}}^n_{\pi, w^*}) = \mathsf{KL}(\mathbb{P}^n_{\pi, v^*},
		\mathbb{P}^n_{\pi, w^*})$, the latter following, e.g., from the chain rule for the Kullback-Leibler divergence (see, e.g., Lemma~B.1 in~\cite{kpv1} for a suitable statement). 
		
		Next note that (cf.~also Equation~\eqref{eqn:zdef}) for every~$u \in [-\varepsilon, \varepsilon]^K$ we can write 
		\begin{equation}\label{eqn:kern}
			\mathbb{P}^n_{\pi, u} = 
			\bigg( \sum_{j = 1}^K \mu_{H_{u_j}}
			\mathds{1} \{\pi_{n} = j\}
			\bigg)
			\otimes 
			\tilde{\mathbb{P}}^{n-1}_{\pi, u },
		\end{equation}
		as the term in parentheses defines a stochastic kernel on~$\mathcal{B}(\mathbb{R}) \times (\mathbb{R}^{2(n-1)} \times \mathbb{R})$. It thus follows from the chain rule for the Kullback-Leibler divergence, and since~$v^*_j = w^*_j$ holds for all~$j \notin \mathcal{H}_{j^*}$, that
		\begin{equation*}
			\mathsf{KL}(\mathbb{P}^n_{\pi, v^*},
			\mathbb{P}^n_{\pi,w^*}) \leq \mathsf{KL}(
			\mathbb{P}^{n-1}_{\pi,  v^* }
			, \mathbb{P}^{n-1}_{\pi,  w^* }
			) + \tilde{\mathbb{E}}^{n-1}_{\pi, v^* } \left( \mathds{1}\{\pi_n \in \mathcal{H}_{j^*} \} \right)\max_{j \in \mathcal{H}_{j^*}} \mathsf{KL}(\mu_{H_{v^*_{j}}}, \mu_{H_{w^*_{j}}}).
		\end{equation*}
		By induction and Equations~\eqref{eqn:bilip} and~\eqref{eqn:obsin}, it hence follows that
		\begin{equation*}
			\mathsf{KL}(\mathbb{P}^n_{\pi, v^*},
			\mathbb{P}^n_{\pi,u^*}) 
			\leq \frac{n}{l}  \zeta^2 \max_{j \in \mathcal{H}_{j^*}}(v_j^* - w_j^*)^2 \leq \frac{n}{l}  \zeta^2 \epsilon^2.
		\end{equation*}
		We thus arrive at
		\begin{equation*}
			\sup_{F^1, \hdots, F^K \in \{J_{\tau}:\tau \in [0,1]\}} \mathbb{E}[r_n (\pi, {\mathscr{M}_K})] \geq \frac{c_- \epsilon \kappa}{32}  \exp\left(- \frac{n}{l}  \zeta^2 \epsilon^2 \right) =
			\sqrt{l/n} \frac{c_- \varepsilon \kappa}{32} \exp\left(-  \zeta^2 \varepsilon^2 \right),
		\end{equation*}
		where the equality follows from~$\epsilon = \sqrt{l/n} \varepsilon$. 
	\end{proof}
	
	\subsection{Proof of Theorem~\ref{thm:lbn}}
	
	This follows immediately from Theorem~\ref{prop: UniformLowermixed} and Lemma~\ref{lem:Hsimp}.
	
	\subsection{Proof of Theorem~\ref{Thm: UniformLowermixed}}
	
	Equation~\eqref{eqn:rlow3} follows immediately from Theorem~\ref{prop: UniformLowermixed}. It remains to show that
	\begin{equation}\label{eqn:LB1A}
		\kappa(\mathscr{M}_K) \geq  \frac{1}{272} \times \max_{\delta \in \mathscr{M}_K} \|\delta\|^2 \times \min_{j = 1}^K (\max_{\delta \in \mathscr{M}_K} \delta_j - \min_{\delta \in \mathscr{M}_K} \delta_j)^3;
	\end{equation}
	and that
	\begin{equation}\label{eqn:LB2D}
		\kappa(\mathscr{M}_K) \geq (\min_{j = 1}^K \max_{\delta \in \mathscr{M}_K} \delta_j - 1/2)_+^2,
	\end{equation}
	where for a real number~$x$ we denote its positive part~$\max(x, 0)$ by~$(x)_+$.
	
	We collect some notation used in this proof. For~$j = 1, \hdots, K$ define
	\begin{equation*}
		\underline{\delta}_j := \min_{\delta \in \mathscr{M}_K} \delta_j, ~\overline{\delta}_j := \max_{\delta \in \mathscr{M}_K} \delta_j, ~\hat{\delta}_j := (\underline{\delta}_j+\overline{\delta}_j)/2, \text{ and } d_j := \frac{\overline{\delta}_j - \underline{\delta}_j}{2} = \overline{\delta}_j - \hat{\delta}_j = \hat{\delta}_j - \underline{\delta}_j \geq 0.
	\end{equation*}
	By Assumption~\ref{as:M} the minima and maxima in the previous display are all well defined. 
	
	We organize the proofs of the statements in Equations~\eqref{eqn:LB1A} and~\eqref{eqn:LB2D} in the following two subsections, respectively.
	
	\subsubsection{Proof of Equation~\eqref{eqn:LB1A}}
	
	Write the lower bound in Equation~\eqref{eqn:LB1A} as~$\lambda := 
	\|\mu\|^2\min_{i = 1}^K \frac{d_i^3}{34}$ for~$\mu \in \arg\max_{\delta \in \mathscr{M}_K} \|\delta\|$. Since~$\kappa(\mathscr{M}_K) \geq 0$ always holds, there is nothing to show if~$\lambda = 0$, and we shall thus assume~$\lambda > 0$ (implying $d_j > 0$ for $j = 1, \hdots, K$). To show~$\kappa(\mathscr{M}_K) \geq \lambda$, we now construct a non-empty Borel set~$\mathscr{T} \subsetneqq \mathscr{M}_K$, a vector~$v \in [-1,1]^K$, and weights~$w_i \in [-1,1]$ such that
	\begin{align}
		\label{eqn:ts1}
		&\max_{\delta \in \mathscr{M}_K} v'\delta - \sup_{\delta \in \mathscr{T}} v'\delta \geq \lambda, \\
		\label{eqn:ts2}
		&\max_{\delta \in \mathscr{M}_K} ~(v'\delta + w_i \delta_i)
		-
		\sup_{\delta \in \mathscr{M}_K \setminus \mathscr{T}} (v'\delta + w_i \delta_i)
		\geq \lambda \text{ for } i = 1, \hdots, K.
	\end{align}
	%
	We set~$v := \tau \mu \in [0,1]^K$ for~$\tau := \frac{8}{17} \min_{i = 1, \hdots, K} d_i$, 
	define the (Borel) set
	\begin{equation*}
		\mathscr{T} := \{ \delta \in \mathscr{M}_K : \|\mu\|^2 - \mu'\delta  \geq \tau^{-1} \lambda\} = \{ \delta \in \mathscr{M}_K : v'\mu - v'\delta  \geq \lambda \},
	\end{equation*}
	and, for~$i = 1, \hdots, K$, set~$w_i$ to~$-1$ if~$\mu_i \geq \hat{\delta}_i$, and to~$1$ if~$\mu_i < \hat{\delta}_i$. To proceed (and to show that~$\mathscr{T}$ is a non-empty strict subset of~$\mathscr{M}_K$, in particular), we need an auxiliary result: We claim that for every~$i = 1, \hdots, K$
	\begin{equation}\label{eqn:ineqT}
		\|\mu\|^2 -\mu'\gamma  \geq  \|\mu\|^2 
		d_i^2/16,~~ \forall \gamma \in \big\{\delta \in \mathscr{M}_K: 
		w_i \delta_i  \geq w_i [2\hat{\delta}_i+(1-w_i ) \underline{\delta}_i + (1+w_i) \overline{\delta}_i]/4
		\big\}.
	\end{equation}
	Note that the set in the previous display is non-empty, which follows upon choosing~$\gamma$ with $i$-th coordinate equal to~$\underline{\delta}_i$ (equal to~$\overline{\delta}_i$) if~$w_i = -1$ (if $w_i = 1$). To prove the claim fix~$i$, let~$\gamma \in \mathscr{M}_K$ be such that~$w_i \gamma_i \geq w_i (2\hat{\delta}_i+(1-w_i ) \underline{\delta}_i + (1+w_i) \overline{\delta}_i)/4$, and bound
	\begin{equation*}
		\frac{d_i^2}{4} \leq (\mu_i - \gamma_i)^2 \leq \|\mu-\gamma\|^2 = \|\mu\|^2 + \|\gamma\|^2 - 2\|\mu\|\|\gamma\|\cos(\theta),
	\end{equation*}
	for~$\theta$ such that~$\cos(\theta) = \mu'\gamma/(\|\mu\|\|\gamma\|)$. Noting that~$\|\mu\|\|\gamma\| \leq 1$, as~$\mu$ and~$\gamma$ are elements of~$\mathscr{S}_K$, we obtain for~$x := \|\mu\|/\|\gamma\| \geq 1$ that
	\begin{equation*}
		\cos(\theta) \leq \frac{1}{2}\left(x + x^{-1} - d_i^2/4 \right). 
	\end{equation*}
	If~$1 \leq x\leq 1+\frac{d_i^2}{8}$, then the upper bound just derived can be further upper bounded by~$1-\frac{d_i^2}{16}$, which allows us to conclude that
	\begin{equation}\label{eqn:betw}
		\mu'\gamma - \mu'\mu = \cos(\theta)\|\gamma\|\|\mu\| - \|\mu\|^2 \leq -\|\mu\|^2d_i^2/16.
	\end{equation}
	Multiplying by~$-1$ we obtain the first inequality in~\eqref{eqn:ineqT}. If~$x >  1+d_i^2/8$, then we obtain
	\begin{equation*}
		\mu'\gamma - \mu'\mu \leq \|\mu\|\|\gamma\| - \|\mu\|^2 = \|\mu\|^2(x^{-1}-1) \leq- \frac{\|\mu\|^2}{8}\left(\frac{d_i^2}{1+d_i^2/8}\right),
	\end{equation*}
	which is bounded from above by the upper bound in~\eqref{eqn:betw}, and which proves the claim.
	
	The equality~$\tau^{-1} \lambda = \|\mu\|^2\min_{i = 1}^K d_i^2/16$ implies that~$\mathscr{T}$ contains the non-empty set in~\eqref{eqn:ineqT} (for every~$i$). Hence~$\mathscr{T}$ is non-empty. Furthermore,~$\mu \notin \mathscr{T}$, as~$\tau^{-1} \lambda > 0$. This establishes that~$\mathscr{T}$ is strictly contained in~$\mathscr{M}_K$. By definition of~$\mathscr{T}$, this now delivers~\eqref{eqn:ts1}.
	
	We now prove~\eqref{eqn:ts2}, and shall use that for every~$\delta \in \mathscr{M}_K$ such that~$\delta \neq \mu$, the Cauchy-Schwarz inequality implies
	\begin{equation}\label{eqn:CSEc}
		v' \delta = \tau \mu'\delta < \tau \|\mu\| \|\delta\| \leq \tau \|\mu\|^2 = v'\mu,
	\end{equation}
	in particular~$\max_{\delta \in \mathscr{M}_K} v'\delta = v'\mu$. 
	Now, fix~$i \in \{1, \hdots, K\}$. Note that the definition of~$\mathscr{T}$,~$\tau^{-1} \lambda = \|\mu\|^2\min_{i = 1}^K d_i^2/16$ and the statement in~\eqref{eqn:ineqT} implies~$$
	w_i\delta_i < 
	w_i \left(2\hat{\delta}_i+(1-w_i ) \underline{\delta}_i + (1+w_i) \overline{\delta}_i\right)/4 
	\text{ for every } \delta \in \mathscr{M}_K \setminus \mathscr{T}.$$
	Suppose first that~$w_i = -1$. Then, the previous display implies~$\delta_i > (\hat{\delta}_i + \underline{\delta}_i)/2$. Let~$\eta \in \mathscr{M}_K$ be such that~$\eta_i = \underline{\delta}_i$, and let~$\delta \in \mathscr{M}_K \backslash \mathscr{T}$. Applying~\eqref{eqn:CSEc} we obtain
	\begin{equation*}
		(\eta'v + w_i\eta_i) - (\delta'v + w_i \delta_i) = \eta'v + (\delta_i - \underline{\delta}_i) - \delta'v \geq (\hat{\delta}_i/2 - \underline{\delta}_i/2) - \mu'v = \frac{d_i}{2} - \tau\|\mu\|^2,
	\end{equation*}
	which by the definition of~$\tau$ and~$\|\mu\|^2\leq 1$ is bounded from below by
	\begin{equation}
		\frac{d_i}{2} (1-16/17) = d_i/34 \geq \lambda,
	\end{equation}
	which proves Equation~\eqref{eqn:ts2}. Assume next that~$w_i = 1$, let~$\eta \in \mathscr{M}_K$ be such that~$\eta_i = \overline{\delta}_i$, and~$\delta \in \mathscr{M}_K \backslash \mathscr{T}$. Then one has~$\delta_i < (\hat{\delta}_i + \overline{\delta}_i)/2$, which gives
	\begin{equation*}
		(\eta'v + w_i\eta_i) - (\delta'v + w_i \delta_i) \geq \eta'v + (\overline{\delta}_i/2 -  \hat{\delta}_i/2) - \delta'v \geq \frac{d_i}{2} - \tau\|\mu\|^2,
	\end{equation*}
	which again proves Equation~\eqref{eqn:ts2}.

	\subsubsection{Proof of Equation~\eqref{eqn:LB2D}}
	
	If~$\min_{j = 1}^K \overline{\delta}_j - 1/2 \leq 0$, there is nothing to show. Hence, we assume from now on that~$\min_{j = 1}^K \overline{\delta}_j - 1/2 =: \rho > 0$, and thus in particular
	\begin{equation}\label{eqn:uprho}
		\overline{\delta}_j - 1/2 \geq \rho, \text{ for every } j = 1, \hdots, K.
	\end{equation}
	It remains to verify that~$\kappa(\mathscr{M}_K) \geq \rho^2$. Assuming~\eqref{eqn:uprho} there must exist an index~$j^* \in \{1, \hdots, K\}$, such that
	\begin{equation}\label{eqn:underdelta}
		\underline{\delta}_{j^*}  \leq \frac{1}{2} - \rho.
	\end{equation}
	Now, set~$v = (0, \hdots, 0, \rho, 0, \hdots, 0)' \in [-1,1]^K$ ($\rho$ being at the~$j^*$-th coordinate), and define for~$i = 1, \hdots,K$ the weights~$w_i = 1$ for~$i \neq j^*$ and~$w_{j^*} = -1$. Finally, define~$\mathscr{T} := \{\delta \in \mathscr{M}_K : \delta_{j^*} \leq 1/2\}$ (a non-empty Borel set, that does not coincide with~$\mathscr{M}_K$). 
	
	To verify~$\kappa(\mathscr{M}_K) \geq \rho^2$ we now show
	\begin{equation}
		\min\left(
		\max_{\delta \in \mathscr{M}_K} v'\delta - \sup_{\delta \in \mathscr{T}} v'\delta,~ \min_{i = 1}^K
		\left[
		\max_{\delta \in \mathscr{M}_K} (v'\delta + w_i  \delta_i )
		-
		\sup_{\delta \in \mathscr{M}_K \setminus \mathscr{T}} (v'\delta + w_i \delta_i)
		\right]
		\right) \geq \rho^2.
	\end{equation}
	We start with 
	\begin{equation*}
		\max_{\delta \in \mathscr{M}_K} v'\delta - \sup_{\delta \in \mathscr{T}} v'\delta = \max_{\delta \in \mathscr{M}_K} \rho \delta_{j^*} - \sup_{\delta \in \mathscr{M}_K: \delta_{j^*} \leq 1/2} \rho \delta_{j^*} \geq \rho \overline{\delta}_{j^*} - \rho/2 \geq \rho^2,
	\end{equation*}
	where the second inequality follows from~\eqref{eqn:uprho}. Next, consider an index~$i \neq j^*$ and write 
	\begin{equation*}
		\max_{\delta \in \mathscr{M}_K} (v'\delta + w_i \delta_i) 
		-
		\sup_{\delta \in \mathscr{M}_K \setminus \mathscr{T}} (v'\delta + w_i \delta_i)  = 
		\max_{\delta \in \mathscr{M}_K}
		(\delta_{j^*}\rho + \delta_i )
		-
		\sup_{\delta \in \mathscr{M}_K: \delta_{j^*} > 1/2 } (\delta_{j^*}\rho + \delta_i).
	\end{equation*}
	The maximum to the right can trivially be bounded from below by~$\overline{\delta}_i \geq 1/2 + \rho$; the supremum can be upper bounded by (observing that~$\rho \in (0, 1/2]$)
	\begin{equation*}
		\sup_{\delta \in \mathscr{M}_K: \delta_{j^*} > 1/2 } (\delta_{j^*}\rho + \delta_i) \leq 
		\sup_{\delta \in \mathscr{M}_K: \delta_{j^*} > 1/2 }[\delta_{j^*}\rho + (1-\delta_{j^*})] \leq 1 + (\rho - 1)/2 = 1/2 + \rho/2.
	\end{equation*}
	Combining what we have just observed yields
	\begin{equation*}
		\max_{\delta \in \mathscr{M}_K} (v'\delta + w_i \delta_i) 
		-
		\sup_{\delta \in \mathscr{M}_K \setminus \mathscr{T}} (v'\delta + w_i \delta_i) \geq  1/2 + \rho - [1/2 + \rho/2] = \rho/2 \geq \rho^2.
	\end{equation*}
	Finally, consider
	\begin{equation*}
		\max_{\delta \in \mathscr{M}_K} (v'\delta + w_{j^*} \delta_{j^*})
		-
		\sup_{\delta \in \mathscr{M}_K \setminus \mathscr{T}} (v'\delta + w_{j^*} \delta_{j^*})
		=
		\max_{\delta \in \mathscr{M}_K}
		\delta_{j^*}(\rho - 1)
		-
		\sup_{\delta \in \mathscr{M}_K: \delta_{j^*} > 1/2 } \delta_{j^*}(\rho - 1).
	\end{equation*}
	Since~$\rho-1$ is negative, the maximum to the right equals~$\underline{\delta}_{j^*}(\rho-1)$. For the same reason, the supremum  is bounded from above by~$(\rho-1)/2$. Together this implies the lower bound~$$(\rho-1)(\underline{\delta}_{j^*} - 1/2) = (1-\rho)(1/2 - \underline{\delta}_{j^*}) \geq \rho (1-\rho) \geq \rho^2,$$
	where we used Equation~\eqref{eqn:underdelta} and~$\rho \in (0, 1/2]$.
	
	\section{Proof of Theorem~\ref{thm:NMAPup}}\label{app:es}
	
	Before we prove the theorem, we verify that the functions~$\pi_{n,t}$ defined in Policy~\ref{pol:es} are measurable under Assumption~\ref{as:MB}: 
	It is obvious that for any~$n \in \N$ the functions~$\hat{\pi}_{n,t}$ for~$t = 1, \hdots, n$ are measurable, simply because they are constant. Assumption~\ref{as:MB} guarantees, for every~$n \in \N$ and every~$\delta \in \mathscr{S}_K$, the measurability of~$y \mapsto \mathsf{T}(\langle \delta, \mathbf{\hat{F}}_{n, \Pi_n}\rangle)$ (interpreted as a function on~$\R^n$). Thus, the measurability of the selection~$\hat{\pi}_{n, n+1}$ immediately follows, noting that for every~$\bar{\delta} \in \mathscr{M}_K^n$ we may write~$\{ \hat{\pi}_{n,n+1} = \bar{\delta}\}$ as the intersection~$$ \bigcap_{\delta \in \mathscr{M}_K^n : \delta \geq \bar{\delta}  } \{
	\mathsf{T}(\langle \bar{\delta}, \mathbf{\hat{F}}_{n, \Pi_n}\rangle) \geq
	\mathsf{T}(\langle \delta, \mathbf{\hat{F}}_{n, \Pi_n}\rangle) \} \cap
	\bigcap_{\delta \in \mathscr{M}_K^n : \delta < \bar{\delta} } \{
	\mathsf{T}(\langle \bar{\delta}, \mathbf{\hat{F}}_{n, \Pi_n}\rangle) >
	\mathsf{T}(\langle \delta, \mathbf{\hat{F}}_{n, \Pi_n}\rangle) \}.$$

	Now, we move on to the proof of Theorem~\ref{thm:NMAPup}: Fix~$n \geq K$, and denote~$\hat{\pi}_{n,t} = \hat{\pi}_t$ for~$t = 1, \hdots, n+1$. We start with the observation that, using convexity of~$\mathscr{D}$ and of~$D_{cdf}([a,b])$ together with Assumption~\ref{as:MAIN} for the second inequality,
	\begin{align*}
		\big|\max_{\delta \in \mathscr{M}^n_K} \mathsf{T}(\langle \delta, \mathbf{\hat{F}}_{n, \Pi_n} \rangle)
		-
		\max_{\delta \in \mathscr{M}_K^n} \mathsf{T}(\langle \delta, \mathbf{F} \rangle)\big| 
		&\leq 
		\max_{\delta \in \mathscr{M}_K^n}
		|\mathsf{T}(\langle \delta, \mathbf{\hat{F}}_{n, \Pi_n}\rangle) - \mathsf{T}(\langle \delta, \mathbf{F} \rangle)| \\
		&\leq C \max_{\delta \in \mathscr{M}_K^n}
		\| \langle \delta, \mathbf{\hat{F}}_{n, \Pi_n}\rangle
		-
		\langle \delta, \mathbf{{F}}\rangle
		\|_{\infty} \\  &\leq C \max_{j = 1}^K \|\hat{F}_{n, \Pi_n}^j - F^j\|_{\infty}.
	\end{align*}
	Now,~$\hat{\pi}_{n+1} = \min \argmax_{\delta \in \mathscr{M}^n_K} \mathsf{T}(\langle \delta, \mathbf{\hat{F}}_{n, \Pi_n}\rangle)$ and~\eqref{eqn:opte} shows that the regret~$r_n(\hat{\pi}, \mathscr{M}_K) =\max_{\delta \in \mathscr{M}_K} \mathsf{T}(\langle \delta, \mathbf{F} \rangle) - \mathsf{T}(\langle \hat{\pi}_{n+1}, \mathbf{F} \rangle)$ is upper bounded by
	\begin{equation*}
		\max_{\delta \in \mathscr{M}_K^n} \mathsf{T}(\langle \delta, \mathbf{F} \rangle) - \max_{\delta \in \mathscr{M}_K^n} \mathsf{T}(\langle \delta, \mathbf{\hat{F}}_{n, \Pi_n}\rangle ) 
		+
		\mathsf{T}(\langle \hat{\pi}_{n+1}, \mathbf{\hat{F}}_{n, \Pi_n}\rangle ) 
		- 
		\mathsf{T}(\langle \hat{\pi}_{n+1}, \mathbf{F} \rangle) + \varepsilon(n).
	\end{equation*}
	From~$\hat{\pi}_{n+1} \in \mathscr{M}_K^n$ we obtain
	\begin{equation*}
		\mathsf{T}(\langle \hat{\pi}_{n+1}, \mathbf{\hat{F}}_{n, \Pi_n} \rangle) - \mathsf{T}(\langle \hat{\pi}_{n+1}, \mathbf{F} \rangle) \leq \max_{\delta \in \mathscr{M}_K^n}
		|\mathsf{T}(\langle \delta, \mathbf{\hat{F}}_{n, \Pi_n}\rangle) - \mathsf{T}(\langle \delta, \mathbf{F} \rangle)|,
	\end{equation*}
	for which an upper bound has already been developed above. Summarizing, we obtain
	\begin{equation}\label{eqn:rup}
		r_n(\hat{\pi}, \mathscr{M}_K) \leq  \varepsilon(n) + 2C \max_{j = 1}^K \|\hat{F}_{n, \Pi_n}^j - F^j\|_{\infty}.
	\end{equation}
	Denote~$M_j := \|\hat{F}_{n, \Pi_n}^j - F^j\|_{\infty}$. Using Jensen's inequality, Equation~\eqref{eqn:mgf1} and~$\beta_n \leq |\Pi_{n,j}|$, for every~$j$, we get for every~$t > 0$ that
	\begin{equation}
		\exp\left[t \mathbb{E}(\max_{j = 1}^K M_j)\right] \leq \sum_{j = 1}^K \mathbb{E}(e^{\frac{t}{|\Pi_{n,j}|} |\Pi_{n,j}| M_j}) \leq K\left(1+\sqrt{ 2\pi/\beta_n}te^{t^2/(8\beta_n)} \right),
	\end{equation}
	or equivalently~
	$$\mathbb{E}(\max_{j = 1}^K M_j) \leq t^{-1} \log\left(
	K\left(1+\sqrt{ 2\pi/\beta_n}te^{t^2/(8\beta_n)} \right)
	\right).$$ 
	Upon inserting~$t = \sqrt{8\beta_n \log(K)a}$ for some~$a \geq 1$, the upper bound becomes
	\begin{equation*}
		\left(\frac{1}{\sqrt{8a}} + \sqrt{a} \frac{\log(1+4\sqrt{\pi  \log(K^a)} K^a)}{\sqrt{8}\log(K^a)} \right) \sqrt{\log(K)/\beta_n},
	\end{equation*}
	which, by Lemma~\ref{lem:calc} below (with~$c = 4\sqrt{\pi}$, thus~$e^{e/c^2} < 2^a$), is bounded from above by
	\begin{equation*}
		\frac{1}{\sqrt{8}}\left(\frac{1}{\sqrt{a}}  + \sqrt{a} \frac{\log(1+4\sqrt{\pi \log(2^a)}2^a )}{\log(2^a)} \right) \sqrt{\log(K)/\beta_n}.
	\end{equation*}
	This yields
	\begin{equation}\label{eqn:updec}
		\mathbb{E}(\max_{j = 1}^K M_j) \leq \inf_{a \geq 1} \left(\frac{\log(2+8\sqrt{\pi \log(2^a)}2^a )}{\log(2)\sqrt{8a}} \right) \sqrt{\frac{\log(K)}{\beta_n}}	 \leq 1.505 \times \sqrt{\frac{\log(K)}{\beta_n}}	,
	\end{equation}
	where the last inequality is obtained by setting~$a = 3$. Together with Equation~\eqref{eqn:rup}, this proves the theorem.
	
	\begin{lemma}\label{lem:calc}
		Let~$c > 0$. Then~$x \mapsto \frac{\log(1+c\sqrt{\log(x)}x)}{\log(x)}$ is strictly decreasing on~$(e^{e/c^2}, \infty)$.
	\end{lemma}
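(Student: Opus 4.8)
The plan is to substitute $u = \log x$, which reduces the claim to showing that $u \mapsto g(u)/u$ is strictly decreasing on $(e/c^2, \infty)$, where $g(u) := \log\!\big(1 + c\sqrt{u}\,e^u\big)$. Since $x \mapsto \log x$ is a strictly increasing bijection of $(e^{e/c^2}, \infty)$ onto $(e/c^2, \infty)$, this reformulation is equivalent to the assertion of the lemma. Differentiating, the sign of $(g(u)/u)'$ equals that of $u g'(u) - g(u)$, so it suffices to prove the strict inequality $u g'(u) < g(u)$ for every $u > e/c^2$.

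The next step is to compute $g'$ explicitly. Writing $h(u) := c\sqrt{u}\,e^u$, one checks $h'(u) = h(u)\,(1+2u)/(2u)$, hence $g'(u) = h'(u)/(1+h(u)) = h(u)(1+2u)/\big(2u(1+h(u))\big)$ and therefore $u g'(u) = h(u)(1+2u)/\big(2(1+h(u))\big)$. Since $h(u) > 0$ forces $h(u)/(1+h(u)) < 1$, this yields the clean bound $u g'(u) < (1+2u)/2 = u + 1/2$.

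It then remains to show $g(u) \ge u + 1/2$ for $u > e/c^2$; the strict inequality $u g'(u) < g(u)$ then follows by chaining. Here I would use $g(u) = \log(1 + h(u)) > \log h(u) = \log c + \tfrac12 \log u + u$, so that $g(u) \ge u + 1/2$ holds as soon as $\log c + \tfrac12 \log u \ge \tfrac12$, i.e.\ $\log u \ge 1 - \log c^2 = \log(e/c^2)$, which is exactly the hypothesis $u > e/c^2$. Combining the two displays gives $u g'(u) < u + 1/2 \le g(u)$, with the first inequality strict, so $(g(u)/u)' < 0$ on $(e/c^2, \infty)$, completing the argument.

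This is essentially a routine one-variable calculus computation, so I do not expect a genuine obstacle; the only point requiring care is checking that the two crude estimates $h/(1+h) < 1$ and $\log(1+h) > \log h$ combine to give precisely the threshold $e/c^2$, which is why one must track the constants and keep the inequality $u > e/c^2$ (strict) at the step where strictness is needed.
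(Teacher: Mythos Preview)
Your proof is correct and takes essentially the same approach as the paper. Both arguments reduce, after a change of variable, to the key inequality $\log(1+c\sqrt{u}\,e^u) > u + \tfrac12$ for $u > e/c^2$, established via $\log(1+h) > \log h = \log c + \tfrac12\log u + u$; the only cosmetic difference is that the paper substitutes $x = \sqrt{\log(\cdot)}$ and manipulates the full derivative algebraically, whereas your substitution $u = \log x$ and separate bounds on $ug'(u)$ and $g(u)$ make the computation a bit tidier.
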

	
	\begin{proof}
		It suffices to show that~$x \mapsto \log(1+c x e^{x^2})/(x^2)$ is strictly decreasing on~$(\sqrt{e}/c, \infty)$. The derivative of this function at~$x \in (\sqrt{e}/c, \infty)$ equals 
		\begin{equation*}
			\frac{2cx^2e^{x^2}+ce^{x^2}}{x^2(cxe^{x^2}+1)}-\frac{2\log(cxe^{x^2}+1)}{x^3},
		\end{equation*}
		which is negative if and only if~$2\log(ce^{x^2}x+1) + ce^{x^2}x(2\log(ce^{x^2}x+1)-2x^2-1)$ is positive, which holds because of~$2\log(ce^{x^2}x+1) > 2x^2 + 1$, a consequence of~$cx \geq e^{1/2}$. 
	\end{proof}
	
	\section{Proof of Theorem~\ref{thm:FSApartition}}\label{sec:SEP}
	
	Instead of proving Theorem~\ref{thm:FSApartition}, we establish the following slightly stronger theorem, where we denote~$f_{n,K}(u):=\frac{1}{2}+\frac{u}{n}-\frac{uK}{2n}+\frac{1}{n}$ (for~$n > 0$).
	
	\begin{theorem}\label{thm:FSApartitionstg}Suppose Assumptions \ref{as:dgp}, \ref{as:MAIN}, and \ref{as:MB} hold, and that~$\mathscr{M}_K$ is as in Example~\ref{ex:compa} and satisfies Assumption~\ref{as:M}. Then, the SE policy~$\tilde{\pi}$ satisfies 
		\begin{equation}\label{eqn:UPBD}
			\sup_{\substack{F^i \in \mathscr{D} \\ i = 1, \hdots, K }}\E \left[ r_n(\tilde{\pi}, \mathscr{M}_K) \right] \leq \varepsilon(n) + C \times (2 \min(A_{n,K}, B_{n,K}) + C_{n,K}), \text{ for every } n > K,
		\end{equation}
		where
		\begin{align*}
			A_{n,K} &:= \frac{1+e}{\sqrt{2}e(1-K/n)}\sqrt{\log\del[1]{K\sbr[1]{1+(1+\sqrt{8\pi})\sqrt{2\pi}}}}\sqrt{f_{n,K}(\underline{r})}\frac{K}{\sqrt{n}} \\
			&\leq 
			\frac{1}{(1-K/n)}\sqrt{\log\del[1]{17 \times K }}\frac{K}{\sqrt{n}}
		\end{align*}
		\begin{align*}
			B_{n,K} &:= \sqrt{\frac{{1+e}}{{2e(1-K/n)}}\log\del[2]{K\sbr[1]{\underline{r}+\left\lceil\frac{n-\underline{r}K}{2} \right\rceil-\lfloor n/K\rfloor +1}\del[1]{1+\sqrt{8\pi}}}\frac{K}{n}} \\
			&\leq 
			\sqrt{\frac{1}{{(1-K/n)}}\log\del[2]{6.02\times K\sbr[1]{3+\frac{n}{2} - \frac{n}{K}}}\frac{K}{n}}
		\end{align*}
		and
		\begin{align*} 
			C_{n,K} &:= \min\biggl(\frac{\del[1]{1+K\sqrt{16\pi/(\eta e)}}^2}{(\sqrt{n}K)^{1+\eta/(2+\eta)}}\sum_{r\in\mathcal{R}}\frac{1}{r^{1+\eta/(2+\eta)}},\frac{2K^{(1-\eta)/2}}{n^{(1+\eta)/4}}\sum_{r\in\mathcal{R}}\frac{1}{r^{(1+\eta)/2}}\biggr) \\
			&\leq 
			\min\biggl(\frac{\del[1]{1+K\sqrt{16\pi/(\eta e)}}^2}{(\sqrt{n}K)^{1+\eta/(2+\eta)}}\frac{2(1+\eta)}{\eta},\frac{2K^{(1-\eta)/2}}{n^{(1+\eta)/4}}\sum_{r\in\mathcal{R}}\frac{1}{r^{(1+\eta)/2}}\biggr),
		\end{align*}
		where 
		\begin{align*}
			\frac{\del[1]{1+K\sqrt{16\pi/(\eta e)}}^2}{(\sqrt{n}K)^{1+\eta/(2+\eta)}}\frac{2(1+\eta)}{\eta} \leq 
			\frac{1+18.5 \times K^2/\eta }{(\sqrt{n}K)^{1+\eta/(2+\eta)}}\frac{4(1+\eta)}{\eta} \leq 
			\frac{K^{1-\frac{3\eta}{4+2\eta}} }{\sqrt{n}} \times 4(1+18.5\eta^{-1})^2.
		\end{align*}
	\end{theorem}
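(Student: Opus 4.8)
The plan is to bound $r_n(\tilde\pi,\mathscr{M}_K)$ by $\varepsilon(n)$ plus an ``in-sample'' regret measured against the discretization, and then to split the latter according to whether at least one maximizer of $\delta\mapsto\mathsf{T}(\langle\delta,\mathbf F\rangle)$ over $\mathscr{M}_K^n$ survives all eliminations. By Assumption~\ref{as:MAIN}, convexity of $\mathscr D$, and the definition of the optimization error in~\eqref{eqn:opte}, one has $r_n(\tilde\pi,\mathscr{M}_K)\le\varepsilon(n)+\tilde r_n$, where $\tilde r_n:=\max_{\delta\in\mathscr{M}_K^n}\mathsf{T}(\langle\delta,\mathbf F\rangle)-\mathsf{T}(\langle\tilde\pi_{n,n+1},\mathbf F\rangle)$; note $\tilde r_n\le C$ always, since $\langle\cdot,\mathbf F\rangle$ maps $\mathscr{M}_K$ into $\mathscr D\subseteq D_{cdf}([a,b])$ and any two elements of $D_{cdf}([a,b])$ are at sup-distance at most $1$. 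Let $\mathcal J_R$, $\mathscr{M}^n_{A_j,K,R}$ ($j\in\mathcal J_R$) and $\mathcal I_R=\bigcup_{j\in\mathcal J_R}A_j$ denote the terminal objects produced by Policy~\ref{pol:FSE_partition}; the final set $\bigcup_{j\in\mathcal J_R}\mathscr{M}^n_{A_j,K,R}$ over which $\tilde\pi_{n,n+1}$ is the empirical maximizer is never empty, because a group is eliminated only while $|\mathcal J_r|\ge2$. Let $\mathcal E$ be the event that $\argmax_{\delta\in\mathscr{M}_K^n}\mathsf{T}(\langle\delta,\mathbf F\rangle)$ meets $\bigcup_{j\in\mathcal J_R}\mathscr{M}^n_{A_j,K,R}$. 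It then remains to bound $\E[\tilde r_n\mathds{1}_{\mathcal E}]$ and $\E[\tilde r_n\mathds{1}_{\mathcal E^c}]$ separately, which will produce the terms $2C\min(A_{n,K},B_{n,K})$ and $C\,C_{n,K}$, respectively.

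\emph{The event $\mathcal E^c$.} Here it suffices to bound $\P(\mathcal E^c)$, since $\tilde r_n\le C$. Fix any $\delta^\star\in\argmax_{\delta\in\mathscr{M}_K^n}\mathsf{T}(\langle\delta,\mathbf F\rangle)$, say $\delta^\star\in\mathscr{M}^n_{A_{j^\star},K}$. On $\mathcal E^c$ the element $\delta^\star$ is removed in some round $r\in\mathcal R$, so~\eqref{eq:FSEpmixing} forces the existence of $\delta$ in a group $A_l$, $l\ne j^\star$, with $\mathsf{T}(\langle\delta,\hat{\mathbf F}_{t,n}\rangle)-\mathsf{T}(\langle\delta^\star,\hat{\mathbf F}_{t,n}\rangle)>u_\eta(r,n)$; moreover each treatment still active in round $r$ has been assigned exactly $r$ times, and, since in every round distinct treatments go to distinct subjects, these per-treatment samples are mutually independent i.i.d.\ samples of size $r$. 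Because $\delta^\star$ is a \emph{global} maximizer over $\mathscr{M}_K^n$, $\mathsf{T}(\langle\delta,\mathbf F\rangle)\le\mathsf{T}(\langle\delta^\star,\mathbf F\rangle)$, so Assumption~\ref{as:MAIN} together with $\supp(\delta)\subseteq\mathcal I\setminus A_{j^\star}$ and $\supp(\delta^\star)\subseteq A_{j^\star}$ gives
\begin{equation*}
C\Bigl(\max_{i\in A_{j^\star}}\|\hat F^{(r)}_i-F^i\|_\infty+\max_{i\notin A_{j^\star}}\|\hat F^{(r)}_i-F^i\|_\infty\Bigr)>u_\eta(r,n),
\end{equation*}
where $\hat F^{(r)}_i$ denotes the empirical cdf of the first $r$ observations of treatment $i$. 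A union bound over $r\in\mathcal R$ combined with Corollary~\ref{cor:summax_DKW} (taking $k=2/\eta$) and the identity $(u_\eta(r,n)/C)^2r=(1+\eta)[0.5\log n+\log(rK)]$ from~\eqref{eqn:ueta} yields the first member of $C_{n,K}$; replacing Corollary~\ref{cor:summax_DKW} by the DKWM inequality and a union bound over the $K$ treatments yields the second. Hence $\E[\tilde r_n\mathds{1}_{\mathcal E^c}]\le C\,\P(\mathcal E^c)\le C\,C_{n,K}$.

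\emph{The event $\mathcal E$.} On $\mathcal E$ pick a surviving maximizer $\delta^\star$; both $\delta^\star$ and $\tilde\pi_{n,n+1}$ lie in the final set, over which $\tilde\pi_{n,n+1}$ maximizes $\delta\mapsto\mathsf{T}(\langle\delta,\hat{\mathbf F}_{n,n}\rangle)$. Telescoping through $\hat{\mathbf F}_{n,n}$, using this maximality and Assumption~\ref{as:MAIN} (the supports of $\delta^\star$ and $\tilde\pi_{n,n+1}$ lie in $\mathcal I_R$), one obtains $\tilde r_n\le2C\max_{i\in\mathcal I_R}\|\hat F_{i,n,n}-F^i\|_\infty$. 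Each $i\in\mathcal I_R$ was assigned once in each of the $\ge\lfloor n/K\rfloor$ completed rounds, so $S_{i,n}(n)\ge\lfloor n/K\rfloor$, and the observed outcomes of treatment $i$ form a prefix of an i.i.d.\ $F^i$ sequence. Writing $M$ for the largest number of outcomes an always-active treatment can accumulate — a deterministic quantity controlled by the burn-in $\underline r$, which is the origin of the factor $f_{n,K}(\underline r)$ — we get
\begin{equation*}
\max_{i\in\mathcal I_R}\|\hat F_{i,n,n}-F^i\|_\infty\le\frac{1}{\lfloor n/K\rfloor}\max_{i=1}^{K}\ \max_{1\le t\le M}t\,\|\hat F^{(t)}_i-F^i\|_\infty .
\end{equation*}
Taking expectations and combining Lemma~\ref{lem:maxDKW} (a submartingale maximal inequality) with an exponential-moment/union-bound optimization over the $K$ treatments, exactly as in the proof of Theorem~\ref{thm:NMAPup} but with the extra variance scaling $M/\lfloor n/K\rfloor^{2}$, gives $\E[\tilde r_n\mathds{1}_{\mathcal E}]\le2C\,A_{n,K}$; bounding instead the inner maximum by a union bound over the $\le K(M-\lfloor n/K\rfloor+1)$ pairs $(i,t)$ and the sub-Gaussian tail of each $\|\hat F^{(t)}_i-F^i\|_\infty$ gives $\E[\tilde r_n\mathds{1}_{\mathcal E}]\le2C\,B_{n,K}$. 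As both are valid, the minimum applies.

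Adding the three contributions proves~\eqref{eqn:UPBD}; the displayed chains of simplifications for $A_{n,K}$, $B_{n,K}$, $C_{n,K}$ are then elementary: use $K\ge2$ and $\underline r\le\lfloor n/K\rfloor$ to get $f_{n,K}(\underline r)\le1$ and $\underline r+\lceil(n-\underline r K)/2\rceil-\lfloor n/K\rfloor+1\le3+n/2-n/K$; bound $\sum_{r\in\mathcal R}r^{-(1+\eta/(2+\eta))}\le2(1+\eta)/\eta$; use $n>K$ to trade a power of $K$ for a power of $n$; and estimate the numerical constants ($1+(1+\sqrt{8\pi})\sqrt{2\pi}\le17$, $1+\sqrt{8\pi}\le6.02$, $(1+K\sqrt{16\pi/(\eta e)})^2\le2(1+18.5K^2/\eta)$, and so on). The delicate step is the treatment of $\mathcal E$: because $\mathcal I_R$, the counts $S_{i,n}(n)$ and the assignment times are all data-dependent, DKWM cannot be applied directly, and the remedy is to exploit that each treatment's observed outcomes are an initial segment of a fixed i.i.d.\ sequence, pass to $\max_{1\le t\le M}t\|\hat F^{(t)}_i-F^i\|_\infty$ via the submartingale structure, and then run the exponential-moment optimization over the $K$ arms. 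Pinning down $M$ in terms of $\underline r$ so that the constants land exactly on $A_{n,K}$, and verifying that the cruder pair-union bound produces precisely $B_{n,K}$, is where the bulk of the bookkeeping lies.
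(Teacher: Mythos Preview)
Your decomposition into $\mathcal E$ and $\mathcal E^c$, the telescoping on $\mathcal E$, the use of Corollary~\ref{cor:summax_DKW} and DKWM for $C_{n,K}$, and the use of Lemma~\ref{lem:maxDKW} plus an exponential-moment argument for $A_{n,K}$ (respectively a union bound over $(i,t)$ for $B_{n,K}$) is exactly the paper's route. However, the step you call ``the delicate one'' is resolved by an assertion that is itself the missing ingredient. You write that ``each treatment's observed outcomes are an initial segment of a fixed i.i.d.\ sequence'' and, for $\mathcal E^c$, that ``these per-treatment samples are mutually independent i.i.d.\ samples of size $r$''. Neither statement is immediate: the subject index at which treatment $i$ receives its $t$-th observation is $\tau_{i,t}$, a \emph{random} time that depends on all previously observed outcomes (through the elimination decisions), so $(Y_{i,\tau_{i,t}})_{t}$ is a random subsequence of $(Y_{i,s})_s$, not a literal prefix. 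What is true is that the $\tau_{i,t}$ are stopping times with respect to the filtration generated by $(Y_s)_s$, and an \emph{optional sampling / skipping} theorem (the paper invokes \cite{belisle2008independence}) then yields that the sequences $(Y_{i,\tau_{i,t}})_{t}$ are i.i.d.\ $F^i$ and mutually independent across $i$. Only after this step can one legitimately apply DKWM, Corollary~\ref{cor:summax_DKW} and Lemma~\ref{lem:maxDKW} to $\hat F^{(r)}_i$. Your write-up recognizes that data-dependence is the issue but does not supply this justification, so as it stands the argument has a genuine gap precisely at its most subtle point.

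A minor second gap: your displayed inequality on $\mathcal E$ uses $S_{i,n}(n)\le M$ for $i\in\mathcal I_R$, but this bound (with $M=\underline r+\lceil(n-\underline rK)/2\rceil$) is only valid when $|\mathcal I_R|\ge2$; if a single treatment survives it may be assigned many more than $M$ times. The paper disposes of the case $|\mathcal I_R|=1$ separately (on $\mathcal E$ the sole survivor is then a discretized maximizer, so $\tilde r_n=0$ and the regret is at most $\varepsilon(n)$); you should do the same before invoking the maximal inequality.
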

	
	To prove Theorem~\ref{thm:FSApartitionstg}, fix~$n \in \N$,~$n > K$, a partition~$\cbr[0]{A_1,\hdots,A_m}$ of~$\mathcal{I}$ such that~$m\geq 2$, a discretization~$\mathscr{M}_K^n = \bigcup_{j =1}^m \mathscr{M}_{A_j, K}^n$ (finite), such that~$\emptyset \neq \mathscr{M}_{A_j, K}^n \subseteq \mathscr{M}_{A_j,K}$ for every~$j = 1, \hdots, m$,  a set of elimination rounds~$\mathcal{R}$, and~$\eta>0$. Abbreviate~$\tilde{\pi}_{n,t} = \tilde{\pi}_t$ for~$t = 1, \hdots, n+1$.
	
	Before we establish the upper bound in Theorem~\ref{thm:FSApartitionstg}, we verify that the functions~$\tilde{\pi}_{n,t}$ (implicitly) defined in~Policy~\ref{pol:FSE_partition} are Borel measurable. For~$t = 1, \hdots, K$ the function~$\tilde{\pi}_t$ is constant, and hence Borel measurable. Consider next the case where~$K+1\leq t\leq n$. Since the policy does not depend on external randomization, we need to verify that~$\{z \in [a,b]^{t-1}: \tilde{\pi}_{t}(z) = s\}$ is a Borel set for every~$s \in \{1, \hdots, K\}$. Fix such an~$s$. It is tedious but not difficult to see that~$\{z \in [a,b]^{t-1}: \tilde{\pi}_{t}(z) = s\}$ can be written as a finite number of intersections and unions of sets (and their complements) of the form
	\begin{equation}
		A(\mathscr{S}, \Pi, c, \gamma) = 
		\{z \in [a,b]^{t-1}: \max_{\delta \in \mathscr{S}} \mathsf{T}(\langle \delta, \mathbf{F}_{\Pi}(z) \rangle) - \mathsf{T}(\langle \gamma, \mathbf{F}_{\Pi}(z)  \rangle) \leq c \};
	\end{equation}
	where~$\mathscr{S}$ is a finite subset of~$\mathscr{S}_K$;~$\Pi = (\Pi_1, \hdots, \Pi_K)$ where the~$\Pi_i$ are disjoint and non-empty subsets of~$\{1, \hdots, t-1\}$ (but $\Pi$ may not constitute a partition of~$\{1, \hdots, t-1\}$); where~$\mathbf{F}_{\Pi}(z) = (F_{\Pi_1}(z), \hdots, F_{\Pi_K}(z))$ with~$F_{\Pi_j}(\cdot) = |\Pi_j|^{-1} \sum_{i \in \Pi_j} \mathds{1}\{z_i \leq \cdot\}$;~$c > 0$; and~$\gamma \in \mathscr{S}_K$. That every such~$A(\mathscr{S}, \Pi, c, \gamma)$ is Borel measurable follows immediately from Assumption~\ref{as:MB}, which implies that~$\tilde{\pi}_{t}$ is measurable. The measurability of~$\tilde{\pi}_{n,n+1}$ is shown analogously, observing that for every~$\delta \in \mathscr{M}_K^n$ the set~$\{z \in [a,b]^n : \tilde{\pi}_{n,n+1}(z) = \delta \}$ can be written as a finite union and intersection of sets (and their complements) as in the previous display (but with~$t-1$ replaced by~$n$).
	
	To establish the upper bound in~\eqref{eqn:UPBD}, denote by~$r_C$ the last round completed, and denote the last elimination round completed by~$\bar{r}:=\max\cbr[0]{r\in\mathcal{R}:r\leq r_C}$. We have~$\bar{r}\leq \min(r_C,\lfloor n/2\rfloor)$, since~$r\leq \lfloor n/2\rfloor$ for all~$r\in\mathcal{R}$. Note that~$r_C$ and~$\bar{r}$ are random variables (Borel measurability can be verified as in the beginning of the proof; we suppress the dependence of~$r_C$ and~$\bar{r}$ on~$\omega\in \Omega$, for~$(\Omega, \mathcal{A}, \mathbb{P})$ the underlying probability space), and note that~$\mathcal{I}_{\bar{r}}=\mathcal{I}_{r_C}$,~$\mathcal{J}_{\bar{r}} = \mathcal{J}_{r_C}$, and~$\mathscr{M}_{A_j, K,\bar{r}}^{n}=\mathscr{M}_{A_j, K, r_C}^{n}$ for every~$j \in \mathcal{J}_{\bar{r}}$, since no elimination takes place after round~$\bar{r}$. Set
	\begin{equation*}
		\mathscr{M}_{K, \bar{r}}^n := 
		\bigcup_{j \in \mathcal{J}_{\bar{r}}} \mathscr{M}^n_{A_j, K, \bar{r}}.
	\end{equation*}
	Define the event (Borel measurability can again be verified as in the argument in the beginning of the proof) where not all~$\delta\in\argmax_{\delta \in \mathscr{M}_K^n} \mathsf{T}(\langle \delta, \mathbf{F} \rangle)$ have been eliminated after~$\bar{r}$ rounds
	\begin{align*}
		\mathcal{G} := \{\omega \in \Omega: \mathscr{M}_{K, \bar{r}}^n \cap \argmax_{\delta \in \mathscr{M}_K^n} \mathsf{T}(\langle \delta, \mathbf{F} \rangle)  \neq \emptyset \}.
	\end{align*} 
	
	Consider~$\omega \in \mathcal{G}$ in this paragraph: if~$|\mathcal{I}_{\bar{r}}|=1$, i.e., when there is only a single treatment left after the last completed elimination round, then the weights vector which puts mass~$1$ on that treatment must be an element of~$\argmax_{\delta \in \mathscr{M}_K^n} \mathsf{T}(\langle \delta, \mathbf{F} \rangle)$, and the regret is at most~$\varepsilon(n)$. Assume that~$\omega$ is such that~$|\mathcal{I}_{\bar{r}}|=|\mathcal{I}_{r_C}|\geq 2$. Then,~$\omega \in \mathcal{G}$ implies
	\begin{align*}
		r_n(\tilde{\pi}, \mathscr{M}_K)
		=
		\max_{\delta\in\mathscr{M}_K}\mathsf{T}(\langle\delta,\mathbf{F}\rangle)-\mathsf{T}(\langle\tilde{\pi}_{n+1},\mathbf{F}\rangle)
		\leq
		\max_{\delta\in\mathscr{M}_{K,\bar{r}}^{n}}\mathsf{T}(\langle\delta,\mathbf{F}\rangle)-\mathsf{T}(\langle\tilde{\pi}_{n+1},\mathbf{F}\rangle)+\eps(n).
	\end{align*}
	The right-hand side of the above display can also be written as
	\begin{align}\label{eqn:argrefapp1}
		\max_{\delta \in \mathscr{M}_{K,\bar{r}}^{n}} \mathsf{T}(\langle \delta, \mathbf{F} \rangle) - \mathsf{T}(\langle \tilde{\pi}_{n+1}, \mathbf{\hat{F}}_{n,n} \rangle ) 
		+
		\mathsf{T}(\langle \tilde{\pi}_{n+1}, \mathbf{\hat{F}}_{n,n} \rangle ) 
		- 
		\mathsf{T}(\langle \tilde{\pi}_{n+1}, \mathbf{F} \rangle) + \varepsilon(n),
	\end{align}
	which, since by definition~$\tilde{\pi}_{n+1}=\min \argmax\cbr[1]{ \mathsf{T}(\langle \delta, \mathbf{\hat{F}}_{n,n} \rangle):\delta\in\mathscr{M}_{K,\bar{r}}^{n}}$, equals
	\begin{equation*}
		\max_{\delta \in \mathscr{M}_{K,\bar{r}}^{n}} \mathsf{T}(\langle \delta, \mathbf{F} \rangle) - \max_{\delta \in \mathscr{M}_{K,\bar{r}}^{n}} \mathsf{T}(\langle \delta, \mathbf{\hat{F}}_{n,n}\rangle ) 
		+
		\mathsf{T}(\langle \tilde{\pi}_{n+1}, \mathbf{\hat{F}}_{n,n}\rangle ) 
		- 
		\mathsf{T}(\langle \tilde{\pi}_{n+1}, \mathbf{F} \rangle) + \varepsilon(n).
	\end{equation*}
	Next observe that~$\max_{\delta \in \mathscr{M}_{K,\bar{r}}^{n}} \mathsf{T}(\langle \delta, \mathbf{F} \rangle) - \max_{\delta \in \mathscr{M}_{K,\bar{r}}^{n}} \mathsf{T}(\langle \delta, \mathbf{\hat{F}}_{n,n}\rangle )~$ and~$\mathsf{T}(\langle \tilde{\pi}_{n+1}, \mathbf{\hat{F}}_{n,n}\rangle ) - 
	\mathsf{T}(\langle \tilde{\pi}_{n+1}, \mathbf{F} \rangle)$ are bounded from above by
	\begin{align*}
		\max_{\delta\in\mathscr{M}_{K,\bar{r}}^{n}}\envert[1]{\mathsf{T}(\langle \delta, \mathbf{\hat{F}}_{n,n}\rangle ) 
			- 
			\mathsf{T}(\langle \delta, \mathbf{F} \rangle)}
		\leq
		C\max_{\delta\in\mathscr{M}_{K,\bar{r}}^{n}}\enVert[1]{\langle \delta, \mathbf{\hat{F}}_{n,n}\rangle -\langle \delta, \mathbf{F} \rangle}_\infty 
		\leq
		C\max_{i\in \mathcal{I}_{\bar{r}}}\enVert[1]{\hat{F}_{i,n,n}-F^i}_\infty,
	\end{align*}
	where the first inequality follows from convexity of~$\mathscr{D}$ and Assumption \ref{as:MAIN}. Note that~$\underline{r}:=\min\mathcal{R}$ is the number of times each treatment in~$\mathcal{I}$ has been assigned by the end of the first elimination round. Together with~$|\mathcal{I}_{\bar{r}}|=|\mathcal{I}_{r_C}|\geq 2$ this implies~
	\begin{equation}\label{eqn:argrefapp2}
		\lfloor n/K\rfloor \leq S_{i}(n) \leq \underline{r}+ \left\lceil\frac{n-\underline{r}K}{2} \right\rceil =: R, \text{ for every } i\in \mathcal{I}_{\bar{r}},
	\end{equation}
	(note that~$R$ is non-random, as~$\underline{r} =\min \mathcal{R} \leq n/K$~is fixed), from which it follows that
	\begin{align*}
		\max_{i\in \mathcal{I}_{\bar{r}}}\enVert[1]{\hat{F}_{i,n,n}-F^i}_\infty
		&\leq
		\max_{i=1}^K\sbr[2]{\enVert[1]{\hat{F}_{i,n,n}-F^i}_\infty \mathds{1}{\cbr[1]{S_i(n)\in\cbr[0]{\lfloor n/K\rfloor,\hdots,R}}}}\\
		&=
		\max_{i=1}^K\sbr[2]{\sum_{s=\lfloor n/K\rfloor}^{R}\enVert[1]{\hat{F}_{i,n,n}-F^i}_\infty \mathds{1}{\cbr[1]{S_i(n)=s}}}.
	\end{align*}
	
	Using that~$r_n(\tilde{\pi},\mathscr{M}_K)\leq C$ for~$\omega \in \Omega \setminus \mathcal{G}$ by Assumption~\ref{as:MAIN}, we conclude that for every~$\omega \in \Omega$ it holds that
	\begin{equation*}
		r_n(\tilde{\pi}, \mathscr{M}_K) \leq 2 C \max_{i=1}^K\sbr[2]{\sum_{s=\lfloor n/K\rfloor}^{R}\enVert[1]{\hat{F}_{i,n,n}-F^i}_\infty \mathds{1}{\cbr[1]{S_i(n)=s}}} + \varepsilon(n) + C \mathds{1}\{\Omega \setminus \mathcal{G}\}.
	\end{equation*}
	The theorem now follows from Parts~3 and~4 of the auxiliary Lemma \ref{lem:OptionalSkipping} given next (the remaining upper bounds on~$A_{n,K}$,~$B_{n,K}$ and~$C_{n,K}$ being obvious).
	
	While we continue to use the notation and notational conventions introduced already, to formulate and prove Lemma \ref{lem:OptionalSkipping}, we define the sequence of random variables~$f_t$ as:
	\begin{align*}
		f_t :=\begin{cases}
			\tilde{\pi}_{n,t}(Z_{t-1}) & \text{for } 1\leq t \leq n\\
			(t~\mathrm{mod}~K)+1 & \text{ for } t > n.
		\end{cases}
	\end{align*} 
	For every~$i\in\mathcal{I}$ and~$t\in\N$ denote the random variable~$\tau_{i,t}$ by 
	\begin{align*}
		\tau_{i,t}:=\min\cbr[2]{r\in\N: \sum_{j=1}^r\mathds{1}\cbr[0]{f_j=i}=t}.
	\end{align*}

	\begin{lemma}\label{lem:OptionalSkipping} 
		It holds that 
		\begin{enumerate}
			\item The sequences of random variables~$\del[1]{Y_{i,\tau_{	i,t}}}_{t\in\N}$ and~$\del[1]{Y_{i,t}}_{t\in\N}$ have the same distribution for every~$i\in\mathcal{I}$.
			\item The sequences of random variables~$\del[1]{Y_{1,\tau_{1,t}}}_{t\in\N},\hdots,\del[1]{Y_{K,\tau_{K,t}}}_{t\in\N}$ are independent. 
			\item It holds that
			\begin{equation*}
				\E \max_{i=1}^K\sbr[2]{\sum_{s=\lfloor n/K\rfloor}^{R}\enVert[1]{\hat{F}_{i,n,n}-F^i}_\infty \mathds{1}{\cbr[1]{S_i(n)=s}}} \leq A_{n,K} \wedge B_{n,K}.
			\end{equation*}
			\item~$1-\P(\mathcal{G})$ is bounded from above by
			\begin{equation*}
				\frac{\del[1]{1+K\sqrt{16\pi/(\eta e)}}^2}{(\sqrt{n}K)^{1+\eta/(2+\eta)}}\sum_{r\in\mathcal{R}}\frac{1}{r^{1+\eta/(2+\eta)}} \leq \frac{\del[1]{1+K\sqrt{16\pi/(\eta e)}}^2}{(\sqrt{n}K)^{1+\eta/(2+\eta)}}\frac{2(1+\eta)}{\eta},
			\end{equation*}
			and also by
			\begin{equation*}
				\frac{2K^{(1-\eta)/2}}{n^{(1+\eta)/4}}\sum_{r\in\mathcal{R}}\frac{1}{r^{(1+\eta)/2}}.
			\end{equation*}
		\end{enumerate}
	\end{lemma}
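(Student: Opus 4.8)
Parts~1 and~2 will be proved simultaneously via an explicit coupling; Parts~3 and~4 then follow by feeding the resulting distributional identity into the concentration tools already available (the DKWM inequality, Lemma~\ref{lem:maxDKW}, Corollary~\ref{cor:summax_DKW}) together with the moment-generating-function argument used for Theorem~\ref{thm:NMAPup}. \textbf{The main obstacle is Parts~1--2}: this is the ``optional skipping'' step, and making it rigorous requires a careful choice of filtration, the measurability of the recursively defined maps $\tilde{\pi}_{n,t}$ (already verified in the excerpt), and the cyclic extension of $(f_t)$ to $t>n$, which is precisely what guarantees $\tau_{i,t}<\infty$ almost surely for every $i$ and $t$ and hence makes all of the quantities below well defined.

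For Parts~1--2 I would work on an auxiliary probability space carrying an array $(U_{i,k})_{i\in\mathcal I,\,k\in\N}$ of independent random variables with $U_{i,k}\sim F^i$, and generate a coupled process $(\tilde{f}_t,\tilde{O}_t)_{t\ge 1}$ recursively: set the assignments $\tilde{f}_t$ by applying the deterministic maps $\tilde{\pi}_{n,t}$ to the coupled history for $t\le n$ and the cyclic rule for $t>n$, and let the observed outcome $\tilde{O}_{t}$ equal $U_{i,k}$ whenever step $t$ is the $k$-th assignment of treatment $i$. An induction on $t$ shows that $(\tilde{f}_t,\tilde{O}_t)_{t\ge 1}$ has the same law as the true assignments-and-observed-outcomes $(f_t,Y_{f_t,t})_{t\ge 1}$: at each step the conditional law of the next true outcome $Y_{f_{t+1},t+1}$ given the history is $F^{f_{t+1}}$, because $\tilde{\pi}_{n,t+1}$ is measurable with respect to the history while $Y_{t+1}$ is independent of it, and this is exactly the transition obeyed by the coupled process, since the $U$-index used at step $t+1$ is fresh and the $U_{i,k}$ are independent. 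In the coupled process the subsequence of outcomes observed from treatment $i$ is by construction $(U_{i,k})_{k}$; transporting this identity back via the law-equality, the array $(Y_{i,\tau_{i,t}})_{i,t}$ has the law of $(U_{i,k})_{i,k}$, i.e.\ its entries are mutually independent with $Y_{i,\tau_{i,t}}\sim F^i$, which is both Parts~1 and~2.

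For Part~3, on $\{S_i(n)=s\}$ the empirical cdf $\hat{F}_{i,n,n}$ equals the empirical cdf $\hat{G}_{i,s}$ of $Y_{i,\tau_{i,1}},\dots,Y_{i,\tau_{i,s}}$, so, the events $\{S_i(n)=s\}$ being disjoint, the left-hand side of Part~3 is at most $\E\big[\max_{i}\max_{\lfloor n/K\rfloor\le s\le R}\|\hat{G}_{i,s}-F^i\|_\infty\big]$, where by Parts~1--2 each $\hat{G}_{i,s}$ is an honest empirical cdf of $s$ i.i.d.\ $F^i$-draws and the sequences $(\hat{G}_{i,\cdot})_{i=1}^K$ are independent. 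The bound $B_{n,K}$ then comes from a union bound over the $K(R-\lfloor n/K\rfloor+1)$ pairs $(i,s)$, the DKWM inequality, an integration-by-parts estimate of the type used in Lemma~\ref{lem:mgf}, and $\lfloor n/K\rfloor\ge n/K-1$. The bound $A_{n,K}$ comes instead from $\|\hat{G}_{i,s}-F^i\|_\infty\le \lfloor n/K\rfloor^{-1}\max_{1\le s\le R}\big(s\|\hat{G}_{i,s}-F^i\|_\infty\big)$ for $s\ge \lfloor n/K\rfloor$, applying the maximal inequality of Lemma~\ref{lem:maxDKW} to each treatment (for $M_i:=\max_{1\le s\le R}s\|\hat{G}_{i,s}-F^i\|_\infty$), combining the $K$ treatments through $e^{\lambda\E[\max_i M_i]}\le\sum_i\E[e^{\lambda M_i}]$ exactly as in the proof of Theorem~\ref{thm:NMAPup}, optimizing in $\lambda$, and using $R\le n f_{n,K}(\underline r)$; the free parameter of Lemma~\ref{lem:maxDKW} is tuned to produce the stated constant.

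For Part~4, fix one maximizer $\delta^\star\in\argmax_{\delta\in\mathscr{M}_K^n}\mathsf{T}(\langle\delta,\mathbf{F}\rangle)$, say $\delta^\star\in\mathscr{M}_{A_j,K}^n$. Since $\mathcal G^c\subseteq\{\delta^\star\text{ is removed at some round}\}$, it suffices to union-bound over $r\in\mathcal R$ the event that $\delta^\star$ is removed at round $r$. On that event round $r$ was completed with $A_j$ and the group of the maximizing competitor $\delta'$ still active, so all treatments in $\supp(\delta^\star)\cup\supp(\delta')$ have been assigned exactly $r$ times and the empirical cdfs entering \eqref{eq:FSEpmixing} are the $\hat{G}_{i,r}$; combining $\mathsf{T}(\langle\delta^\star,\mathbf{F}\rangle)\ge\mathsf{T}(\langle\delta',\mathbf{F}\rangle)$, $\supp(\delta')\subseteq\mathcal I\setminus A_j$ and Assumption~\ref{as:MAIN} with the removal inequality shows this event is contained in $\{\max_{i\in A_j}\|\hat{G}_{i,r}-F^i\|_\infty+\max_{i\notin A_j}\|\hat{G}_{i,r}-F^i\|_\infty> u_\eta(r,n)/C\}$. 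Corollary~\ref{cor:summax_DKW} (sample size $r$, set $A=A_j$, its free parameter taken as $2/\eta$) bounds this probability; substituting $u_\eta(r,n)$ and summing over $r\in\mathcal R$ yields the first bound, while bounding instead via ``one of the two maxima exceeds $u_\eta(r,n)/(2C)$'', a union bound over treatments, and the plain DKWM tail $2e^{-2rx^2}$ yields the second. The only genuine care needed here is the bookkeeping that establishes the ``exactly $r$ assignments'' claim, which depends on the order in which groups are processed inside the elimination loop of Policy~\ref{pol:FSE_partition}.
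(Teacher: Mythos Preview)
Your proposal is correct and, for Parts~3 and~4, follows the paper's proof essentially line by line: the reduction to the skipped empirical cdfs, the two routes to $A_{n,K}$ (via Lemma~\ref{lem:maxDKW} and the Jensen/mgf maximum-of-sub-Gaussians argument) and $B_{n,K}$ (via a union bound over $s$ and Equation~\eqref{eqn:mgf3}), and for Part~4 the inclusion chain leading to Corollary~\ref{cor:summax_DKW} with $k=2/\eta$ for the first bound and to the plain DKWM union bound for the second, all match the paper.

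The only difference is in Parts~1--2. The paper simply invokes an optional-skipping theorem from the literature (Theorem~1 in B\'elisle et al., 2008), whereas you give the explicit coupling construction that underlies such results: build an auxiliary i.i.d.\ array $(U_{i,k})$, drive the policy with it, and verify inductively that the observed process has the right law because the $U$-index consumed at each step is fresh. This is a legitimate and self-contained alternative; it buys you independence from an external reference at the cost of a paragraph of bookkeeping, and it makes transparent exactly where the i.i.d.\ structure of $(Y_t)_t$ and the measurability of $\tilde{\pi}_{n,t}$ with respect to the observed history are used. Your remark that the cyclic extension of $f_t$ beyond $n$ is what guarantees $\tau_{i,t}<\infty$ is the right observation to make this rigorous.
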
 
	\begin{proof}
		Parts~1 and~2 of the lemma follow from an optional sampling theorem. A suitable result in our context is Theorem~1 in \cite{belisle2008independence}, cf.~also their Section 5.3. The conditions in their Theorem~1 are easy to verify using the natural filtration generated by the i.i.d.~sequence~$(Y_t)_{t \in \N}$.
		
		To prove Part~3, for every~$s = 1, \hdots, n$ set~$\check{F}_{i,s}(\cdot):=\frac{1}{s}\sum_{r=1}^s\mathds{1}\cbr[0]{Y_{i,\tau_{i,r}}\leq \cdot}$, and note that for~$\omega$ such that~$S_i(n) = s$ it holds that~$\hat{F}_{i,n,n} = \check{F}_{i,s}$ by construction of~$f_t$. Therefore, the left-hand side in the inequality in Part~3 coincides with
		\begin{align*}
			\E \max_{i=1}^K\sbr[2]{\sum_{s=\lfloor n/K\rfloor}^{R}\enVert[0]{\check{F}_{i,s}-F^i}_\infty \mathds{1}{\cbr[1]{S_i(n)=s}}}
			\leq
			\E \max_{i=1}^K\max_{s\in\cbr[0]{\lfloor n/K\rfloor,\hdots,R}}\enVert[0]{\check{F}_{i,s}-F^i}_\infty,
		\end{align*}
		where the inequality follows from~$\sum_{s = \lfloor n/K\rfloor}^R \mathds{1}\{S_i(n) = s\} \leq 1$. Part~2 of this lemma implies that~$\check{M}_i:=\max_{s\in\cbr[0]{\lfloor n/K\rfloor,\hdots,R}}\enVert[0]{\check{F}_{i,s}-F^i}_\infty$ for~$ i\in\mathcal{I}$ are independent random variables. Furthermore, by Part~1 of this lemma and since~$(Y_{i,t})$ is an i.i.d.~sequence, the random variables~$\check{M}_i$ for~$i = 1, \hdots, K$ and~$M_i$ for~$i = 1, \hdots, K$ have the same joint distribution for
		\begin{align*}
			M_i := \max_{s\in\cbr[0]{\lfloor n/K\rfloor,\hdots,R}}\enVert[0]{F_{i,s}-F^i}_\infty,\quad \text{where}\quad F_{i,s}(\cdot):=\frac{1}{s}\sum_{r=1}^s\mathds{1}\cbr[0]{Y_{i,i+(r-1)K}\leq \cdot}.
		\end{align*}
		We conclude that the random variables~$\max_{i = 1}^K \check{M}_i$ and~$\max_{i = 1}^K M_i$ have the same distribution and proceed by upper bounding~$\E\max_{i = 1}^K M_i$. To this end, note that by Lemma \ref{lem:maxDKW} (applied with~$k = e$) for any~$x>0$ the probability~$\P(M_i>x)$ is not greater than
		\begin{align*}
			\P\del[2]{\max_{s\in\cbr[0]{\lfloor n/K\rfloor,\hdots,R}}s\enVert[0]{F_{i,s}-F^i}_\infty>\lfloor n/K\rfloor x}\leq
			\del[2]{1+2 \sqrt{2\pi}}\exp\del[2]{-\frac{2x^2\lfloor n/K\rfloor^2}{(1+e^{-1})R}}.
		\end{align*}
		By definition~$R=\underline{r}+\lceil\frac{n-\underline{r}K}{2} \rceil$, thus
		\begin{align*}
			\frac{\lfloor n/K\rfloor^2}{R}
			\geq 
			\frac{(\frac{n}{K}-1)^2}{\underline{r}+\frac{n-\underline{r}K}{2}+1}
			=
			\frac{n^2(\frac{1}{K}-\frac{1}{n})^2}{n(\frac{1}{2}+\frac{\underline{r}}{n}-\frac{\underline{r}K}{2n}+\frac{1}{n})}
			=
			\frac{n(1-a)^2}{f_{n,K}(\underline{r})K^2},
		\end{align*}
		where~$f_{n,K}(u):=\frac{1}{2}+\frac{u}{n}-\frac{uK}{2n}+\frac{1}{n}$ for~$2\leq K\leq n$, and~$a = K/n$. Clearly~$f_{n,K}(\cdot)$ is non-increasing, and~$\frac{1}{K}\leq f_{n,K}(u)\leq 1$ for~$u\in[1,\lfloor \frac{n}{K}\rfloor]$. Since~$\underline{r} \in [1,\lfloor \frac{n}{K}\rfloor]$, we conclude
		\begin{align*}
			\P(M_i > x)
			\leq
			\del[2]{1+2 \sqrt{2\pi}}\exp\del[3]{-\frac{2x^2n(1-a)^2}{(1+e^{-1})f_{n,K}(\underline{r})K^2}} \text{ for any } x > 0.
		\end{align*}
		Lemma \ref{lem:mgf} with~$k = e$,~$D=\del[1]{1+\sqrt{8\pi}}$ and~$\sigma^2=\frac{(1+e^{-1})f_{n,K}(\underline{r})K^2}{4(1-a)^2n}$ yields 
		\begin{align}\label{eq:mgf1}
			\E e^{t M_i}
			\leq
			\alpha_1 e^{\alpha_2 t^2 } \text{ for any } t > 0,
		\end{align}
		with~$\alpha_1 := \sbr[1]{1+ D \sqrt{2\pi}}$ and~$\alpha_2 := (\frac{1}{2}+\frac{1}{2e})\sigma^2$. Hence, by Jensen's inequality
		\begin{align*}
			\exp\del[1]{t\E \max_{i=1}^K M_i}
			\leq
			\sum_{i=1}^K \E e^{t M_i}
			\leq
			K\alpha_1 e^{\alpha_2 t^2}, \text{ for any } t>0,
		\end{align*}
		and we obtain
		\begin{align*}
			\E \max_{i=1}^K M_i
			\leq
			\frac{\log\del[1]{K\alpha_1}}{t}+\alpha_2 t, \text{ for any } t>0.
		\end{align*}
		Setting~$t =  \sqrt{\log(K\alpha_1)/\alpha_2}$ now yields
		\begin{align}\label{eq:bound1pre}
			\E\max_{i=1}^K M_i
			\leq
			2\sqrt{\log\del[1]{K\alpha_1 }\alpha_2},
		\end{align}
		which upon inserting~$\alpha_1$,~$\alpha_2$,~$D$ and~$\sigma$ yields
		\begin{align*}
			\E\max_{i=1}^K M_i
			\leq
			\frac{1+e}{\sqrt{2}e(1-a)}\sqrt{\log\del[1]{K\sbr[1]{1+(1+\sqrt{8\pi})\sqrt{2\pi}}}}\frac{K}{\sqrt{n}}\sqrt{f_{n,K}(\underline{r})} = A_{n,K}.
		\end{align*}
		
		To show~$\E\max_{i=1}^K M_i \leq B_{n,K}$, note that for all~$t > 0$,~$\E e^{tM_i} \leq \sum_{s=\lfloor n/K\rfloor}^R\E e^{t\enVert[0]{F_{i,s}-F^i}_\infty}$ which applying the bound in Equation~\eqref{eqn:mgf3} is further upper bounded by
		\begin{align*}
			\sum_{s=\lfloor n/K\rfloor}^R\del[1]{1+\sqrt{8\pi}}e^{(1+\frac{1}{e})\frac{1}{8s}t^2}\leq
			\del[1]{R-\lfloor n/K\rfloor + 1}\del[1]{1+\sqrt{8\pi}}e^{(\frac{1}{8}+\frac{1}{8e})\frac{1}{\lfloor\frac{n}{K}\rfloor}t^2}.
		\end{align*}
		Applying the argument that led from~\eqref{eq:mgf1} to~\eqref{eq:bound1pre} with~$\alpha_1 = \del[1]{R-\lfloor n/K\rfloor + 1}\del[1]{1+\sqrt{8\pi}}$ and~$\alpha_2 = (\frac{1}{8}+\frac{1}{8e})\frac{1}{\lfloor\frac{n}{K}\rfloor}$ we obtain (using~$\lfloor n/K\rfloor \geq (n/K) -1$) that
		\begin{align*}
			\E\max_{i=1}^K M_i &\leq 
			2\sqrt{\log\del[1]{K\del[1]{R-\lfloor n/K\rfloor + 1}\del[1]{1+\sqrt{8\pi}}}}\del[1]{\frac{1}{8}+\frac{1}{8e}}^{0.5}\frac{1}{\lfloor\frac{n}{K}\rfloor^{0.5}} \leq B_{n,K}.
		\end{align*}
		
		For Part~4, pick a~$\delta^* \in \argmax_{\delta \in \mathscr{M}_K^n} \mathsf{T}(\langle\delta,\mathbf{F}\rangle)$. Let~$i^*$ be the index such that~$\delta^*\in\mathscr{M}_{A_{i^*}, K}^{n}$. For~$\omega \in \Omega \setminus \mathcal{G}$ we certainly have~$\mathscr{M}_{K, \overline{r}}^n \not \ni \delta^*$. Hence, for such~$\omega$, there exists an elimination round~$r_*(\omega) \in \mathcal{R} $, say, where~$\delta^*$ is~\emph{removed} from~$\mathscr{M}^n_{A_{i^*}, K, r_*(\omega)}$. That is, there exists an index~$l_*(\omega) \neq i^*$, say,  a~$\delta_*(\omega) \in \mathscr{M}^n_{A_{l_*(\omega)}, K, r_*(\omega)}$, and an index~$t(\omega)$ such that $$\mathsf{T}(\langle \delta_*(\omega), \hat{\mathbf{F}}_{t(\omega),n} \rangle) - \mathsf{T}(\langle \delta^*, \hat{\mathbf{F}}_{t(\omega),n} \rangle) > u_{\eta}(r_*(\omega), n).$$ Note that~$\delta_*(\omega)$ places all its weight on indices in~$A_{l_*(\omega)}$, and~$\delta^*$ places all its weight on indices in~$A_{i^*}$. Furthermore, by definition of the policy~$\tilde{\pi}$, as long as a set~$A \in \{A_1, \hdots, A_m\}$ of treatments has not been \emph{eliminated}, each treatment in that set is assigned once per round. Therefore, the cdfs in coordinates of~$\hat{\mathbf{F}}_{t(\omega),n}$ with indices in~$A_{l_*(\omega)} \cup A_{i^*}$ are all based on~$r_{*}(\omega)$ observations. Thus, we may equivalently replace~$\hat{\mathbf{F}}_{t(\omega),n}$ by~$\check{\mathbf{F}}_{r_*(\omega)} = (\check{F}_{1,r_*(\omega)}, \hdots, \check{F}_{K,r_*(\omega)})$ in the previous display, where~$\check{F}_{i,s}(\cdot):=\frac{1}{s}\sum_{r=1}^s\mathds{1}\cbr[0]{Y_{i,\tau_{i,r}}\leq \cdot}$ was defined in the Proof of Part 3 above. Hence, we proceed with
		\begin{align*}
			\Omega \setminus \mathcal{G} \subseteq &\bigcup_{r\in\mathcal{R}}\bigcup_{l \neq i^*}\bigcup_{\delta_*\in \mathscr{M}_{A_l, K}^{n}}\cbr[1]{\mathsf{T}(\langle \delta_*, \check{\mathbf{F}}_{r} \rangle)-\mathsf{T}(\langle \delta^*, \check{\mathbf{F}}_{r}\rangle)>u_\eta(r,n)}\\
			\subseteq &
			\bigcup_{r\in\mathcal{R}}\bigcup_{l\neq i^*}\bigcup_{\delta_*\in \mathscr{M}_{A_l,K}^{n}}\cbr[1]{\mathsf{T}(\langle \delta_*, \check{\mathbf{F}}_{r}\rangle)-\mathsf{T}(\langle \delta_*, \mathbf{F}\rangle)+\mathsf{T}(\langle \delta^*, \mathbf{F}\rangle)-\mathsf{T}(\langle \delta^*, \check{\mathbf{F}}_{r}\rangle)>u_\eta(r,n)}\\
			\subseteq &
			\bigcup_{r\in\mathcal{R}}\bigcup_{l \neq i^*}\cbr[2]{\max_{i\in A_l}\enVert[0]{\check{F}_{i,r}-F^i}_\infty+ \max_{i\in A_{i^*}}\enVert[0]{\check{F}_{i,r}-F^i}_\infty>\frac{u_\eta(r,n)}{C}}\\
			= &
			\bigcup_{r\in\mathcal{R}}\cbr[2]{ \max_{i\in \{1, \hdots, m\} \setminus A_{i^*}}\enVert[0]{\check{F}_{i,r}-F^i}_\infty+\max_{i\in A_{i^*}}\enVert[0]{\check{F}_{i,r}-F^i}_\infty>\frac{u_\eta(r,n)}{C}}.
		\end{align*}
		Abbreviating~$A^c_{i^*} = \{1, \hdots, m\} \setminus A_{i^*}$, we thus have
		\begin{align}\label{eq:fork}
			\P(\Omega \setminus \mathcal{G} )
			\leq
			\sum_{r\in\mathcal{R}}\P\del[2]{\max_{i\in A_{i^*}^c}\enVert[0]{\check{F}_{i,r}-F^i}_\infty+\max_{i\in A_{i^*}}\enVert[0]{\check{F}_{i,r}-F^i}_\infty>\frac{u_\eta(r,n)}{C}}.
		\end{align}
		From Part~2 of this lemma we already know that, for every~$r\in\N$, the random variables~$\enVert[0]{\check{F}_{i,r}-F^i}_\infty$ for~$i\in\mathcal{I}$ are independent. Part 1 furthermore shows that the same statement holds if~$\check{F}_{i,r}$ is replaced by~$F_{i,r}$, and that (for every~$r \in \N$ and every~$i = 1, \hdots, K$) the distributions of ~$\enVert[0]{\check{F}_{i,r}-F^i}_{\infty}$ and~$\enVert[0]{F_{i,r}-F^i}_{\infty}$ coincide. Consequently, for every~$r \in \N$ the random variables
		\begin{align*}
			\max_{i\in A_{i^*}^c}\enVert[0]{\check{F}_{i,r}-F^i}_\infty+\max_{i\in A_{i^*}}\enVert[0]{\check{F}_{i,r}-F^i}_\infty\quad \text{and} \quad \max_{i\in A_{i^*}^c}\enVert[0]{F_{i,r}-F^i}_\infty+\max_{i\in A_{i^*}}\enVert[0]{F_{i,r}-F^i}_\infty,
		\end{align*}
		have the same distribution. Applying Corollary \ref{cor:summax_DKW} with~$k=2/\eta$ to each summand in the upper bound of~\eqref{eq:fork}, after replacing the cdfs~$\check{F}_{i,r}$ by the cdfs~$F_{i,r}$, we get
		\begin{align}\label{eqn:argrefapp3}
			\P(\Omega \setminus \mathcal{G})
			&\leq \del[2]{1+K\sqrt{16\pi/(\eta e)}}^2
			\sum_{r\in\mathcal{R}}e^{-\frac{{u_\eta}^2(r,n)r}{(1+\eta/2)C^2}},
		\end{align}
		which, using that
		\begin{align*}
			u_\eta(r,n)
			&=
			C\sqrt{\frac{(1+\eta/2)(1+\eta/(2+\eta))}{r}[0.5\log(n)+\log(rK)]},
		\end{align*}
		coincides with
		\begin{align*}
			\frac{\del[1]{1+K\sqrt{16\pi/(\eta e)}}^2}{(\sqrt{n}K)^{1+\eta/(2+\eta)}}\sum_{r\in\mathcal{R}}\frac{1}{r^{1+\eta/(2+\eta)}} \leq \frac{\del[1]{1+K\sqrt{16\pi/(\eta e)}}^2}{(\sqrt{n}K)^{1+\eta/(2+\eta)}}\frac{2(1+\eta)}{\eta},
		\end{align*}
		where we used~$\sum_{r=1}^\infty\frac{1}{r^{1+b}}\leq 1+\int_1^\infty \frac{1}{x^{1+b}}dx$ for all~$b>0$.
		
		After replacing the cdfs~$\check{F}_{i,r}$ by the cdfs~$F_{i,r}$ in~\eqref{eq:fork}, we can use a union bound and the DKWM inequality to obtain
		\begin{align*}
			\P(\Omega \setminus \mathcal{G})
			&\leq
			\sum_{r\in\mathcal{R}}\sum_{i\in A_{i^*}^c}\P\del[2]{\enVert[0]{F_{i,r}-F^i}_\infty>\frac{u_\eta(r,n)}{2C}}
			+
			\sum_{r\in\mathcal{R}}\sum_{i\in A_{i^*}}\P\del[2]{\enVert[0]{F_{i,r}-F^i}_\infty>\frac{u_\eta(r,n)}{2C}}\\
			&\leq
			2K\sum_{r\in\mathcal{R}}e^{-\frac{ u_\eta^2(r,n) r}{2C^2}} =
			\frac{2K^{(1-\eta)/2}}{n^{(1+\eta)/4}}\sum_{r\in\mathcal{R}}\frac{1}{r^{(1+\eta)/2}}.
		\end{align*}
		This establishes Part 4.
	\end{proof}
	
	\section{Proof of Theorem \ref{thm:FSAcheck}}\label{sec:elimwincp}
	
	Analogously to the Proof of Theorem~\ref{thm:FSApartition}, we establish the stronger upper bound
	\begin{equation}\label{eqn:UPBD2}
		\sup_{\substack{F^i \in \mathscr{D} \\ i = 1, \hdots, K }}\E \left[ r_n(\check{\pi}, \mathscr{M}_K) \right] \leq \varepsilon(n) + C \times (2 \min(A_{n,K}, B_{n,K}) + C_{n,K}), \text{ for every } n > K,
	\end{equation}
	cf.~Theorem~\ref{thm:FSApartitionstg} for the definition of the quantities appearing on the right-hand side of~\eqref{eqn:UPBD2}. The statement can be shown arguing as in the proof of Theorem~\ref{thm:FSApartitionstg}, but we provide some details for the convenience of the reader: To prove~\eqref{eqn:UPBD2}, fix~$n \in \N$,~$n > K$, a discretization~$\mathscr{M}_K^n \neq \emptyset$ (finite),  a set of elimination rounds~$\mathcal{R}$, and~$\eta>0$. Abbreviate~$\check{\pi}_{n,t} = \check{\pi}_t$ for~$t = 1, \hdots, n+1$. That these functions (and the quantities introduced further below) are Borel measurable can be shown by arguing as in the proof of Theorem~\ref{thm:FSApartition}. To establish the upper bound in~\eqref{eqn:UPBD2}, denote by~$r_C'$ the last round completed, and denote the last elimination round completed by~$\bar{r}':=\max\cbr[0]{r\in\mathcal{R}:r\leq r_C'}$. We have~$\bar{r}'\leq \min(r_C',\lfloor n/2\rfloor)$, because~$r\leq \lfloor n/2\rfloor$ for all~$r\in\mathcal{R}$. Note that~$r_C'$ and~$\bar{r}'$ are random variables (we suppress the dependence of~$r_C'$ and~$\bar{r}'$ on~$\omega\in \Omega$, for~$(\Omega, \mathcal{A}, \mathbb{P})$ the underlying probability space), and note that~$\mathcal{I}_{\bar{r}'}=\mathcal{I}_{r_C'}$, and~$\mathscr{M}_{K,\bar{r}'}^{n}=\mathscr{M}_{K, r_C'}^{n}$, because no elimination takes place after round~$\bar{r}'$. Define the event where not all~$\delta\in\argmax_{\delta \in \mathscr{M}_K^n} \mathsf{T}(\langle \delta, \mathbf{F} \rangle)$ have been eliminated after~$\bar{r}'$ rounds
	\begin{align*}
		\mathcal{G}' := \{\omega \in \Omega: \mathscr{M}_{K, \bar{r}'}^n \cap \argmax_{\delta \in \mathscr{M}_K^n} \mathsf{T}(\langle \delta, \mathbf{F} \rangle)  \neq \emptyset \}.
	\end{align*} 
	
	Consider~$\omega \in \mathcal{G}'$ in this paragraph: if~$|\mathcal{I}_{\bar{r}'}|=1$, then the regret is at most~$\varepsilon(n)$. Assume that~$\omega$ is such that~$|\mathcal{I}_{\bar{r}'}|=|\mathcal{I}_{r_C'}|\geq 2$. Then,~$\omega \in \mathcal{G}$ implies
	\begin{align*}
		r_n(\check{\pi}, \mathscr{M}_K)
		=
		\max_{\delta\in\mathscr{M}_K}\mathsf{T}(\langle\delta,\mathbf{F}\rangle)-\mathsf{T}(\langle\check{\pi}_{n+1},\mathbf{F}\rangle)
		\leq
		\max_{\delta\in\mathscr{M}_{K,\bar{r}'}^{n}}\mathsf{T}(\langle\delta,\mathbf{F}\rangle)-\mathsf{T}(\langle\check{\pi}_{n+1},\mathbf{F}\rangle)+\eps(n),
	\end{align*}
	which (arguing as in the proof of Theorem~\ref{thm:FSApartitionstg} starting with Equation~\eqref{eqn:argrefapp1}) can be shown to be bounded from above by 
	\begin{equation*}
		2 C\max_{i\in \mathcal{I}_{\bar{r}'}}\enVert[1]{\hat{F}_{i,n,n}-F^i}_\infty+ \varepsilon(n).
	\end{equation*}
	Using that~$r_n(\check{\pi},\mathscr{M}_K)\leq C$ for~$\omega \in \Omega \setminus \mathcal{G}'$ and bounding the maximum in the previous display by arguing as in the proof of Theorem~\ref{thm:FSApartitionstg} (the argument given around Equation~\eqref{eqn:argrefapp2}) shows that for every~$\omega \in \Omega$ it holds that
	\begin{equation*}
		r_n(\check{\pi}, \mathscr{M}_K) \leq 2 C \max_{i=1}^K\sbr[2]{\sum_{s=\lfloor n/K\rfloor}^{R}\enVert[1]{\hat{F}_{i,n,n}-F^i}_\infty \mathds{1}{\cbr[1]{S_i(n)=s}}} + \varepsilon(n) + C \mathds{1}\{\Omega \setminus \mathcal{G}'\};
	\end{equation*}
	recall that~$\hat{F}_{i,n,n}$ depends on the policy used, which we do not show in our notation. The result follows from Parts~3 and~4 of Lemma \ref{lem:OptionalSkipping2} given below, for which we define the sequence of random variables~$f'_t$ via:
	\begin{align*}
		f'_t :=\begin{cases}
			\check{\pi}_{n,t}(Z_{t-1}) & \text{for } 1\leq t \leq n\\
			(t~\mathrm{mod}~K)+1 & \text{ for } t > n.
		\end{cases}
	\end{align*} 
	For every~$i\in\mathcal{I}$ and~$t\in\N$ denote the random variable~$\tau'_{i,t}$ by 
	\begin{align*}
		\tau'_{i,t}:=\min\cbr[2]{r\in\N: \sum_{j=1}^r\mathds{1}\cbr[0]{f'_j=i}=t}.
	\end{align*}
	\begin{lemma}\label{lem:OptionalSkipping2}
		It holds that 
		\begin{enumerate}
			\item The sequences of random variables~$\del[1]{Y_{i,\tau'_{i,t}}}_{t\in\N}$ and~$\del[1]{Y_{i,t}}_{t\in\N}$ have the same distribution for every~$i\in\mathcal{I}$.
			\item The sequences of random variables~$\del[1]{Y_{1,\tau'_{1,t}}}_{t\in\N},\hdots,\del[1]{Y_{K,\tau'_{K,t}}}_{t\in\N}$ are independent. 
			\item It holds that
			\begin{equation*}
				\E \max_{i=1}^K\sbr[2]{\sum_{s=\lfloor n/K\rfloor}^{R}\enVert[1]{\hat{F}_{i,n,n}-F^i}_\infty \mathds{1}{\cbr[1]{S_i(n)=s}}} \leq A_{n,K} \wedge B_{n,K}.
			\end{equation*}
			\item~$1-\P(\mathcal{G}')$ is bounded from above by
			\begin{equation*}
				\frac{\del[1]{1+K\sqrt{16\pi/(\eta e)}}^2}{(\sqrt{n}K)^{1+\eta/(2+\eta)}}\sum_{r\in\mathcal{R}}\frac{1}{r^{1+\eta/(2+\eta)}} \leq \frac{\del[1]{1+K\sqrt{16\pi/(\eta e)}}^2}{(\sqrt{n}K)^{1+\eta/(2+\eta)}}\frac{2(1+\eta)}{\eta},
			\end{equation*}
			and also by
			\begin{equation*}
				\frac{2K^{(1-\eta)/2}}{n^{(1+\eta)/4}}\sum_{r\in\mathcal{R}}\frac{1}{r^{(1+\eta)/2}}.
			\end{equation*}
		\end{enumerate}
	\end{lemma}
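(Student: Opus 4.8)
The plan is to mirror the proof of Lemma~\ref{lem:OptionalSkipping} for the first three parts, since these depend only on the round-based assignment skeleton of Policy~\ref{pol:FSE_seq}, which is structurally identical to that of Policy~\ref{pol:FSE_partition}, and not on the particular elimination rule. First, Parts~1 and~2 follow from an optional-sampling argument (Theorem~1 in \cite{belisle2008independence}): the random times $\tau'_{i,t}$ are stopping times with respect to the natural filtration generated by the i.i.d.\ sequence $(Y_t)_{t\in\N}$, because $f'_t=\check{\pi}_{n,t}(Z_{t-1})$ is a deterministic, measurable function of the past (recall that $\check{\pi}$ uses no external randomization), so the hypotheses of that theorem hold exactly as before. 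For Part~3, I would note that on $\{S_i(n)=s\}$ one has $\hat{F}_{i,n,n}=\check{F}_{i,s}$ with $\check{F}_{i,s}(\cdot):=s^{-1}\sum_{r=1}^s\mathds{1}\{Y_{i,\tau'_{i,r}}\leq\cdot\}$ by construction of $f'_t$; then, using $\sum_{s}\mathds{1}\{S_i(n)=s\}\leq 1$ and the range bound $\lfloor n/K\rfloor\leq S_i(n)\leq R$ valid on $\{|\mathcal{I}_{\bar r'}|\geq2\}$ (the argument around \eqref{eqn:argrefapp2} applies verbatim), the expectation in Part~3 is at most $\E\max_{i}\max_{s\in\{\lfloor n/K\rfloor,\dots,R\}}\|\check{F}_{i,s}-F^i\|_\infty$, and Parts~1--2 identify the law of this quantity with that of the corresponding object built from the non-reshuffled, independent cdfs $F_{i,s}$. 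The bounds $A_{n,K}\wedge B_{n,K}$ then drop out of Lemma~\ref{lem:maxDKW}, Lemma~\ref{lem:mgf} and Jensen's inequality exactly as in the proof of Lemma~\ref{lem:OptionalSkipping}(3).

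The real content is Part~4, which must be re-done for the more conservative threshold $2u_\eta(r,n)$ and without any partition structure. Fix $\delta^*\in\argmax_{\delta\in\mathscr{M}_K^n}\mathsf{T}(\langle\delta,\mathbf{F}\rangle)$. On $\Omega\setminus\mathcal{G}'$ the point $\delta^*$ is removed from the active discretization in some elimination round $r_*=r_*(\omega)\in\mathcal{R}$, witnessed by the maximizer $\gamma_*=\gamma_*(\omega)$ of $\delta\mapsto\mathsf{T}(\langle\delta,\hat{\mathbf{F}}_{t(\omega),n}\rangle)$ over $\mathscr{M}_{K,r_*-1}^n$, i.e.\ $\mathsf{T}(\langle\gamma_*,\hat{\mathbf{F}}_{t(\omega),n}\rangle)-\mathsf{T}(\langle\delta^*,\hat{\mathbf{F}}_{t(\omega),n}\rangle)>2u_\eta(r_*,n)$, where $t(\omega)$ is the time at which the round-$r_*$ elimination is evaluated. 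The key observation is that both $\gamma_*$ and $\delta^*$ lie in $\mathscr{M}_{K,r_*-1}^n$, so by the elimination rule of Policy~\ref{pol:FSE_seq} every treatment in $\supp(\gamma_*)\cup\supp(\delta^*)$ belongs to $\mathcal{I}_{r_*-1}$ and has therefore been assigned exactly $r_*$ times by time $t(\omega)$; hence on those coordinates $\hat{\mathbf{F}}_{t(\omega),n}$ coincides with $\check{\mathbf{F}}_{r_*}=(\check{F}_{1,r_*},\dots,\check{F}_{K,r_*})$. Since $\gamma_*\in\mathscr{M}_K^n$ we have $\mathsf{T}(\langle\gamma_*,\mathbf{F}\rangle)\leq\mathsf{T}(\langle\delta^*,\mathbf{F}\rangle)$, so adding and subtracting the population values and invoking Assumption~\ref{as:MAIN} together with convexity of $\mathscr{D}$ gives $2C\max_{i=1}^K\|\check{F}_{i,r_*}-F^i\|_\infty>2u_\eta(r_*,n)$, whence
\[
\Omega\setminus\mathcal{G}'\ \subseteq\ \bigcup_{r\in\mathcal{R}}\Big\{\ \max_{i=1}^K\|\check{F}_{i,r}-F^i\|_\infty\ >\ u_\eta(r,n)/C\ \Big\}.
\]

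To finish Part~4, observe that the factor $2$ in $2u_\eta$ has absorbed precisely the ``split into two groups'' step used in the proof of Lemma~\ref{lem:OptionalSkipping}(4), so no partition is needed: after replacing $\check{F}_{i,r}$ by $F_{i,r}$ (legitimate by Parts~1--2, which give the equality of the joint laws), apply Corollary~\ref{cor:summax_DKW} with $A=\{1,\dots,K\}$ — so the second maximum is $0$ by the convention preceding that corollary — and $k=2/\eta$. Substituting $u_\eta(r,n)$ from \eqref{eqn:ueta} and summing $\sum_{r\geq1}r^{-(1+\eta/(2+\eta))}\leq 1+\tfrac{2+\eta}{\eta}=\tfrac{2(1+\eta)}{\eta}$ yields the first bound on $1-\P(\mathcal{G}')$; for the second bound, a plain union bound plus the DKWM inequality (again after the $\check{F}\to F$ replacement), combined with the trivial estimate $\exp(-2r(u_\eta(r,n)/C)^2)\leq\exp(-r\,u_\eta^2(r,n)/(2C^2))$, gives $2K\,n^{-(1+\eta)/4}(rK)^{-(1+\eta)/2}$ per round and hence the claimed sum. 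The Borel measurability of $\check{\pi}_{n,t}$, of $\tau'_{i,t}$, $r_C'$, $\bar r'$ and of $\mathcal{G}'$ is checked exactly as in the proof of Theorem~\ref{thm:FSApartition}.

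I expect the main obstacle to be the bookkeeping in Part~4 that certifies all empirical cdfs entering the witnessing inequality are based on the \emph{same} number $r_*$ of observations; this is what makes the reduction to $\check{\mathbf{F}}_{r_*}$ — and thus the application of Corollary~\ref{cor:summax_DKW} — legitimate, and it rests on the coupling between ``$\gamma$ survives in the active discretization'' and ``the treatments in $\supp(\gamma)$ are not yet eliminated'' that is hard-wired into Policy~\ref{pol:FSE_seq}. Everything else is a transcription of the corresponding steps for $\tilde\pi$ in Theorem~\ref{thm:FSApartitionstg}, with the role of $u_\eta$ played by $2u_\eta$.
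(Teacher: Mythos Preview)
Your proposal is correct and follows essentially the same route as the paper's own proof: Parts~1--3 are reduced to Lemma~\ref{lem:OptionalSkipping} via the identical round-based skeleton, and Part~4 proceeds by picking~$\delta^*$, identifying the witnessing inequality at the removal round, replacing~$\hat{\mathbf{F}}_{t(\omega),n}$ by~$\check{\mathbf{F}}_{r_*}$ (using that supports of surviving discretization points lie in~$\mathcal{I}_{r_*-1}$), and then applying Corollary~\ref{cor:summax_DKW} with~$A=\{1,\dots,K\}$ and~$k=2/\eta$ for the first bound and a union bound plus DKWM for the second. The only cosmetic difference is that you make the ``throw-away'' inequality $e^{-2r(u_\eta/C)^2}\leq e^{-r u_\eta^2/(2C^2)}$ explicit for the second bound, whereas the paper simply refers back to the corresponding display in the proof of Lemma~\ref{lem:OptionalSkipping}.
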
 
	\begin{proof}
		Parts~1 and~2 are established as in the proof of Lemma~\ref{lem:OptionalSkipping} using results in \cite{belisle2008independence}. Part~3 is established analogously to the proof of Part~3 of Lemma~\ref{lem:OptionalSkipping}, replacing all occurrences of~$f_t$ and~$\tau_{i,t}$ there by~$f'_t$ and by~$\tau_{i,t}'$, respectively. For Part~4, pick a~$\delta^* \in \argmax_{\delta \in \mathscr{M}_K^n} \mathsf{T}(\langle\delta,\mathbf{F}\rangle)$. For~$\omega \in \Omega \setminus \mathcal{G}'$ we certainly have~$\mathscr{M}_{K, \overline{r}'}^n \not \ni \delta^*$. Hence, for such~$\omega$, there exists an elimination round~$r_*(\omega) \in \mathcal{R}$, say, where~$\delta^*$ is~\emph{removed} from~$\mathscr{M}^n_{K, r_*(\omega)}$. That is, there exists a~$\delta_*(\omega) \in \mathscr{M}^n_{K, r_*(\omega)}$ and an index~$t(\omega)$ such that~$$\mathsf{T}(\langle \delta_*(\omega), \hat{\mathbf{F}}_{t(\omega),n} \rangle) - \mathsf{T}(\langle \delta^*, \hat{\mathbf{F}}_{t(\omega),n} \rangle) > 2u_{\eta}(r_*(\omega), n).$$ By definition of the policy~$\check{\pi}$, as long as a treatment has not been \emph{eliminated}, each treatment is assigned once per round. The treatments corresponding to the non-zero coordinates of~$\delta_*(\omega)$ or~$\delta^*$, both vectors being subject to comparison in round~$r_*(\omega)$, have not been eliminated in previous rounds (due to the definition of the policy), and are thus all based on~$r_{*}(\omega)$ observations. Thus, we may equivalently replace~$\hat{\mathbf{F}}_{t(\omega),n}$ by~$\check{\mathbf{F}}_{r_*(\omega)} = (\check{F}_{1,r_*(\omega)}, \hdots, \check{F}_{K,r_*(\omega)})$ in the previous display, where~$\check{F}_{i,s}(\cdot):=\frac{1}{s}\sum_{r=1}^s\mathds{1}\cbr[0]{Y_{i,\tau'_{i,r}}\leq \cdot}$. Hence, we have shown that
		\begin{align*}
			\Omega \setminus \mathcal{G}' \subseteq &\bigcup_{r\in\mathcal{R}} \bigcup_{\delta_*\in \mathscr{M}_{  K}^{n}}\cbr[1]{\mathsf{T}(\langle \delta_*, \check{\mathbf{F}}_{r} \rangle)-\mathsf{T}(\langle \delta^*, \check{\mathbf{F}}_{r}\rangle)>2u_\eta(r,n)}\\
			\subseteq &
			\bigcup_{r\in\mathcal{R}} \bigcup_{\delta_*\in \mathscr{M}_{K}^{n}}\cbr[1]{\mathsf{T}(\langle \delta_*, \check{\mathbf{F}}_{r}\rangle)-\mathsf{T}(\langle \delta_*, \mathbf{F}\rangle)+\mathsf{T}(\langle \delta^*, \mathbf{F}\rangle)-\mathsf{T}(\langle \delta^*, \check{\mathbf{F}}_{r}\rangle)>2u_\eta(r,n)}\\
			\subseteq &
			\bigcup_{r\in\mathcal{R}} \cbr[2]{\max_{i\in \{1, \hdots,  K\}}\enVert[0]{\check{F}_{i,r}-F^i}_\infty >\frac{u_\eta(r,n)}{C}}.
		\end{align*}
		We thus have
		\begin{align}\label{eq:fork2}
			\P(\Omega \setminus \mathcal{G}' )
			\leq
			\sum_{r\in\mathcal{R}}\P\del[2]{\max_{i\in \{1, \hdots,  K\}}\enVert[0]{\check{F}_{i,r}-F^i}_\infty >\frac{u_\eta(r,n)}{C}}.
		\end{align}
		From Part~2 of this lemma we already know that, for every~$r\in\N$, the random variables~$\enVert[0]{\check{F}_{i,r}-F^i}_\infty$ for~$i\in\mathcal{I}$ are independent. Part 1 furthermore shows that the same statement holds if~$\check{F}_{i,r}$ is replaced by~$F_{i,r}(\cdot):=\frac{1}{r}\sum_{s=1}^r\mathds{1}\cbr[0]{Y_{i,i+(s-1)K}\leq \cdot}$, and that (for every~$r \in \N$ and every~$i = 1, \hdots, K$) the distributions of ~$\enVert[0]{\check{F}_{i,r}-F^i}_{\infty}$ and~$\enVert[0]{F_{i,r}-F^i}_{\infty}$ coincide. Consequently, for every~$r \in \N$ the random variables
		\begin{align*}
			\max_{i\in \{1, \hdots,  K\}}\enVert[0]{\check{F}_{i,r}-F^i}_\infty \quad \text{and} \quad \max_{i\in \{1, \hdots,  K\}}\enVert[0]{F_{i,r}-F^i}_\infty,
		\end{align*}
		have the same distribution. Applying Corollary \ref{cor:summax_DKW} with~$k=2/\eta$ and ``$A = \{1, \hdots, K\}$'' to each summand in the upper bound of~\eqref{eq:fork2}, after replacing the cdfs~$\check{F}_{i,r}$ by the cdfs~$F_{i,r}$, we get
		\begin{align*}
			\P(\Omega \setminus \mathcal{G}')
			&\leq \del[2]{1+K\sqrt{16\pi/(\eta e)}}^2
			\sum_{r\in\mathcal{R}}e^{-\frac{{u_\eta}^2(r,n)r}{(1+\eta/2)C^2}},
		\end{align*}
		and can finish the proof of Part 4 as in the proof of Lemma~\ref{lem:OptionalSkipping}, because this upper bound coincides with the one given in Equation~\eqref{eqn:argrefapp3}. 
	\end{proof}

\end{appendix}

\singlespacing
\bibliographystyle{chicagoa}
\bibliography{References}  

\begin{thebibliography}{}

\bibitem[\protect\citeauthoryear{Agrawal}{Agrawal}{1995}]{agrawal1995continuum}
Agrawal, R. (1995).
\newblock The continuum-armed bandit problem.
\newblock {\em SIAM journal on control and optimization\/}~{\em 33\/}(6),
  1926--1951.


\bibitem[\protect\citeauthoryear{Athey and Wager}{Athey and
  Wager}{2021}]{athey2017efficient}
Athey, S. and S.~Wager (2021).
\newblock Policy learning with observational data.
\newblock {\em Econometrica\/}~{\em 89\/}(1), 133--161.


\bibitem[\protect\citeauthoryear{Audibert, Bubeck, and Munos}{Audibert
  et~al.}{2010}]{audibert2010best}
Audibert, J.-Y., S.~Bubeck, and R.~Munos (2010).
\newblock Best arm identification in multi-armed bandits.
\newblock In {\em COLT-23th Conference on Learning Theory-2010}, pp.\  13--p.

\bibitem[\protect\citeauthoryear{Barrett and Donald}{Barrett and
  Donald}{2009}]{BarrettDonald2009}
Barrett, G.~F. and S.~G. Donald (2009).
\newblock Statistical inference with generalized gini indices of inequality,
  poverty, and welfare.
\newblock {\em Journal of Business \& Economic Statistics\/}~{\em 27\/}(1),
  1--17.


\bibitem[\protect\citeauthoryear{B{\'e}lisle and Melfi}{B{\'e}lisle and
  Melfi}{2008}]{belisle2008independence}
B{\'e}lisle, C. and V.~Melfi (2008).
\newblock Independence after adaptive allocation.
\newblock {\em Statistics \& Probability Letters\/}~{\em 78\/}(3), 214--224.


\bibitem[\protect\citeauthoryear{Bhattacharya and Dupas}{Bhattacharya and
  Dupas}{2012}]{bhattacharya2012inferring}
Bhattacharya, D. and P.~Dupas (2012).
\newblock Inferring welfare maximizing treatment assignment under budget
  constraints.
\newblock {\em Journal of Econometrics\/}~{\em 167\/}(1), 168--196.


\bibitem[\protect\citeauthoryear{Biewen}{Biewen}{2002}]{biewen2002bootstrap}
Biewen, M. (2002).
\newblock Bootstrap inference for inequality, mobility and poverty measurement.
\newblock {\em Journal of Econometrics\/}~{\em 108\/}(2), 317--342.


\bibitem[\protect\citeauthoryear{Bitler, Gelbach, and Hoynes}{Bitler
  et~al.}{2006}]{bitler2006mean}
Bitler, M.~P., J.~B. Gelbach, and H.~W. Hoynes (2006).
\newblock What mean impacts miss: Distributional effects of welfare reform
  experiments.
\newblock {\em American Economic Review\/}~{\em 96\/}(4), 988--1012.


\bibitem[\protect\citeauthoryear{Bomze, Gollowitzer, and Yildirim}{Bomze
  et~al.}{2014}]{bomze2014rounding}
Bomze, I.~M., S.~Gollowitzer, and E.~A. Yildirim (2014).
\newblock Rounding on the standard simplex: Regular grids for global
  optimization.
\newblock {\em Journal of Global Optimization\/}~{\em 59\/}(2-3), 243--258.


\bibitem[\protect\citeauthoryear{Bubeck and Cesa-Bianchi}{Bubeck and
  Cesa-Bianchi}{2012}]{bubeck2012regret}
Bubeck, S. and N.~Cesa-Bianchi (2012).
\newblock Regret analysis of stochastic and nonstochastic multi-armed bandit
  problems.
\newblock {\em Foundations and Trends{\textregistered} in Machine
  Learning\/}~{\em 5\/}(1), 1--122.


\bibitem[\protect\citeauthoryear{Bubeck, Munos, and Stoltz}{Bubeck
  et~al.}{2009}]{bubeck2009pure}
Bubeck, S., R.~Munos, and G.~Stoltz (2009).
\newblock Pure exploration in multi-armed bandits problems.
\newblock In {\em International Conference on Algorithmic Learning Theory},
  pp.\  23--37. Springer.

\bibitem[\protect\citeauthoryear{Bubeck, Munos, Stoltz, and
  Szepesv{\'a}ri}{Bubeck et~al.}{2011}]{bubeck2011x}
Bubeck, S., R.~Munos, G.~Stoltz, and C.~Szepesv{\'a}ri (2011).
\newblock X-armed bandits.
\newblock {\em Journal of Machine Learning Research\/}~{\em 12\/}(5).


\bibitem[\protect\citeauthoryear{Cambini and Martein}{Cambini and
  Martein}{2009}]{MR2459665}
Cambini, A. and L.~Martein (2009).
\newblock {\em Generalized convexity and optimization}, Volume 616 of {\em
  Lecture Notes in Economics and Mathematical Systems}.
\newblock Springer: Berlin.


\bibitem[\protect\citeauthoryear{Carpentier and Locatelli}{Carpentier and
  Locatelli}{2016}]{carpentier2016tight}
Carpentier, A. and A.~Locatelli (2016).
\newblock Tight (lower) bounds for the fixed budget best arm identification
  bandit problem.
\newblock In {\em Conference on Learning Theory}, pp.\  590--604.

\bibitem[\protect\citeauthoryear{Cassel, Mannor, and Zeevi}{Cassel
  et~al.}{2018}]{cassel2018general}
Cassel, A., S.~Mannor, and A.~Zeevi (2018).
\newblock A general approach to multi-armed bandits under risk criteria.
\newblock {\em arXiv preprint arXiv:1806.01380\/}.


\bibitem[\protect\citeauthoryear{Cesa-Bianchi and Lugosi}{Cesa-Bianchi and
  Lugosi}{2006}]{cesa2006prediction}
Cesa-Bianchi, N. and G.~Lugosi (2006).
\newblock {\em Prediction, learning, and games}.
\newblock Cambridge University Press.


\bibitem[\protect\citeauthoryear{Davidson and Flachaire}{Davidson and
  Flachaire}{2007}]{davidson2007asymptotic}
Davidson, R. and E.~Flachaire (2007).
\newblock Asymptotic and bootstrap inference for inequality and poverty
  measures.
\newblock {\em Journal of Econometrics\/}~{\em 141\/}(1), 141--166.


\bibitem[\protect\citeauthoryear{Dehejia}{Dehejia}{2005}]{dehejia2005program}
Dehejia, R.~H. (2005).
\newblock Program evaluation as a decision problem.
\newblock {\em Journal of Econometrics\/}~{\em 125\/}(1), 141--173.


\bibitem[\protect\citeauthoryear{Dufour, Flachaire, and Khalaf}{Dufour
  et~al.}{2019}]{dufour2017permutation}
Dufour, J.-M., E.~Flachaire, and L.~Khalaf (2019).
\newblock Permutation tests for comparing inequality measures.
\newblock {\em Journal of Business \& Economic Statistics\/}~{\em 37\/}(3),
  457--470.


\bibitem[\protect\citeauthoryear{Foster, Greer, and Thorbecke}{Foster
  et~al.}{1984}]{foster84}
Foster, J., J.~Greer, and E.~Thorbecke (1984).
\newblock A class of decomposable poverty measures.
\newblock {\em Econometrica\/}~{\em 52\/}(3), 761--766.


\bibitem[\protect\citeauthoryear{Gittins}{Gittins}{1979}]{gittins1979bandit}
Gittins, J.~C. (1979).
\newblock Bandit processes and dynamic allocation indices.
\newblock {\em Journal of the Royal Statistical Society: Series B\/}~{\em
  41\/}(2), 148--164.


\bibitem[\protect\citeauthoryear{Hirano and Porter}{Hirano and
  Porter}{2009}]{hirano2009asymptotics}
Hirano, K. and J.~R. Porter (2009).
\newblock Asymptotics for statistical treatment rules.
\newblock {\em Econometrica\/}~{\em 77\/}(5), 1683--1701.


\bibitem[\protect\citeauthoryear{Jamieson, Malloy, Nowak, and Bubeck}{Jamieson
  et~al.}{2014}]{jamieson2014lil}
Jamieson, K., M.~Malloy, R.~Nowak, and S.~Bubeck (2014).
\newblock lil’ucb: An optimal exploration algorithm for multi-armed bandits.
\newblock In {\em Conference on Learning Theory}, pp.\  423--439.

\bibitem[\protect\citeauthoryear{Karnin, Koren, and Somekh}{Karnin
  et~al.}{2013}]{karnin2013almost}
Karnin, Z., T.~Koren, and O.~Somekh (2013).
\newblock Almost optimal exploration in multi-armed bandits.
\newblock In {\em International Conference on Machine Learning}, pp.\
  1238--1246.

\bibitem[\protect\citeauthoryear{Kasy and Sautmann}{Kasy and
  Sautmann}{2021}]{kasy2019adaptive}
Kasy, M. and A.~Sautmann (2021).
\newblock Adaptive treatment assignment in experiments for policy choice.
\newblock {\em Econometrica\/}~{\em 89\/}(1), 113--132.


\bibitem[\protect\citeauthoryear{Kaufmann, Capp{\'e}, and Garivier}{Kaufmann
  et~al.}{2016}]{kaufmann2016complexity}
Kaufmann, E., O.~Capp{\'e}, and A.~Garivier (2016).
\newblock On the complexity of best-arm identification in multi-armed bandit
  models.
\newblock {\em Journal of Machine Learning Research\/}~{\em 17\/}(1), 1--42.


\bibitem[\protect\citeauthoryear{Kitagawa and Tetenov}{Kitagawa and
  Tetenov}{2018}]{kitagawa2018should}
Kitagawa, T. and A.~Tetenov (2018).
\newblock Who should be treated? {E}mpirical welfare maximization methods for
  treatment choice.
\newblock {\em Econometrica\/}~{\em 86\/}(2), 591--616.


\bibitem[\protect\citeauthoryear{Kitagawa and Tetenov}{Kitagawa and
  Tetenov}{2021}]{kitagawa2019equality}
Kitagawa, T. and A.~Tetenov (2021).
\newblock Equality-minded treatment choice.
\newblock {\em Journal of Business \& Economic Statistics\/}~{\em 39},
  561--574.


\bibitem[\protect\citeauthoryear{Kleinberg, Slivkins, and Upfal}{Kleinberg
  et~al.}{2008}]{kleinberg2008multi}
Kleinberg, R., A.~Slivkins, and E.~Upfal (2008).
\newblock Multi-armed bandits in metric spaces.
\newblock In {\em Proceedings of the fortieth annual ACM symposium on Theory of
  computing}, pp.\  681--690.

\bibitem[\protect\citeauthoryear{Kock, Preinerstorfer, and Veliyev}{Kock
  et~al.}{2020a}]{kpv1}
Kock, A.~B., D.~Preinerstorfer, and B.~Veliyev (2020a).
\newblock Functional sequential treatment allocation.
\newblock {\em Journal of the American Statistical Association
  (forthcoming)\/}, 10.1080/01621459.2020.1851236.


\bibitem[\protect\citeauthoryear{Kock, Preinerstorfer, and Veliyev}{Kock
  et~al.}{2020b}]{kpv2}
Kock, A.~B., D.~Preinerstorfer, and B.~Veliyev (2020b).
\newblock Functional sequential treatment allocation with covariates.
\newblock {\em arXiv preprint arXiv:2001.10996\/}.


\bibitem[\protect\citeauthoryear{Kock and Thyrsgaard}{Kock and
  Thyrsgaard}{2017}]{kock2017optimal}
Kock, A.~B. and M.~Thyrsgaard (2017).
\newblock Optimal sequential treatment allocation.
\newblock {\em arXiv preprint arXiv:1705.09952\/}.


\bibitem[\protect\citeauthoryear{Lai and Robbins}{Lai and
  Robbins}{1985}]{lai1985asymptotically}
Lai, T.~L. and H.~Robbins (1985).
\newblock Asymptotically efficient adaptive allocation rules.
\newblock {\em Advances in Applied Mathematics\/}~{\em 6\/}(1), 4--22.


\bibitem[\protect\citeauthoryear{Lattimore and Szepesv{\'a}ri}{Lattimore and
  Szepesv{\'a}ri}{2020}]{lattimore2019bandit}
Lattimore, T. and C.~Szepesv{\'a}ri (2020).
\newblock {\em Bandit Algorithms}.
\newblock Cambridge: Cambridge University Press.


\bibitem[\protect\citeauthoryear{Manski}{Manski}{2004}]{manski2004statistical}
Manski, C.~F. (2004).
\newblock Statistical treatment rules for heterogeneous populations.
\newblock {\em Econometrica\/}~{\em 72\/}(4), 1221--1246.


\bibitem[\protect\citeauthoryear{Manski and Tetenov}{Manski and
  Tetenov}{2016}]{Manski10518}
Manski, C.~F. and A.~Tetenov (2016).
\newblock Sufficient trial size to inform clinical practice.
\newblock {\em Proceedings of the National Academy of Sciences\/}~{\em
  113\/}(38), 10518--10523.


\bibitem[\protect\citeauthoryear{Perchet and Rigollet}{Perchet and
  Rigollet}{2013}]{perchet2013multi}
Perchet, V. and P.~Rigollet (2013).
\newblock The multi-armed bandit problem with covariates.
\newblock {\em Annals of Statistics\/}, 693--721.


\bibitem[\protect\citeauthoryear{Rigollet and Zeevi}{Rigollet and
  Zeevi}{2010}]{rigollet2010nonparametric}
Rigollet, P. and A.~Zeevi (2010).
\newblock Nonparametric bandits with covariates.
\newblock {\em Proceedings of COLT\/}.


\bibitem[\protect\citeauthoryear{Robbins}{Robbins}{1952}]{robbins1952some}
Robbins, H. (1952).
\newblock Some aspects of the sequential design of experiments.
\newblock {\em Bulletin of the Americal Mathematical Society\/}~{\em 58\/}(5),
  527--535.


\bibitem[\protect\citeauthoryear{Rothe}{Rothe}{2010}]{rothe2010nonparametric}
Rothe, C. (2010).
\newblock Nonparametric estimation of distributional policy effects.
\newblock {\em Journal of Econometrics\/}~{\em 155\/}(1), 56--70.


\bibitem[\protect\citeauthoryear{Rothe}{Rothe}{2012}]{rothe2012partial}
Rothe, C. (2012).
\newblock Partial distributional policy effects.
\newblock {\em Econometrica\/}~{\em 80\/}(5), 2269--2301.


\bibitem[\protect\citeauthoryear{Sen}{Sen}{1976}]{sen1976}
Sen, A. (1976).
\newblock Poverty: an ordinal approach to measurement.
\newblock {\em Econometrica\/}, 219--231.


\bibitem[\protect\citeauthoryear{Shorack and Wellner}{Shorack and
  Wellner}{2009}]{shorack2009empirical}
Shorack, G.~R. and J.~A. Wellner (2009).
\newblock {\em Empirical processes with applications to statistics}.
\newblock SIAM.


\bibitem[\protect\citeauthoryear{Stoye}{Stoye}{2009}]{stoye2009minimax}
Stoye, J. (2009).
\newblock Minimax regret treatment choice with finite samples.
\newblock {\em Journal of Econometrics\/}~{\em 151\/}(1), 70--81.


\bibitem[\protect\citeauthoryear{Stoye}{Stoye}{2010}]{stoyspread}
Stoye, J. (2010).
\newblock Partial identification of spread parameters.
\newblock {\em Quantitative Economics\/}~{\em 1\/}(2), 323--357.


\bibitem[\protect\citeauthoryear{Stoye}{Stoye}{2012}]{stoye2012minimax}
Stoye, J. (2012).
\newblock Minimax regret treatment choice with covariates or with limited
  validity of experiments.
\newblock {\em Journal of Econometrics\/}~{\em 166\/}(1), 138--156.


\bibitem[\protect\citeauthoryear{Thompson}{Thompson}{1933}]{thomp}
Thompson, W.~R. (1933).
\newblock On the likelihood that one unknown probability exceeds another in
  view of the evidence of two samples.
\newblock {\em Biometrika\/}~{\em 25\/}(3/4), 285--294.


\bibitem[\protect\citeauthoryear{Tran-Thanh and Yu}{Tran-Thanh and
  Yu}{2014}]{tran2014functional}
Tran-Thanh, L. and J.~Y. Yu (2014).
\newblock Functional bandits.
\newblock {\em arXiv preprint arXiv:1405.2432\/}.


\bibitem[\protect\citeauthoryear{Tsybakov}{Tsybakov}{2009}]{tsybakov2009introduction}
Tsybakov, A.~B. (2009).
\newblock {\em Introduction to Nonparametric Estimation}.
\newblock New York: Springer.


\end{thebibliography}
\end{document}